\renewcommand{\theequation}{\thesection.\arabic{equation}}
\def\one{{\hbox{ 1\kern-.8mm l}}}
\def\zero{{\hbox{ 0\kern-1.5mm 0}}}
\def\mC{ \mathbb{C}}
\def\s{ \sigma} 
\newcommand{\Dim}{ {\rm Dim } } 
\newcommand{\pairs}{ { \rm pairs} } 
\newcommand{\singlets}{ {\rm singlets} } 
\newcommand{\wchi}{ \widehat{\chi}}
  \def\cC{{\cal C}}
 \def\cH{{\cal H}} 
 \def\cK{{\cal K}} \def\cL{{\cal L}}
\def\cM{{\cal M}}  \def\cO{{\cal O}}
 \def\cZ{{\cal Z}}
\def\Sym{ \hbox{Sym} } 
\def\Orb{ {\rm Orb}} 
\def\Aut{ {\rm Aut} } 
\def\Rib{ {\rm Rib}}
\newcommand{\id}{\rm id}
\newtheorem{lemma}{Lemma}
\newtheorem{corollary}{Corollary}
\newtheorem{theorem}{Theorem}
\newtheorem{proposition}{Proposition}
\newcommand{\be}{\begin{equation}}
\newcommand{\ee}{\end{equation}}
\newcommand{\beq}{\begin{equation}}
\newcommand{\eeq}{\end{equation}}
\newcommand{\bea}{\begin{eqnarray}\displaystyle}
\newcommand{\eea}{\end{eqnarray}}
\newcommand{\cred }{\color{red}}
\def\wc{\widehat \chi} 
\newcommand{\PRIMESDIFFS}{ \texttt{PrimesDiffs} }
\newcommand{\bdel}{ {\boldsymbol{\delta}} }
\def\mP{ \mathbb{P}}  
\def\s{ \sigma } 
\def\g { \gamma }
\def\Max{ {\rm{Max}} } 
\def\Min{ {\rm{Min}} }
\newcommand{\mZ}{ \mathbb{Z} } 
\newcommand{\mR}{\mathbb{R}}
\begin{document}

\begin{flushright}
QMUL-PH-20-21
\end{flushright}

\bigskip

\begin{center}

{\Large \bf Quantum mechanics of bipartite ribbon graphs: 

 }
 \medskip 

{ \Large \bf  Integrality, Lattices and Kronecker coefficients. }

\bigskip

Joseph Ben Geloun$^{a,c,*}$
 and Sanjaye Ramgoolam$^{b , d ,\dag}  $

\bigskip

$^a${\em Laboratoire d'Informatique de Paris Nord UMR CNRS 7030} \\
{\em Universit\'e Paris 13, 99, avenue J.-B. Clement,
93430 Villetaneuse, France} \\
\medskip
$^{b}${\em School of Physics and Astronomy} , {\em  Centre for Research in String Theory}\\
{\em Queen Mary University of London, London E1 4NS, United Kingdom }\\
\medskip
$^{c}${\em International Chair in Mathematical Physics
and Applications}\\
{\em ICMPA--UNESCO Chair, 072 B.P. 50  Cotonou, Benin} \\
\medskip
$^{d}${\em  School of Physics and Mandelstam Institute for Theoretical Physics,} \\   
{\em University of Witwatersrand, Wits, 2050, South Africa} \\
\medskip
E-mails:  $^{*}$bengeloun@lipn.univ-paris13.fr,
\quad $^{\dag}$s.ramgoolam@qmul.ac.uk

\begin{abstract}

We define solvable quantum mechanical systems on a Hilbert space spanned by bipartite ribbon graphs with a fixed number of  edges. The Hilbert space is also an associative algebra, where the product is derived from permutation group products. The  existence  and structure of this Hilbert space algebra has a number of consequences.  
The algebra product, which  can be expressed in terms of  integer ribbon graph reconnection coefficients,  is used to define solvable Hamiltonians with eigenvalues expressed in terms of  normalized characters of symmetric group elements and degeneracies given in terms of Kronecker coefficients, which are  tensor product multiplicities of symmetric group representations. The square of the Kronecker coefficient for a triple of Young diagrams is shown to be equal to the dimension of a  sub-lattice in the lattice of ribbon graphs.  This leads to  an answer to the long-standing question of a combinatorial interpretation of  the Kronecker coefficients. 
As avenues for future research, we discuss applications of the ribbon graph quantum mechanics in algorithms for quantum computation. We also describe  a quantum membrane interpretation of these quantum mechanical systems.

\end{abstract}

\end{center}

\noindent  Key words: ribbon graphs, Kronecker coefficients, quantum physics, Belyi maps.

\newpage 

%\tableofcontents

\section{Introduction}

Permutation centralizer algebras (PCAs) \cite{PCAMultMat} have been found as an underlying structure which organizes the $N$-dependences of   multi-matrix correlators in   super-Yang Mills theories with $U(N)$ gauge symmetry  \cite{BBFH,Brauer,BHR0711,BCD0801,BCD0805,BHR0806,EHS,QuivCalc,YusukeTFT1,YusukeTFT2}. These correlators are of  interest in 
generalizing beyond the half-BPS sector the link between BPS correlators and Young diagrams \cite{CJR2001}  in the AdS/CFT correspondence \cite{malda,gkp,witten}.  

Permutation methods and PCAs also played a role in the enumeration of observables and the  
computation of correlators in Gaussian tensor models \cite{Sanjo,PCAKron}, which have been studied in the 
context of applications of tensor models to random geometries and holography \cite{ADJ91, Gurau1102, Rivasseau:2016wvy,  Guraubook, Witten:2016iux} (see reviews in \cite{Delporte:2018iyf,KlebanovTasi}). An important observation  from \cite{Sanjo,PCAKron} is  
that 3-index tensor observables of degree $n$ in a complex tensor model with $U(N)^{ \times 3}$ symmetry can  be counted using 3-tuples of permutations in $ S_n$, subject to an equivalence relation defined by left and right multiplication by permutations in $S_n$.  A gauge-fixed version  of this formulation was described  where we have pairs of permutations, subject to an equivalence relation defined using simultaneous conjugation of the pair by a permutation in $S_n$. These equivalence classes of permutation pairs are known to count bipartite ribbon graphs with $n$ edges (a textbook reference for this subject is \cite{LandoZvonkin}). The permutation equivalence classes form an associative algebra, denoted $\cK ( n )$, with a symmetric non-degenerate bilinear form \cite{PCAKron}. As a semi-simple algebra, according to the Wedderburn-Artin theorem, $\cK(n)$  is  isomorphic to a direct sum of matrix algebras \cite{GoodWall}. The  explicit isomorphism was  constructed using Clebsch-Gordan coefficients of the symmetric group \cite{PCAMultMat,PCAKron}.   The matrix basis for the algebra 
takes the form of $Q^{ R_1 R_2 R_3  }_{ \tau_1 , \tau_2 } $, where $R_1, R_2, R_3 $ are Young diagrams or partition of $n$ and $\tau_i$, $i=1,2,$ range over  Clebsch-Gordan multiplicities, also known as Kronecker coefficients (the explicit formula is given in \cite{PCAMultMat} and developed in detail in \cite{PCAKron}). Further investigations of tensor models from this algebraic perspective are in \cite{deMelloKoch:2017bvv, Avohou:2019qrl,BenGeloun:2020lfe, Diaz:2017kub,Diaz:2018xzt,Diaz:2018zbg, Itoyama:2017wjb, Itoyama:2019oab, Itoyama:2019uqv, Amburg:2019dnj,Robert2nd}. 
A known connection between bipartite ribbon graphs and Belyi maps  \cite{LandoZvonkin,schneps} gives a topological version of gauge-string duality between tensor models and string theory \cite{Sanjo}, generalizing analogous  correspondences between two-dimensional Yang Mills theory and topological string theory \cite{GRTA,CMR,Horava}.

AdS/CFT holography gives a map between half-BPS states in $U(N)$ Yang-Mills theory at large $N$ and the corresponding  space-time geometries \cite{LLM}. The study of the half-BPS sector 
as a toy model for questions in the black hole information loss problem \cite{IILoss}
raised a question on how restricted sets of $U(N)$  Casimirs can distinguish Young diagrams with a fixed number
 $n$  (equal to the energy of the BPS state) of boxes. This question is related, by Schur-Weyl duality,  to properties of the group algebra of $S_n$ and was studied from this perspective in  \cite{KR1911}. A key role in this investigation was played by central elements $T_k$  in the group algebra $ \mC ( S_n)$ associated with permutations having cycle structure consisting of a single cycle of  length $k$ ( for some $2 \le k \le n $ ) and remaining cycles of length $ 1$. 

In addition to these developments from theoretical physics, the investigations in this paper have been guided by the mathematical problem of determining whether there are combinatorial objects which are counted by Kronecker coefficients. While a combinatorial construction of Littlewood-Richarson coefficients, another representation theoretic multiplicity, associated with triples of Young diagrams is well known, it has been a long-standing question whether there  exists a family of combinatorial objects, for each triple of Young diagrams, such that the  combinatorial objects are enumerated by Kronecker coefficients. This problem was posed in \cite{Murnaghan} and placed in  the context of a number of positivity problems in representation theory in  \cite{StanleyKron} and is discussed in recent papers, e.g. \cite{PPV,Manivel} . This mathematical  question which may appear, at least at first sight to many physicists, to be a  somewhat esoteric  question, has inspired substantial recent activity and progress at the intersection of computational complexity theory, quantum information theory and representation theory. We will not attempt to give a summary of this thriving area of research, but will point the reader to some papers which give a flavour of this field  \cite{PPV,Manivel,MulVar,BCI2011,IkMuWa-VanKron,PakPanova}.

 A way to understand the problem is to compare two known computations in representation theory. The computation of characters $ \chi_R ( \sigma ) $ of a permutation $ \sigma \in S_n$ in a representation associated to Young diagram $R$ with $n$ boxes can be done  by using  the Murnaghan-Nakayama rule \cite{MurnaghanOnReps,Nak}. This  can be phrased in terms of the counting of a certain pattern of labellings of the boxes in $R$ by numbers according to a rule determined by the cycle structure of $ \sigma $ (see for example \cite{StanleyEnumComb2}\cite{Wiki-MN}).  In this construction, it is clear why the outcome is an integer - which is a   somewhat special property of symmetric group characters, a property not shared by generic finite groups. The Kronecker coefficient can be computed using the formula 
\bea 
C ( R_1, R_2, R_3  ) = { 1 \over n! } \sum_{ \sigma \in S_n } \chi_{R_1} ( \sigma ) \chi_{R_2} ( \sigma ) \chi_{R_3}( \sigma  ) 
\eea
In this formula, it is not clear why the sum over all the conjugacy classes in $S_n$ for general $n$  ends up giving an outcome which  is a non-negative integer - although from the representation theory definition as the number of invariants in the tensor product of  $ R_1 \otimes R_2 \otimes R_3 $, it is clear why this is the case.  A combinatorial interpretation should give a new way to make it manifest that  $ C (  R_1, R_2, R_3 )$ is a non-negative integer. 

 The following formula which has played a role in counting tensor model invariants shows  that bipartite ribbon graphs (also called ribbon graphs for short in this paper)  hold some promise of progress on this problem. It is known that the total number of bipartite ribbon graphs with $n$ edges is equal to the sum of squares of Kronecker coefficients  \cite{Sanjo,PCAMultMat,Diaz:2017kub,PCAKron} 
\bea 
|\Rib( n )| = \sum_{  R_1, R_2, R_3  \vdash n } C (  R_1, R_2, R_3 )^2  
\eea
This formula shows  that the sum of squares of Kronecker coefficients does have a combinatorial and geometric interpretation. Bipartite ribbon graphs have an elegant group theoretic characterisation in terms of pairs of permutations with an equivalence under simultaneous conjugation. A natural question is : Is it possible to  refine this link to give an interpretation of a fixed $ C ( R_1, R_2, R_3 )^2 $, and a fixed $ C ( R_1 , R_2 , R_3 )$,  in terms of ribbon graphs? We would like an interpretation which makes the non-negative integer property  of 
the Kronecker coefficients manifest.  And are  there  combinatorial algorithms based on this interpretation for computing Kronecker coefficients?

The algebras $ \cK(n)$, and analogous algebras related to Littlewood-Richardson coefficients, have been studied
in the theoretical physics literature primarily as a tool to understand the structure of the space of gauge invariant observables and their correlators  in matrix/tensor models and in AdS/CFT (see \cite{RamgoComb} for a  short review). In this paper, motivated by the mathematical question of a combinatorial interpretation of Kronecker coefficients and the connections of this question to quantum information and complexity theory, we introduce a new physical perspective on these algebras.  We propose  that studying solvable quantum mechanics models on algebras such as $ \cK(n)$, which  are related to interesting combinatorial objects (in this case bipartite ribbon graphs) having elegant descriptions in terms of symmetric groups (in this case permutation pairs subject to an equivalence generated by conjugation with a permutation), can be a fruitful avenue to explore interesting interfaces between physics, mathematics and computational complexity theory.

Section \ref{QMribb} develops the quantum mechanics on $ \cK( n )$. $ \cK(n)$ is a subspace of $ \mC ( S_n ) \otimes \mC ( S_n) $ which is invariant under conjugation by $ \gamma \otimes \gamma$ for $ \gamma \in S_n$. As a vector space, it has two interesting bases. There is a basis $E_r$ of elements labelled by an index $r$ ranging over equivalence classes of pairs $ ( \sigma_1 , \sigma_2 ) \in S_n \times S_n $, with the equivalence relation 
\bea 
( \sigma_1 , \sigma_2 ) \sim ( \gamma \sigma_1 \gamma^{-1} , \gamma \sigma_2 \gamma^{-1} ) \,,
\eea
defined using $ \gamma \in S_n$. We refer to this basis as the geometric ribbon graph basis. There is another basis labelled by triples of Young 
diagrams $ ( R_1 , R_2 , R_3 )$, where each Young diagram has $n$ boxes, such that the Kronecker coefficient $ C ( R_1 , R_2 , R_3 )$ is non-zero. We refer to this as the Fourier basis for $ \cK(n)$. In section \ref{QMribb1}, we  review (from \cite{Sanjo,PCAMultMat,PCAKron}) the formula \eqref{qbasis} for the Fourier basis elements in terms of matrix elements and Clebsch-Gordan coefficients of $ S_n$.  The Fourier basis also makes the Wedderburn-Artin decomposition of $\cK(n)$ into matrix algebras manifest. We define a natural inner product on $ \cK(n)$ inherited from $ \mC ( S_n) \otimes \mC ( S_n)$ and prove that $\cK(n)$ is a Hilbert space (Proposition \ref{PropcKnHilb}).  We prove that the product structure on $ \cK(n)$ in the geometric ribbon graph basis is given by integers (section \ref{integral}). 

The fact that $ \cK(n)$ is a vector space as well as an algebra (i.e. vector space equipped with an associative product) with a known Wedderburn-Artin decomposition can be exploited to write down interesting solvable Hamiltonians for quantum mechanical systems having $ \cK(n)$ as a Hilbert space. We introduce a set of Hermitian operators $T_k^{(i)}$ on $ \cK(n)$  which are central elements of $ \cK(n)$ and act on $ \cK(n)$ using the product operation in the algebra. The indices take values  $ i \in \{ 1 , 2, 3\} $ and $ k \in \{ 2 , 3, \cdots  \widetilde k_*  \}$. The number $ \widetilde k $ is chosen to obey $ \widetilde k_* \ge k_* ( n )$, where $k_*(n)$ is an integer between $2$ to $n$. 
$k_*(n)$  is defined \cite{KR1911}   as the minimum integer such that the central elements $T_k$ in $ \mC ( S_n)$ with $k$ ranging in $ \{ 2 , 3 , \cdots , k_* (n) \}$ generate the centre.  
The precise definition of the operators $T_k^{(i)}$, which we call reconnection operators, is given in section \ref{sec:CentreKnReconn}. It is shown (Proposition \ref{PropIntTk}) that the matrix elements of these operators in the geometric ribbon graph basis are non-negative integers. 

In section \ref{sec:IntegerMatricesKron}, we introduce  the notion of the Fourier subspace of $ \cK(n)$ associated with a triple of Young diagrams $ ( R_1 , R_2  , R_3 )$. This subspace has dimension $ C ( R_1 , R_2 , R_3 )^2$. 
Proposition \ref{propTaQo} shows that the Fourier basis elements are eigenvectors of the reconnection operators, with eigenvalues given by normalized characters of symmetric groups. Proposition \ref{PropTkFS} shows that the eigenvalue sets of reconnection operators chosen with $ k \in \{ 2  , \cdots , \widetilde k_*  \}$ can be used to distinguish Fourier subspaces associated with distinct triples of Young diagrams.  These results are used  (section \ref{sec:reconn}) 
to construct for each $n$, and each triple $ ( R_1 , R_2 , R_3)$, a rectangular matrix of   integers having a null space which spans the Fourier subspace of the specified triple. Section \ref{sec:SquareIntMat}
constructs Hamiltonians as linear combinations of the reconnection matrices,  which are square (non-negative) integer matrices in the geometric basis and distinguish Fourier subspaces with distinct Young diagram triples. Using   Proposition \ref{propTaQo}, the eigenvalues of these Hamiltonians are expressed as linear combinations of normalized symmetric group characters. The eigenspaces  for distinct eigenvalues are the Fourier subspaces for distinct Young diagram triples.

The realisation of Fourier subspaces in $ \cK(n )$  labelled by Young diagram triples $ (  R_1 , R_2, R_3 )  $ as eigenspaces of integer reconnection matrices is thus one of two important inputs in our discussion. It means that while the formula \eqref{qbasis} for Fourier basis elements uses 
detailed representation theoretic data such as matrix elements of permutations in some chosen basis for symmetric group representations along with Clebsch-Gordan coefficients, there is a new approach to the Fourier subspace of a triple of Young diagrams based on integer reconnection matrices.  Now generic integer matrices do not necessarily have integer or rational eigenvalues (see for example \cite{Estes}). For the reconnection matrices at hand however we know, using symmetric group representation theory (Proposition \ref{propTaQo} along with Lemma \ref{IntegralityOfNormChar}),  that the eigenvalues are integers. These eigenvalues are known to be calculable using combinatorial algorithms, notably the Murnaghan-Nakayama rule.
Thus, we are able to replace the more obvious (but computationally expensive)  computation of the Fourier subspace using direct implementation of the formula \eqref{qbasis} with the calculation of null spaces of integer matrices which  takes two combinatorial inputs: the combinatorics of reconnection matrices and the Murnaghan-Nakayama algorithm. 
 This allows us to express the problem of finding the Fourier subspaces of Young diagram triples as a question about null spaces of  integer matrices. This in turn  allows us to access results from the subject of integer matrices and lattice  algorithms. 

Section \ref{sec:YDHNF} recalls a key result from the integer matrices and lattice algorithms. Any integer matrix, square or rectangular, has a unique Hermite normal form (HNF). There are standard algorithms in computational number theory for finding the HNF (see e.g. \cite{Cohen,Schrijver,MicciancioBA}) and such algorithms are  also accessible in group theoretic software such as SAGE or GAP \cite{GAP}.  A consequence is that, for the Fourier subspaces  associated with Young diagram triples 
defined in section  \ref{sec:IntegerMatricesKron}, there are bases which are integer linear combinations of the geometric ribbon graph basis vectors. For each triple $ (R_1 , R_2, R_3 )$, given a choice of the rectangular matrix (which can be specified using a choice of $ \widetilde k_* $ as in section \ref{sec:reconn})  or square matrix (specified using a Hamiltonian as in  section \ref{sec:SquareIntMat}), any HNF algorithm leads to 
 a  list of linearly independent integer null vectors, which are $ C ( R_1 , R_2 , R_3 )^2 $ in number.  This list of integer null vectors specifies a sub-lattice in the lattice 
 $ \mZ^{ |\Rib(n)|}$  in $ \cK ( n )$ generated by all integer linear combinations of the geometric ribbon graph vectors. This provides (Theorem \ref{theo:C2} and Corollary \ref{CorrConst})  a positive  answer to the questions of a combinatorial interpretation and  construction for the square of the Kronecker coefficient.  

It is natural to ask if a construction of $C ( R_1 , R_2 , R_3 )$ rather than its square can be given along these lines. To this end, we consider an operation on bipartite ribbon graphs, which has previously been studied in the context of Belyi maps \cite{Jones}. In the permutation pair description of ribbon graphs, this  operation amounts to inverting both permutations. In section \ref{app:conjugate}  we study a linear involution $S$ (also called conjugation)  on $ \cK(n)$ defined using this inversion. Comparing the action of the involution on the ribbon graph basis with its action on the Fourier basis elements \eqref{qbasis} leads to the result that the sum of Kronecker coefficients is equal to the number of self-conjugate ribbon graphs. Considering linear operators acting on $ \cK(n)$ constructed from 
the reconnection operators $ T_k^{(i)} $ as well as the conjugation operator $S$ leads to sub-lattices of dimension $ C ( R_1 , R_2 , R_3 ) ( C ( R_1 , R_2 , R_3 ) +1 ) /2 $ ,  $ C ( R_1 , R_2 , R_3 ) ( C ( R_1 , R_2 , R_3 ) - 1 ) /2 $, both of which come equipped with a list of linearly independent integer basis vectors from an HNF construction.
  Choosing an injection from the set of  basis vectors of  the smaller sub-lattice  into the set of basis vectors of the larger sub-lattice   yields a subset of basis vectors of the larger sub-lattice, which equal $ C ( R_1 , R_2 , R_3)$ in number.    This realises $ C( R_1 , R_2 , R_3 )$ as the dimension of a sub-lattice in $\mZ^{ |\Rib(n)|}  $.

In the concluding section we give a summary of our results. While the content of this paper is primarily mathematical, its motivations come from the physics of strings and tensor models. 
The concluding section thus includes a description of future research directions based on the links to physics. There is a more extended discussion setting up the first steps for these future directions in the  arxiv version V2 of this paper \cite{BenGeloun:2020yau}. The appendices give some detailed steps in the proofs and examples of results from the computation of Fourier basis vectors using reconnection operators. The last appendix gives key parts  of the GAP code used.

\section{Quantum Mechanics of ribbon graphs: commuting Hamiltonians from centres of algebras $\cK(n)$ } 
\label{QMribb}

In this section we set up the quantum mechanics of bipartite ribbon graphs using their description in terms of  permutation groups. We introduce  the space of states, two bases for the space (a geometric basis and a Fourier basis),  an inner product and Hermitian operators  on the state space which have eigenvalues expressible in terms of normalized symmetric group characters.

\subsection{Review of previous results on the algebra $\cK(n)$ of bipartite ribbon graphs} 
\label{QMribb1}

We give an overview of the description of  bipartite ribbon graphs in terms
of symmetric groups. A useful textbook reference is \cite{LandoZvonkin} which gives references to 
the original mathematical literature. We will also be making extensive use of  formulae from 
the representation theory of symmetric groups. A mathematical physics reference is \cite{Hamermesh}. 
The key formulae are summarised  the appendices of \cite{PCAKron}.

\subsubsection{Counting bipartite ribbon graphs.}
\label{sec:Counting} 
A  bipartite ribbon graph, also called a hypermap,  is a graph embedded on a    two-dimensional  surface   with black and  white vertices, such that edges connect black to white vertices and cutting the surface along the edges leaves a disjoint union of regions homeomorphic to open discs.  Bipartite ribbon graphs, denoted ribbon graphs for short in this paper,  with $n$ edges can be  described using permutations of $ \{ 1, 2, \cdots , n \}$ forming the symmetric group $S_n$. Label the edges with integers $ \{ 1, 2,  \cdots , n\} $. Reading  the edges around the black vertices following a chosen orientation on the surface gives the cycles of a permutation $ \tau_1$,  while the white vertices similarly give a permutation $ \tau_2 $.  Relabelling the edges, $ i \rightarrow \mu ( i ) $ using    $\mu  \in S_n $, amounts to conjugating the pair $ ( \tau_1 , \tau_2 ) \rightarrow ( \mu \tau_1  \mu^{-1} , \mu \tau_2 \mu^{-1} ) $. Distinct ribbon graphs are thus equivalence classes  of  pairs $(\tau_1, \tau_2) \in S_n \times S_n$  under the equivalence relation 
\bea\label{adj}
  ( \tau_1 , \tau_2  ) \sim  ( \tau_1' , \tau_2' ) \,  ~~~ \hbox{
iff} ~~~   \exists \mu \in S_n \,,  ~~~ ( \tau_1' , \tau_2' ) =  ( \mu \tau_1 \mu^{-1} , \mu \tau_2 \mu^{-1} ) 
\eea 
The set of permutation pairs within a fixed equivalence class forms an orbit for the action of $S_n$ on $ S_n \times S_n$ given in \eqref{adj}.  We define $ \Rib(n)$ to be the set of equivalence classes, or the set of orbits. There are  commands in group theoretic software GAP \cite{GAP}   that directly 
generate these orbits for any $n$, see \verb|RibbSetFunction(n)| appendix \ref{app:gap}.  
As an example, consider the case $n=3$. These are the 11 ribbon graphs  shown in  Figure \ref{fig:ribb}. The label appearing below each ribbon graph
is an index running from $1$  to $11$. The sole non-planar (genus one)  ribbon graph
is the equivalence class containing the pair $[(123), (123)]$.

 \begin{figure}[h]
 \begin{center}
     \begin{minipage}[h]{.8\textwidth}\centering
\includegraphics[angle=0, width=12cm, height=8cm]{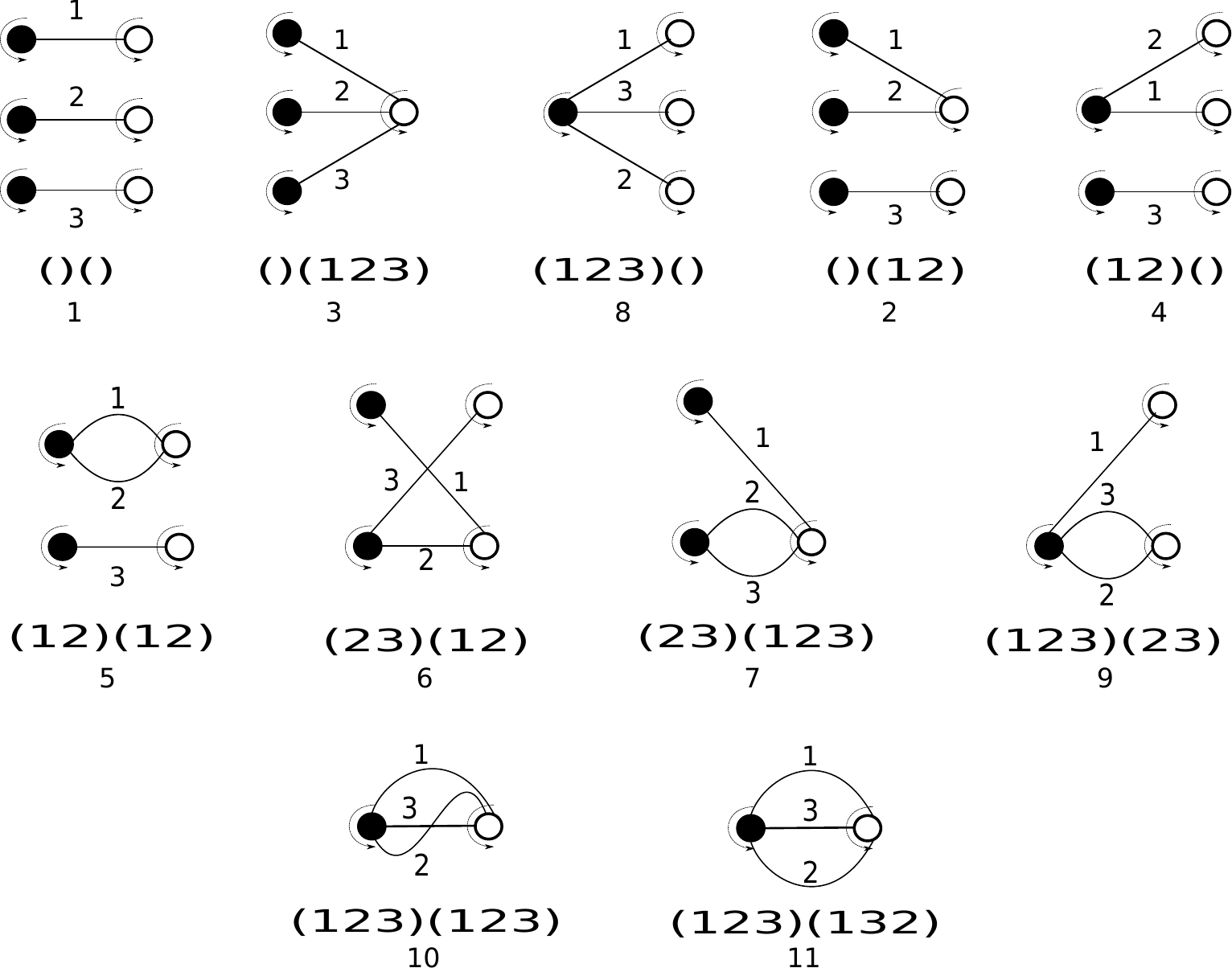}
\vspace{0.3cm}
\caption{ {\small Bipartite ribbon graphs  with  $n=3$ edges}} 
\label{fig:ribb}
\end{minipage}
\end{center}
\end{figure}

The counting of  ribbon graphs is also related to 
the counting of bipartite  graphs with $n$ trivalent vertices with three incoming colored edges  and $n$  trivalent vertices  with three outgoing colored edges \cite{Sanjo,PCAKron}. This counting problem corresponds to equivalence classes of triples $ ( \s_1 , \s_2 , \s_3 ) \sim ( \mu_1 \s_1 \mu_2 , \mu_1 \s_2 \mu_2 , \mu_1 \s_3 \mu_3 ) $ for $ \s_i \in S_n $ and $ \mu_1 ,  \mu_2 \in S_n $.  
 In turn this counting also gives the number of linearly independent  degree $n$ polynomial functions of  tensor variables $ \Phi_{ i_1 , i_2  ,  i_3 } $ and $ \bar \Phi^{ \bar i_1 , \bar i_2 , \bar i_3  }$ invariant the action of $U(N)^{ \times 3} $, for $ n \le N  $ and  with $ (i_1, i_2 , i_3 )  $ transforming as the fundamental of  the unitary group and $ ( \bar i_1 , \bar i_2 . \bar i_3 )  ,  $ transforming  in the  anti-fundamental.  Our focus in this paper will be on ribbon graphs, and we will discuss tensor model observables further in the outlook section \ref{ccl}.

\subsubsection{The permutation centralizer algebra (PCA)  $\cK(n)$ and its geometric basis. } \label{sec:geombasis} 
 
Introducing  the group algebra $ \mC ( S_n ) $, 
consider the elements of $ \mC ( S_n ) \otimes_{ \mC }  \mC ( S_n )$, written more simply 
$ \mC ( S_n ) \otimes \mC ( S_n ) $,   obtained by starting with a tensor product  $ \sigma_1 \otimes \sigma_2 $ 
and summing all their diagonal conjugates as 
\be \label{invgam}
\sigma_1 \otimes \sigma_2 \rightarrow \sum_{ \gamma \in S_n } \gamma \sigma_1 \gamma^{-1} \otimes 
\gamma \sigma_2 \gamma^{-1} 
\ee
Two  pairs  $ ( \sigma_1 , \sigma_2 ) $  and $ ( \sigma_1' ,  \sigma_2' ) $ related by the equivalence \eqref{adj} produce the same sum.  
Now, consider the $\mC$-vector subspace $\cK(n) \subset  \mC ( S_n ) \otimes \mC ( S_n )$ 
spanned by all $\sum_{ \gamma \in S_n } \gamma \sigma_1 \gamma^{-1} \otimes  \gamma \sigma_2 \gamma^{-1} $, 
$\s_1$ and $\s_2 \in S_n$: 
\be
\cK(n) = {\rm Span}_{\mC}\Big\{    
\sum_{ \gamma \in S_n } \gamma \sigma_1 \gamma^{-1} \otimes  \gamma \sigma_2 \gamma^{-1} , \; \s_1, \s_2 \in S_n
\Big\}
\label{graphbasis}
\ee
The dimension of  $\cK(n) $ is equal to the number of ribbon graphs with $n$ edges, i.e. $|\Rib(n)|$. In \cite{PCAKron},  it is  shown that $\cK(n) $ is an associative algebra, with the product being inherited from $ \mC ( S_n) \otimes \mC ( S_n)$.  
$ \cK(n)$ is a  permutation centralizer algebra (PCA) - a subspace of an algebra with basis given by permutations forming a group (here permutation pairs $ ( \s_1 , \s_2 ) $ forming the group $S_n \times S_n$), which commutes with a subgroup of the permutations, here $( \gamma , \gamma )$ forming the diagonal subgroup $ S_n \subset S_n \times S_n  $.   $\cK(n) $ is also semi-simple: it has a  non-degenerate  symmetric bilinear pairing 
given  by 
\bea\label{Wpairing}  
\bdel_2 : \mC(S_n)^{\otimes 2}\times  \mC(S_n)^{\otimes 2} \to \mC 
\eea
where 
\be\label{bdelta}
\bdel_2 (  \otimes_{i=1}^2 \s_i ; \otimes_{i=1}^2 \s'_i ) = 
\prod_{i=1}^2 \delta (\s_i\s'^{-1}_i) 
\ee
which extends to linear combinations with complex coefficients. 
Semi-simplicity implies that, by the Wedderburn-Artin theorem \cite{GoodWall,Ram},   $\cK(n) $ admits a decomposition in simple matrix algebras.  This decomposition is made manifest using what we denote as the Fourier basis,  which we discuss shortly in section \ref{Fourier}.

Start with a ribbon graph with label $r \in   \{ 1 , \cdots , |\Rib(n)| \}$.  As discussed in section \ref{sec:Counting}, the set of ribbon graphs is in 1-1 correspondence with orbits 
of the action of $S_n $ on $ S_n \times S_n$, by conjugation as given in \eqref{adj}. 
Pick a pair of permutations  $ ( \tau_1^{(r)}  , \tau_2^{(r)}  )$ among the permutation pairs 
representing the ribbon graph $r$.  The  orbit $\Orb(r)$  is the set of elements 
in $ S_n \times S_n $ which can be written as 
$ ( \mu \tau_1^{(r)}  \mu^{-1}   , \mu \tau_2^{(r)}  \mu^{-1}  ) $ for some $ \mu \in S_n $. 
Hence, consider the basis element in $\cK(n)$  associated with $r$ as: 
\bea 
\label{classr}
E_r=
{ 1 \over n! }  \sum_{ \mu \in S_n } \mu \tau_1^{(r)}  \mu^{-1}  \otimes \mu \tau_2^{(r)}  \mu^{-1} 
\eea
Let $ \Aut ( \tau_1^{(r)}  , \tau_2^{(r)}  ) $ be the subgroup of $S_n$ which leaves fixed the pair
$( \tau_1^{(r)}  , \tau_2^{(r)}  ) $. 
The order of this group is $ | \Aut ( \tau_1^{(r)}  , \tau_2^{(r)}  )| $ and is independent of 
the choice of representative, so we can write this as $ | \Aut (r) |$. 
The orbit-stabilizer theorem (see for example \cite{Cameron})  gives an isomorphism between  $ \Orb (r) $  and the coset 
$S_n / \Aut ( \tau_1^{(r)}  , \tau_2^{(r)}  )$.  Let $a$ be a label for the distinct 
permutation pairs in $ \Orb (r)$: 
\bea 
E_r &=& { 1 \over n! }  \sum_{ \mu \in S_n } \mu \tau_1^{(r)}  \mu^{-1}  \otimes \mu \tau_2^{(r)}  \mu^{-1}  \cr 
&  =& { | \Aut (r ) | \over n!  } 
 \sum_{ a \in \Orb ( r  ) } \tau^{(r)}_1 (a ) \otimes \tau^{(r)}_2 ( a)   \cr 
 & =& { 1 \over | \Orb ( r ) | }   \sum_{ a \in \Orb ( r ) } \tau^{(r)}_1 (a ) \otimes \tau^{(r)}_2 (a) 
 \label{Err}
\eea
Both  these expressions for $E_r$ will be useful. We will refer to the $E_r$ as the geometric or ribbon graph basis vectors for $ \cK (n)$. 

\begin{comment} 
Let $ \Aut (r) $ be the subgroup of $S_n$ consisting of permutations $\gamma \in S_n $ which stabilize the pair 
$ ( \sigma_1^{(r)} , \sigma_2^{(r)} )$ i.e. 
\bea 
( \gamma \sigma_1^{(r)} \gamma^{-1} , \gamma \sigma_2^{(r)} \gamma^{-1} ) = 
( \sigma_1^{(r)} , \sigma_2^{(r)} )   
\eea 
 The two expressions for $E_r$ are related by using the orbit stabilizer theorem which implies that 
\bea
| \Orb (r) | = { n! \over | \Aut (r) | } 
\eea
\end{comment}

The pairing \eqref{Wpairing} evaluated on this basis is 
\bea\label{orthobasis}
&& \bdel_2 (  E_r, E_s ) = \frac{1}{n!}  \sum_{\g}
\delta(  \sigma_1^{(r)}   \gamma ( \sigma_1^{(s)})^{-1}\gamma^{-1} )
\delta(  \sigma_2^{(r)}   \gamma  ( \sigma_2^{(s)})^{-1} \gamma^{ -1} )
\cr
&&
 = 
\frac{1}{n!} |\Aut( r ) | \;      \delta_{sr} =  { 1 \over | \Orb (r) | } \delta_{ rs} \cr 
&& 
\eea
The basis vectors associated with distinct orbits $ r \ne s $ are orthogonal.

\subsubsection{A Fourier basis for $\cK(n)$ }
\label{Fourier} 

The number of bipartite ribbon graphs with $n$ edges,  which is the dimension of $ \cK ( n )$, can be given as a sum of partitions of $n$  \cite{Sanjo,PCAKron} or as a sum over triples $ R_1 , R_2 , R_3 $  of irreducible  representations (irreps)  of $S_n$ : 
\bea 
|\Rib ( n ) | = \Dim ( \cK ( n ) ) = \sum_{ R_1 , R_2 , R_3 \vdash n } C ( R_1 , R_2 , R_3 )^2 = \sum_{ p \vdash n }\Sym ( p ) 
\eea
$R_1,R_2,R_3$ are partitions of $n$ (denoted by $R_i \vdash n$) which correspond to  Young diagrams with $n$ boxes.  We will denote their dimension as $d(R_i)$.

Describing $ p$ in terms of a set of numbers $p_i$ giving the  multiplicity of parts $i$ in the partition,
\bea 
\Sym (p) = \prod_i i^{ p_i} p_i!
\eea
 The form of this sum of squares  is explained by the Wedderburn-Artin decomposition of $ \cK ( n )$:  
an explicit basis,  which we refer to as the Fourier basis and which exhibits the decomposition, can be constructed using Clebsch-Gordan coefficients and matrix elements of permutation groups \cite{PCAMultMat,PCAKron}. 
This basis takes the form 
\be\label{qbasis}
Q^{R_1,R_2,R_3}_{\tau_1,\tau_2} = 
\kappa_{R_1,R_2}
\sum_{\s_1, \s_2 \in S_n}
\sum_{i_1,i_2,i_3, j_1,j_2}
C^{R_1,R_2; R_3 , \tau_1  }_{ i_1 , i_2 ; i_3 } C^{R_1,R_2; R_3, \tau_2  }_{ j_1 , j_2 ; i_3 } 
 D^{ R_1 }_{ i_1 j_1} ( \sigma_1  ) D^{R_2}_{ i_2 j_2 } ( \sigma_2 )  \,  \sigma_1 \otimes \sigma_2  
\ee
 $D^R_{ ij} ( \sigma ) $ are the matrix elements of 
the linear operator $D^R(\s)$ in an orthonormal basis for the irrep $R$. The indices 
$ \tau_1 , \tau_2 $ run over an orthonormal basis for the  multiplicity space of $R_3$ appearing in the tensor 
decomposition    of $ R_1 \otimes R_2$. This multiplicity is equal to the Kronecker coefficient $C ( R_1 , R_2 , R_3 )$ which is also the multiplicity  of the trivial representation in the tensor product decomposition of 
 $R_1 \otimes R_2 \otimes R_3$.  $\kappa_{R_1,R_2} = \frac{d(R_1)d(R_2)}{(n!)^2}$ is a normalization factor, where 
$d(R_i)$ is the dimension of the irrep $R_i$.  $C^{R_1,R_2; R_3 , \tau_1  }_{ i_1 , i_2 ;i_3 } $ are Clebsch-Gordan coefficients of the representations of $S_n$ (see the appendices 
of \cite{PCAKron} for the properties needed to prove that this expression gives a Wedderburn-Artin basis for $ \cK(n)$). 

These elements $ Q^{R_1,R_2,R_3}_{\tau_1,\tau_2} \in \mC ( S_n) \otimes \mC ( S_n)$ are invariant under diagonal conjugation 
\be\label{qinv}
(\gamma\otimes \gamma)\cdot Q^{R_1,R_2,R_3}_{\tau_1,\tau_2} \cdot
(\gamma^{-1}\otimes \gamma^{-1}) = Q^{R_1,R_2,R_3}_{\tau_1,\tau_2}  \, , 
\ee
and therefore belong to $\cK(n)$. 
It was verified \cite{PCAKron} that they define  the Wedderburn-Artin matrix bases of $\cK(n)$: 
\be
Q^{R_1,R_2,R_3}_{\tau_1,\tau_2}Q^{R_1',R_2',R_3'}_{\tau_2',\tau_3} 
=  \delta_{R_1R_1'} \delta_{R_2R_2'}  \delta_{R_3R_3'}\delta_{\tau_2 \tau_2'}  Q^{R_1,R_2,R_3}_{\tau_1,\tau_3}  \, . 
\label{qqmatrix0}
\ee
The normalization $\kappa_{R_1,R_2} $ is chosen to ensure that the RHS has the standard form for 
multiplication of elementary matrices, for each block labelled by triples $ ( R_1 , R_2 , R_3 )$ with non-vanishing
Kronecker coefficient $ C ( R_1 , R_2 , R_3)$. 
Noting that $C(R_1,R_2,R_3)$ is at most 1 for $n\leq 4$, 
then the matrices $Q^{R_1,R_2,R_3}_{\tau_1,\tau_2}$ are $1\times 1$
hence are commuting for $n\le 4 $. 
The set  $\{Q_{\tau_1,\tau_2}^{R_1,R_2,R_3}\} \in \cK (n)$ are  orthogonal with respect to the bilinear pairing  $\bdel_2$  
 \be
\bdel_2 (Q_{\tau_1,\tau'_1}^{R_1,R_2,R_3}; Q_{\tau_2,\tau'_2}^{R_1',R_2',R_3'})
 = \kappa_{R_1,R_2}  d(R_3)\,  \delta_{R_1R_1'} \delta_{R_2R_2'}
\delta_{R_3R_3'}\delta_{\tau_1\tau_2} \delta_{\tau'_1\tau'_2}     \, . 
\label{orhoqq0}
\ee
The cardinality of this set of orthogonal elements in $ \cK(n)$ is 
\bea 
\sum_{R_1,R_2,R_3} \sum_{\tau_1,\tau_2} 1 = 
\sum_{R_1,R_2,R_3} C(R_1,R_2,R_3)^2 = |\Rib(n)|
\eea
which allows us to confirm
that these elements  $\{Q_{\tau_1,\tau_2}^{R_1,R_2,R_3}\}$ form  an orthogonal basis of $\cK(n)$. 

The sets  $\{Q^{R_1,R_2,R_3}_{\tau_1, \tau_2}\}$ and
$\{E_r\}$ define orthogonal bases of $\cK(n)$.  We refer to the set 
$\{ E_r \}$  as the geometric or ribbon graph basis  and  to the set $\{ Q^{R_1,R_2,R_3}_{\tau_1, \tau_2}\}$
as the representation theoretic or Fourier basis of $\cK(n)$.  The change of basis from the Fourier basis to the geometric basis is made explicit in the   appendix \ref{app:geombasis}. 
 The existence of these  two bases and their interplay is an important resource exploited  in this paper. 
 
\subsection{Inner product and Hilbert space}
\label{sec:InnProdHilb} 

\subsubsection{$\cK ( n)$ as a Hilbert space}

\begin{proposition}
\label{PropcKnHilb} 
 The algebra $ \cK ( n)$ is a Hilbert space with the ribbon graph vectors 
$E_r$ forming an orthogonal basis; the vectors $\sqrt{ | \Orb(r) | } E_r \equiv e_r  $ form an orthonormal basis. 
\end{proposition}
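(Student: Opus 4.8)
The plan is to equip $\mC(S_n)\otimes\mC(S_n)$ with the standard sesquilinear inner product for which the permutation--pair basis is orthonormal,
\be
\langle \sigma_1\otimes\sigma_2\,,\,\sigma_1'\otimes\sigma_2'\rangle \;=\; \delta(\sigma_1\sigma_1'^{-1})\,\delta(\sigma_2\sigma_2'^{-1})\,,
\ee
extended linearly in the first argument and conjugate-linearly in the second. This form is manifestly Hermitian, and it is positive-definite since in the orthonormal basis $\langle v,v\rangle=\sum_{\sigma_1,\sigma_2}|v_{\sigma_1\sigma_2}|^2$, which vanishes only for $v=0$. Its restriction to the subspace $\cK(n)$ of \eqref{graphbasis} is therefore again a positive-definite Hermitian inner product; because $\cK(n)$ is finite-dimensional it is automatically complete, so $\cK(n)$ with this inner product is a Hilbert space. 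It then only remains to identify the claimed bases.

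Next I would compute the Gram matrix of the geometric basis $\{E_r\}$ defined in \eqref{classr}--\eqref{Err}. The key point is that, in the orthonormal basis $\{\sigma_1\otimes\sigma_2\}$, each $E_r$ has real — in fact rational — coefficients (immediate from \eqref{classr}, where each coefficient is a nonnegative multiple of $1/n!$), so that for such vectors the sesquilinear inner product coincides with the bilinear pairing $\bdel_2$ of \eqref{Wpairing}: $\langle E_r,E_s\rangle = \bdel_2(E_r,E_s)$. This bilinear pairing has already been evaluated in \eqref{orthobasis}, giving $\langle E_r,E_s\rangle = \tfrac{1}{n!}|\Aut(r)|\,\delta_{rs} = \tfrac{1}{|\Orb(r)|}\,\delta_{rs}$ after using the orbit--stabilizer identity $|\Orb(r)| = n!/|\Aut(r)|$. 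In particular the $E_r$ are pairwise orthogonal and each has strictly positive norm-squared $1/|\Orb(r)|$.

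Finally, the $E_r$ are $|\Rib(n)|$ nonzero, mutually orthogonal vectors of $\cK(n)$, and $\dim_\mC \cK(n)=|\Rib(n)|$, so they form an orthogonal basis. Rescaling, $e_r:=\sqrt{|\Orb(r)|}\,E_r$ satisfies $\langle e_r,e_s\rangle=\delta_{rs}$, which is the asserted orthonormal basis; this completes the proof.

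There is no substantial obstacle here; the single thing to be careful about is the bilinear-versus-sesquilinear bookkeeping — one must verify that the coefficients of $E_r$ in the $\{\sigma_1\otimes\sigma_2\}$ basis are real, so that the previously computed pairing \eqref{orthobasis} transfers verbatim to the Hermitian inner product and not to its complex conjugate. One may also remark that positive-definiteness on $\cK(n)$ does not in fact require passing through the ambient space: it follows directly from $\langle v,v\rangle=\sum_{\sigma_1,\sigma_2}|v_{\sigma_1\sigma_2}|^2\ge 0$ with equality iff $v=0$, and similarly the semisimplicity/nondegeneracy of $\bdel_2$ recorded earlier already guarantees that the $E_r$ are linearly independent.
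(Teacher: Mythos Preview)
Your proof is correct and follows essentially the same approach as the paper: both define the standard sesquilinear inner product on $\mC(S_n)\otimes\mC(S_n)$ making permutation pairs orthonormal, verify positive-definiteness, and then compute $\langle E_r,E_s\rangle=\tfrac{1}{|\Orb(r)|}\delta_{rs}$. Your only (harmless) shortcut is to observe that the $E_r$ have real coefficients so the sesquilinear form agrees with the already-computed bilinear pairing $\bdel_2$ of \eqref{orthobasis}, whereas the paper redoes the orbit computation directly for the sesquilinear form.
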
 
\begin{proof}
Define the inner product $g$ on $  \mC ( S_n)  \otimes \mC (  S_n)   $, using the basis 
of permutation pairs and extend it by linearity. For two pairs $ \alpha = ( \alpha_1  , \alpha_2 ) , \beta = ( \beta_1  ,  \beta_2 )  $ in $ S_n \times S_n $  we define 
\bea 
g ( \alpha , \beta ) = g ( \alpha_1 \otimes \alpha_2 , \beta_1 \otimes \beta_2 ) 
= \delta ( \alpha_1^{-1 }  \beta_1) \delta ( \alpha_2^{-1 }  \beta_2 ) 
\eea 
where $\delta$ is the delta function on $S_n$  ($\delta(\s)= 1$ if and only if $\s= \id$, otherwise
$\delta(\s)= 0$). This  extends by linearity  to a sesquilinear form on $ \mC ( S_n ) \otimes \mC ( S_n )$ as 
\bea\label{def:innerprod}
g ( \sum_i a_i   \alpha_{1i}  \otimes  \alpha_{2i} ,    \sum_j  b_j  \beta_{1j} \otimes  \beta_{2j}) 
= \sum_{i,j} \bar  a_i   b_j  \; 
\delta (\alpha_{1i} ^{-1 }\beta_{1j}  ) \delta (   \alpha_{2i} ^{-1} \beta_{2j} ) 
\eea
where $a_i,b_i \in \mC$ and 
where the bar means complex conjugation. 

We can  show that  $g$ satisfies conjugation property
$g (    \alpha ,   \beta ) = \overline{g ( \beta, \alpha )}$ and is positive definite. 
It therefore gives an inner product
on  $ \mC ( S_n ) \otimes \mC ( S_n )$.

We compute the inner product of two ribbon graph basis vectors 
\bea 
g ( E_r , E_s ) &=&
 { 1 \over | \Orb (r) |}  { 1 \over | \Orb (s) |} 
  \sum_{ a \in \Orb (r ) }
   \sum_{ b \in \Orb (s )  }
   g (  \s_1^{(r)} (a) \otimes \s_2^{(r)} (a)     ,  \s_1 ^{(s)} (b) \otimes \s_2^{(s)} (b)  )
     \crcr
&   =  &  { 1 \over | \Orb (r) |}  { 1 \over | \Orb (s) |} 
  \sum_{ a \in \Orb (r ) }
   \sum_{ b \in \Orb (s )  } \delta_{a,b} \delta_{ r, s } 
\eea
where the only way that 
$\delta((\s_1^{(r)} (a))^{-1} \s_1 ^{(s)} (b)) \, \delta((\s_2^{(r)} (a))^{-1} \s_2 ^{(s)} (b))=1$
for two orbit elements $a$ and $b$ is when $a=b$. 
As a couple $( \s^{(r)} _1 (a) , \s^{(r)} _2 (a)  )$ can only appear in a unique orbit, 
we therefore have 
\bea    \label{gErEs}
g ( E_r , E_s ) = { 1 \over | \Orb (s) | } \delta_{ r , s } \, . 
\eea
Then  the set of $\{E_r\}$ for $  r \in \{ 1 , \cdots , |\Rib(n)| \}$ defines an orthogonal basis of $\cK(n)$  which becomes a Hilbert space of states
 with inner product $g$.  We can define orthonormal bases for $\cK(n)$ as states of the form: 
\bea 
e_r = \sqrt { |\Orb (r) | } E_r \,. 
\eea
This  completes the proof of the proposition.  

\end{proof}

\subsubsection{Involution on $ \cK(n)$ from permutation inversion} 
\label{sec:Involution} 

We also define the linear conjugation operator $S: \mC(S_n) \to \mC(S_n)$ that maps a linear combination $A = \sum_{i} c_i \s_i \in \mC(S_n)$ to 
\bea\label{inverse}
S(A) := \sum_{i} c_i \s_i^{-1} 
\eea
Extend this operation to $\mC ( S_n )  \otimes \mC ( S_n) $ by inverting the permutation in each tensor factor: $S(\s_{1}  \otimes \s_{2} ) = \s_{1}^{-1}  \otimes \s_{2} ^{-1} $ and using linearity 
\bea
S(\sum_{i  } a_i  \; \s_{1i} \otimes \s_{2i} ) = \sum_{i  } a_i  \s_{1i}^{-1}  \otimes \s_{2i} ^{-1} 
\eea
$S$ is an involution:  $S^2 = \id$, and obeys $ S ( A B ) = S ( B ) S ( A )$.

The conjugation $S$ gives a well defined involution from the set of equivalence classes forming $ \cK ( n ) $ to itself. To see this note that if $( \sigma_1 , \sigma_2) \in \Orb  ( r )$ maps under inversion to a pair  
$ ( \sigma_1^{-1} , \sigma_2^{-1} )  \in \Orb (s) $ (where $s$ may or may not be equal to $r$),
 then for any $\mu \in S_n$ 
\bea 
S ( \mu \sigma_1 \mu^{-1} , \mu \sigma_2 \mu^{-1} ) = ( \mu \sigma_1^{-1}  \mu^{-1} , \mu \sigma_2^{-1} \mu^{-1} ) \in \Orb (s)  
\eea
If for a given $r$, $S $ maps the pairs $ \Orb (r) $ back to $ \Orb (r)$ we have 
\bea 
S ( E_r ) = E_r 
\eea
Such ribbon graphs will be called self-conjugate. If $ \Orb (r)$  is mapped to $ \Orb(s)$ with $ s \ne r$, then 
\bea 
&& S ( E_r ) = E_s \cr 
&& S  ( E_s ) = E_r 
\eea
and we call $(E_r,E_s)$ a conjugate pair. In section  \ref{app:conjugate} we compute the action of  $S$ on the Fourier basis.  The  interplay between these two actions is used to show that the total number of self-conjugate ribbon  graphs  with $n$ edges 
is equal to the sum of Kronecker coefficients $ C ( R_1 , R_2 , R_3 )$ for $ R_1 , R_2 , R_3 \vdash n $. The operator $S$ is also useful in proving the hermiticity of the reconnection operators $ T_k^{(i)}$  (Proposition \ref{propTkHerm})  
which are used to construct  Hamiltonians on $ \cK(n)$ in section \ref{sec:SquareIntMat}.

\subsection{The integrality structure of the product on $\cK(n)$} 
\label{integral}

The product in the  algebra $\cK(n)$ gives an expansion  of the product $E_r E_s$ of two geometric basis vectors
\bea 
E_r E_s = \sum_{ t =1 }^{ |\Rib(n) | }  C_{rs}^{t}E_t
\eea 
We will express the structure constants 
\bea
C_{ rs}^t =  {\rm Coeff}  ( E_t  , E_r E_s )
\eea
in terms of non-negative   integers. Recall  from section \ref{sec:geombasis}  the two expressions for $E_r$ 
\bea\label{keyEqs}  
E_r &=& { 1 \over n! }  \sum_{ \mu \in S_n } \mu \sigma_1^{(r)}  \mu^{-1}  \otimes \mu \sigma_2^{(r)}  \mu^{-1}  \cr 
 & =& { 1 \over | \Orb ( r ) | }   \sum_{ a \in \Orb ( r ) } \sigma^{(r)}_1 (a ) \otimes \sigma^{(r)}_2 (a) 
\eea 
For notational convenience, we define 
\bea  \sigma^{ (r) } = \sigma_1^{(r)} \otimes \sigma_2^{ (r)}  \;, \qquad \quad 
 \mu \sigma^{(r)} \mu^{-1} = \mu \sigma_1^{(r)} \mu^{-1} \otimes \mu \sigma^{(r)}_2 \mu^{-1}  \, . 
\eea
We express \eqref{Err} in a simpler form as
\bea 
E_r   = { 1 \over n! } \sum_{ \mu \in S_n} \mu \sigma^{(r)} \mu^{-1} 
                  = {  1 \over | \Orb(r) | } \sum_{ a \in \Orb (r) } \sigma^{(r)} (a) \,. 
\eea
Consider the product of two elements of $ \cK ( n ) $ associated with orbits $r$ and $s$: 
\bea\label{takeit}  
 E_r E_s 
& = & { 1 \over  | \Orb(r) | } { 1 \over | \Orb (s) | } 
\sum_{ a \in \Orb (r) } \sum_{ b \in \Orb (s) } \sigma^{(r)} (a) \sigma^{(s)} ( b )  \cr 
& = & { 1 \over (n!)^2 } \sum_{ \mu_1 , \mu_2 \in S_n } \mu_1 \sigma^{(r)} \mu_1^{-1} \mu_2 \sigma^{(s)} \mu_2^{-1}  \, . 
\eea
We can write $ \nu = \mu_1^{ -1} \mu_2 $ and then solve for $ \mu_1 = \mu_2 \nu^{-1} $ to write 
\bea 
 E_r E_s   &=&  
{ 1 \over n! } { 1 \over | \Orb (r) | } \sum_{ \mu_2 \in S_n} \sum_{ a \in \Orb ( r ) } 
 \mu_2 \sigma^{(r)} (a) \sigma^{(s)}  \mu_2^{-1}   \cr 
 & =&  { 1 \over | \Orb (r) | } \sum_{ a \in \Orb (r) } 
{  1 \over | \Orb (\sigma^{(r)} (a) \sigma^{(s)} ) |  }\sum_{ b \in \Orb (\sigma^{(r)} (a) \sigma^{(s)} ) } \sigma ( b ) \cr 
 & =&  \sum_{ t } { 1 \over | \Orb (r) | } \sum_{ b \in \Orb (t) } { \sigma^{ (t)} { (b) }\over | \Orb (t) | } 
  \sum_{ a  \in \Orb (r) } \delta (  \Orb ( t )  , \Orb ( \sigma^{(r)} (a)  \sigma^{ (s)} ) ) \cr 
  & =&   \sum_{ t } { 1 \over | \Orb (r) | }E_t  \sum_{ a  \in \Orb (r) } \delta (  \Orb ( t )  , \Orb ( \sigma^{(r)} (a)  \sigma^{ (s)} ) )  \,, 
\eea
where $\delta(\Orb(s), \Orb(t))$ is the Kronecker symbol $\delta_{st}$
for the labels $s$ and $t$. We have thus expressed the product of $E_r E_s $ in terms of the non-negative 
 integer 
\bea
&& \sum_{ a  \in \Orb (r) } \delta (  \Orb ( t )  , \Orb ( \sigma^{(r)} (a)  \sigma^{ (s)} ) )  \cr 
&& = \hbox{ Number of times the multiplication of elements from orbit $r$ } \cr 
&&  \hbox{ with a fixed element in orbit $s$ to the right 
produces an element in orbit $t$ }  \cr 
&& 
\eea
If we solve for $\mu_2$ instead as $ \mu_2 = \mu_1 \nu $, then we get 
\bea 
  E_r E_s & =&  
\sum_{ t } { 1 \over | \Orb (s) | }   \sum_{ b \in \Orb (t)} { \sigma^{ (t)}  { (b)} \over | \Orb (t) | } 
  \sum_{ a \in \Orb (s)  } \delta (  \Orb ( t )  , \Orb ( \sigma^{(r)}   \sigma^{ (s)} (a)  ) )   \cr 
  & = & \sum_{ t } { 1 \over | \Orb (s) | } E_t \sum_{ a \in \Orb (s)  } \delta (  \Orb ( t )  , \Orb ( \sigma^{(r)}   \sigma^{ (s)} (a)  ))
\eea
Here we have expressed the same product in terms of the non-negative integers 
\bea 
&& \sum_{ a \in \Orb (s)  } \delta (  \Orb ( t )  , \Orb ( \sigma^{(r)}   \sigma^{ (s)} (a)  )) \cr 
&& = \hbox{ Number of times the multiplication of elements from orbit $s$ } \cr 
&&  \hbox{ with a fixed element in orbit $r$ from the left  
produces an element in orbit $t$ } \cr 
&& 
\eea
Equivalently, we can express these by saying that the coefficient ${\rm Coeff} ( E_t , E_r E_s  )$  of $E_t$ in  the expansion of $E_rE_s$ is given by 
\bea\label{ErEsN}  
 {\rm Coeff}  ( E_t  , E_r E_s ) &=& { 1 \over | \Orb (s) | }  \sum_{ a \in \Orb (s)  } \delta (  \Orb ( t )  , \Orb ( \sigma^{(r)}   \sigma^{ (s)} (a)  ) ) \cr 
& =&  { 1 \over | \Orb (r) | }\sum_{ a  \in \Orb (r) } \delta (  \Orb ( t )  , \Orb ( \sigma^{(r)} (a)  \sigma^{ (s)} ) ) 
\eea 
If we just keep $ \mu_1 , \mu_2 $  in \eqref{takeit} we can write  (treating  $r$ and $s$ more symmetrically) 
\bea 
{\rm Coeff}  ( E_t ,  E_r E_s  ) = 
{ 1 \over | \Orb (s) | ~ | \Orb (r) | } \sum_{ a   \in \Orb (r) } \sum_{ b \in \Orb (s) } 
 \delta (  \Orb ( t )  , \Orb ( \sigma^{(r)} (a)  \sigma^{ (s)} (b)  ) )
\eea
Recalling that \eqref{gErEs} holds, we therefore have 
\bea 
 {\rm Coeff}  ( E_t , E_r E_s  ) =  { g( E_r E_s  , E_t  ) | \Orb (t) | }  
\eea
Hence 
\bea 
&& \bdel_2 ( E_r E_s S(E_t) ) =  g ( E_r E_s  , E_t  )  \cr 
&&  = { 1 \over | \Orb (s) | ~ | \Orb (r) | ~ | \Orb (t) |  } \sum_{ a   \in \Orb (r) } \sum_{ b \in \Orb (s) } 
 \delta (  \Orb ( t )  , \Orb ( \sigma^{(r)} (a)  \sigma^{ (s)} (b)  ) )
\eea
These formulae show that the  structure constants of the algebra $ \cK ( n )$ are expressed in terms of 
non-negative integers obtained from the combinatorial  multiplications of elements in the orbits, which provide 
the geometrical ribbon graph basis vectors $E_r$ of $ \cK (n)$.  In general, the product is not commutative $ E_r E_s \ne E_s E_r $. In the next section, we will exploit this integral structure, for the particular cases where $E_r$ are chosen to be central elements in $ \cK (n)$, associated with  permutations having a  cycle of length $k$ and remaining cycles of length $1$. These central elements will be used to construct Hamiltonians and the eigenproblems of these Hamiltonians will become questions about non-negative integer matrices.

\subsection{The centre of  $ \cK ( n ) $ and reconnection operators  $ T_k^{(i)} $ }
\label{sec:CentreKnReconn} 

In this section, we will review some properties of the centre of $ \cK(n)$ and introduce central  
elements  $T_k^{(i)} \in \cK(n)$ labelled by $ k \in \{ 2, 3, \cdots  , n \} $ and $i \in \{ 1 , 2, 3 \} $. 
These central elements act on $ \cK(n)$ by multiplication. Since $ \cK(n)$ (when equipped with the inner product specified)  is also the Hilbert  space of our quantum mechanical systems, the elements  $ T_k^{(i)} $ also define linear operators when they act on $ \cK(n)$ by multiplication. We are taking advantage of a state-operator correspondence which is possible when the Hilbert space of a quantum mechanical space is also an algebra. We prove that these linear operators are Hermitian  with respect to the inner product defined in section 
\ref{sec:InnProdHilb}.   We refer to the $T_k^{(i)}$ as reconnection operators, since they act, as we will see shortly,  on the permutations defining elements of $\cK(n)$ by multiplication of permutations.  In the diagrammatic description of  tensor model observables associated with $\cK(n)$ \cite{Sanjo} this operation involves reconnecting the index lines of the tensor variables $\Phi^{ \otimes n } $ with those of $\bar \Phi^{\otimes n }$.  
In terms of ribbon graphs, the action of $T^{(i)}_k$ amounts to splitting and joining of vertices: such operators have also been discussed in \cite{Itoyama:2017wjb,Itoyama:2019oab}.

Let us first recall some properties of the centre $ \cZ ( \mC ( S_n ) ) $ of $ \mC ( S_n)$. 
The centre is defined as the sub-algebra of elements which commute with all $ \mC ( S_n)$. $ \cZ ( \mC ( S_n ) ) $ is a commutative algebra of dimension $p(n)$,  the number of partitions of $n$.  The conjugacy classes of $S_n$ are specified by cycle structures of permutations which define partitions of $n$. The sum of elements in a conjugacy class is a central element in the group algebra. A linear basis for the centre is given by these class sums. 
For any integer $k$, such that $2 \le k \le n $, let $ \mathcal{C}_k $ to be the conjugacy class of 
 permutations $\s \in S_n$ made of a single cycle of length $k$ and remaining cycles of length $1$.  As an example, for $ n=3$, $ k =2$, the conjugacy class $ \cC_2$ is the set of permutations $ \{ ( 1,2) ( 3) , (2,3) ( 1) , (1,3) (2) \} $.  Define $T_k$ as the sum 
\be
\label{tk}
T_k = \sum_{ \s \in \cC_{ k } }  \s 
\ee    
$|T_k|$ will refer to as the number of terms
in that sum, equivalently the number of terms in $ \cC_k$, which is ${  n! \over k ( n-k)! } $. 
 For any $n$, the set $ \{ T_2 , T_3 , \cdots , T_n \} $ generates the centre \cite{KR1911}, i.e. 
   linear combinations of products of these $T_k$  span $\cZ ( \mC (S_n))$. 
In fact   there is no need to consider the entire set to generate   $ \cZ ( \mC ( S_n ) ) $. Indeed, there exists $k_* (n) \le n $, such that the subset $\{ T_2 , \cdots , T_{ k_*(n)} \} $  spans the center \cite{KR1911}. 
This is related to the fact that the ordered list of 
of normalized characters $ ( \widehat \chi_{ R } ( T_2  ) , \widehat \chi_{ R }( T_3 ) , \cdots , \widehat \chi_{ R } (T_{k_*(n)}  )  ) $ uniquely identifies the Young diagram $R$. These normalized characters are defined as 
\bea
\widehat \chi_{ R } ( T_k  ) =  { \chi_R ( T_k ) \over d(R) }
\eea 
where $d(R)$ is the dimension of the irrep $R$.  The sequence $k_*(n)$   was explicitly computed  \cite{KR1911}, with the help of  character formulae in \cite{Lasalle,CGS}, to be 
\bea \label{kstar}
k_* ( n )   &  =  & 2 \;  \hbox{ for }   n \in \{ 2,3, 4,5,7 \} \cr 
k_* ( n )   & = &  3  \; \hbox{ for } n \in \{ 6 , 8 , 9 \cdots , 14 \} \cr 
k_* (n) & = &  4\; \hbox{ for } n \in \{ 15 , 16 , \cdots , 23, 25 ,26  \} \cr 
k_{ *} ( n ) &  = &  5 \;\hbox{ for } n \in \{ 24 ,  27 ,   \cdots , 41 \} \cr 
k_* (n)  & = &  6\; \hbox{ for } n \in \{ 42 , \cdots , 78,79  , 81 \} 
\eea

At any $n \ge 2$,  we will define elements in   $ \mC ( S_n) \otimes \mC ( S_n)$  
\bea\label{tki}
T^{(1)}_k &=& T_k \otimes 1   = \sum_{ \s \in \cC_k } \sigma \otimes 1 \,,  \crcr
T^{(2)}_k &=& 1 \otimes T_k    = \sum_{ \s \in \cC_k } 1 \otimes \sigma \,, \crcr
T^{(3)}_k &=& \sum_{ \sigma \in \cC_k }  \sigma \otimes  \sigma  \;. 
\eea
These commute with permutations $ \gamma \otimes \gamma$ and are thus in the sub-algebra
 $ \cK(n) \subset \mC ( S_n) \otimes \mC ( S_n) $.  Using the correspondence between permutation pairs and ribbon  graphs described in section \ref{sec:Counting}, it is straightforward  to describe the ribbon graphs corresponding to $T_k^{(i)} $.  In the case $n=3$, $T_2^{(1)} , T_2^{(2)} , T_2^{(3)} $ correspond to the ribbon graphs labelled $4,2,5 $ respectively in  Figure \ref{fig:ribb}. The elements  $ T_3^{(1)} , T_3^{(2)} , T_3^{(3)} $ correspond to the graphs  labelled $8,3,10$ respectively.  For general $n$, $T_k^{(1)} $ corresponds to a   ribbon graph  with genus zero, having one black vertex of valency $k$, $n-k$ black vertices of valency one, and $n$ white vertices of valency one.  For $T_k^{(2)}$, we have a ribbon graph with one white vertex of valency $k$, $n-k$ vertices of valency one, and $n$ black vertices of valency one. For $T_k^{(3)}$ we have a graph with genus $ \lfloor {k-1 \over 2 } \rfloor $ with a black $k$-valent vertex connected to a white $k$-valent vertex, along with $n-k$ one-valent black and white vertices. These graphs are disconnected for $ n > k$.

The $T_k^{(i)}$'s  act as linear operators on $\mC(S_n)\otimes \mC ( S_n )$ by left 
multiplication. Right multiplication gives the same operators because these are central operators in 
$\mC ( S_n ) \otimes \mC ( S_n)$. They are also central in $ \cK (n)$, since this is a sub-algebra of  $\mC ( S_n ) \otimes \mC ( S_n)$. 
Let $(\cM^{ (i)}_k )_r^s $ be the matrix elements of $T_k^{(i)}$
\bea\label{Ter}
T_k^{(i)} E_s = \sum_{ s } (\cM^{ (i)}_k )_s^t  E_t 
\eea
in the geometric basis. 
\begin{proposition} 
\label{PropIntTk}
The matrix elements $ (\cM^{ (i)}_k )_r^s $ are non-negative integers. 
\end{proposition}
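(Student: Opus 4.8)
The plan is to reduce the statement for $T_k^{(i)}$ to the general integrality result for products of geometric basis vectors already established in Section~\ref{integral}. Recall that each $T_k^{(i)}$ is itself a geometric basis vector up to a non-negative integer rescaling: from \eqref{tki} and the description of the associated ribbon graphs, $T_k^{(i)}$ is a sum over a single orbit (the orbit of a pair of permutations built from a $k$-cycle and fixed points in the appropriate tensor slots), so $T_k^{(i)} = \lambda_k^{(i)} E_{r(i,k)}$ where $r(i,k)$ is the label of that ribbon graph and $\lambda_k^{(i)} = |\Orb(r(i,k))| \cdot |T_k| / n!$ or a similar ratio of orders; the key point is only that this proportionality constant and the orbit size are positive integers and rationals that combine correctly. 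First I would make this identification precise: write $T_k^{(i)} = c\, E_{r}$ for the relevant $r$, compute $c$ explicitly by comparing the coefficient of a single permutation pair on both sides of \eqref{classr}, and check $c$ is a positive integer (it is essentially $|\Orb(r)|$ times the number of representatives, divided by $n!$, which is the index $[S_n:\Aut(r)]$ times $|T_k|/n!$ — one verifies $|\Orb(r)|$ equals the orbit size of that particular pair).

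Then I would invoke \eqref{ErEsN}: the coefficient ${\rm Coeff}(E_t, E_r E_s)$ of $E_t$ in $E_r E_s$ equals
\bea
{ 1 \over | \Orb (r) | }\sum_{ a \in \Orb (r) } \delta \big(  \Orb ( t ), \Orb ( \sigma^{(r)} (a)\, \sigma^{ (s)} ) \big),
\eea
and the alternative formula summing over $\Orb(s)$. Applying this with $E_r$ replaced by the single ribbon graph $E_{r(i,k)}$ underlying $T_k^{(i)}$ shows that the matrix element $(\cM^{(i)}_k)_s^t$ equals $c$ times this combinatorial count, i.e. $c$ times the number of ways of multiplying a fixed representative of orbit $s$ on the appropriate side by elements of the orbit $r(i,k)$ so as to land in orbit $t$, divided by $|\Orb(r(i,k))|$. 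The cleanest route, though, is to avoid the division altogether: since $T_k^{(i)}$ is literally $\sum_{\sigma\in\cC_k}(\cdots)$, one can expand $T_k^{(i)} E_s$ directly. Writing $E_s = \tfrac{1}{|\Orb(s)|}\sum_{b\in\Orb(s)}\sigma^{(s)}(b)$ and $T_k^{(i)} E_s = \tfrac{1}{|\Orb(s)|}\sum_{\sigma\in\cC_k}\sum_{b\in\Orb(s)} (\text{product pair})$, each product pair lies in some orbit $t$, and because $T_k^{(i)}$ is diagonal-conjugation invariant the whole sum is again a combination of the $E_t$; collecting terms and using that each orbit $t$ appears with multiplicity $|\Orb(t)|$ inside the corresponding $E_t$, one finds
\bea
(\cM^{(i)}_k)_s^t = { |\Orb(t)| \over |\Orb(s)| } \cdot \#\{ (\sigma, b) \in \cC_k \times \Orb(s) : \text{the product pair lies in } \Orb(t)\},
\eea
and by the same orbit-counting argument used repeatedly in Section~\ref{integral} this collapses to an honest non-negative integer: the count of how many elements $\sigma\in\cC_k$ send a fixed representative of orbit $s$ into orbit $t$ (by the appropriate left multiplication in tensor slot(s) $i$). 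Non-negativity is immediate since it is a cardinality; integrality is the content one must verify, and it follows because the $|\Orb(t)|/|\Orb(s)|$ factor is exactly cancelled by an orbit-stabilizer bookkeeping identity, precisely as in the derivation of \eqref{ErEsN}.

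I expect the main obstacle to be the bookkeeping that turns the a priori rational expression $\tfrac{|\Orb(t)|}{|\Orb(s)|}\times(\text{count})$ into a manifest integer — i.e. showing the ratio of orbit sizes is absorbed without residue. This is the same subtlety already handled in Section~\ref{integral}, so the honest statement of the plan is: cite \eqref{ErEsN} (which is proved there, with exactly this integrality already established for arbitrary $E_r, E_s$), identify $T_k^{(i)}$ with a scalar multiple $c\,E_{r(i,k)}$ where $c\in\mZ_{\ge 0}$, and conclude $(\cM^{(i)}_k)_s^t = c\cdot{\rm Coeff}(E_t, E_{r(i,k)} E_s)$ is a product of two non-negative integers. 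The only genuinely new verification is that the proportionality constant $c$ is a positive integer, which follows by comparing \eqref{tk}--\eqref{tki} with \eqref{classr}: the representative pair for $T_k^{(i)}$ has automorphism group of some order $|\Aut(r(i,k))|$, $c = |T_k|\cdot|\Aut(r(i,k))|/n!$ in slots where one permutation is a $k$-cycle — and this is $1$ in the relevant cases because $\cC_k$ is a single conjugacy class whose stabilizer under conjugation is the centralizer, matching $\Aut(r(i,k))$; the factor works out to $c = 1$ for $i=1,2$ and for $i=3$ similarly $c=1$ (each $T_k^{(i)}$ is exactly one $E_r$ scaled so that its coefficient vector has entries $|\Orb(r)|/n!$ on each orbit element). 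Hence $(\cM^{(i)}_k)_s^t = {\rm Coeff}(E_t, E_{r(i,k)} E_s)$, which is a non-negative integer by Section~\ref{integral}.
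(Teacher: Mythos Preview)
Your overall strategy matches the paper's exactly: identify $T_k^{(i)}$ as a positive integer multiple of a single geometric basis vector $E_{r(k,i)}$, then invoke the product formula \eqref{ErEsN}. However, your final paragraph contains two compensating errors that leave the argument formally broken.

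First, the proportionality constant is not $c=1$. From \eqref{Err}, $E_r = \tfrac{1}{|\Orb(r)|}\sum_{a\in\Orb(r)}\sigma^{(r)}(a)$, while $T_k^{(i)}$ is the \emph{unnormalised} orbit sum $\sum_{a\in\Orb(r(k,i))}\sigma^{(r(k,i))}(a)$. Hence $T_k^{(i)} = |\Orb(r(k,i))|\,E_{r(k,i)}$, exactly as the paper states. Your computation $c = |T_k|\cdot|\Aut(r(i,k))|/n!$ actually equals $|T_k|\cdot\tfrac{n!/|T_k|}{n!}=1$ only if you mis-identify what $c$ is supposed to be; the correct value is $c=|\Orb(r(k,i))|$ (which happens to equal $|T_k|$ for $i=1,2$).

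Second, Section~\ref{integral} does \emph{not} establish that ${\rm Coeff}(E_t, E_r E_s)$ is itself a non-negative integer. Equation \eqref{ErEsN} gives it as $\tfrac{1}{|\Orb(r)|}\times(\text{a count})$, which is a priori only rational. The paper is careful to say the structure constants are ``expressed in terms of'' non-negative integers, not that they \emph{are} integers.

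The correct mechanism, which you actually state in your middle paragraph before abandoning it, is that these two facts combine: $(\cM^{(i)}_k)_s^t = |\Orb(r(k,i))|\cdot{\rm Coeff}(E_t, E_{r(k,i)} E_s) = |\Orb(r(k,i))|\cdot\tfrac{1}{|\Orb(r(k,i))|}\sum_{a}\delta(\cdots)$, and the orbit-size factor cancels exactly, leaving the bare count \eqref{eqnMst}. That is the whole proof; your detour through ``$c=1$'' and ``${\rm Coeff}$ is an integer'' should be deleted.
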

\begin{proof}
The $T_k^{(i)} $ are proportional to instances of the geometric basis vectors
 $  E_r$ \eqref{keyEqs}  obtained by summing over  diagonal conjugations of
permutations of the form $ \sigma \otimes 1 , 1 \otimes \sigma  , \sigma \otimes \sigma $, where $ \sigma $ is
a cyclic permutation of a subset of $k$ numbers from $\{ 1,2, \cdots , n \}$. Using the correspondence (section \ref{sec:Counting})  between ribbon graphs and permutations, they each correspond to a ribbon graph. Each $T_k^{(i)} $ corresponds to a ribbon graph, with some  label  $r$ which we will call $r(k,i)$. The proportionality constant is given as
\bea
T_k^{(i)} = | \Orb ( r (k,i) ) | ~E_{ r (k,i)} 
\eea
since the $T_k^{(i)}$ are equal to a sum of elements in an orbit generated by the diagonal conjugations, while $E_r$ are defined to be such sums normalized by the orbit size. The formula \eqref{ErEsN} for the algebra product in the geometric basis then implies that 
\bea\label{eqnMst}  
&& (\cM^{ (i)}_k )_s^t = \hbox{ Number of times the multiplication of elements in the sum $T_{k}^{(i)} $ }\cr 
&&  \hbox{ with a fixed element in orbit $s$ to the right  produces an element in orbit $t$.} \cr 
&& 
\eea 
\end{proof}

\begin{proposition}\label{propTkHerm}
$T_k^{(i)} $ are Hermitian operators on $\cK(n)$ in the inner product defined by \eqref{def:innerprod} :  
\bea\label{hermsym}  
 g ( E_s , T_k^{ (i)} E_r )= g ( T_k^{ (i)} E_s , E_r )  \, . 
\eea
\end{proposition}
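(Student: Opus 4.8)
The plan is to reduce the claimed identity \eqref{hermsym} to two facts already in hand: the relation $g(\alpha,\beta) = \bdel_2(\overline{S(\alpha)}\beta)$ of \eqref{gdelS}, and the fact that the conjugation $S$ is an anti-automorphism, $S(AB) = S(B)S(A)$. The essential input beyond these is that each reconnection operator is \emph{self-conjugate} under $S$. Concretely, from the definition $S(\sigma\otimes 1) = \sigma^{-1}\otimes 1$, $S(1\otimes\sigma) = 1\otimes\sigma^{-1}$, $S(\sigma\otimes\sigma) = \sigma^{-1}\otimes\sigma^{-1}$, and since $\sigma\mapsto\sigma^{-1}$ permutes the conjugacy class $\cC_k$ (the inverse of a permutation whose cycle type is a single $k$-cycle together with fixed points has the same cycle type), each of the sums in \eqref{tki} is invariant; hence $S(T_k^{(i)}) = T_k^{(i)}$, and because the coefficients of $T_k^{(i)}$ are integers, also $\overline{T_k^{(i)}} = T_k^{(i)}$.

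\textbf{The computation.} Applying \eqref{gdelS} with $\alpha = E_s$ and $\beta = T_k^{(i)}E_r$ gives $g(E_s, T_k^{(i)}E_r) = \bdel_2\big(\overline{S(E_s)}\,(T_k^{(i)}E_r)\big)$. Applying \eqref{gdelS} instead with $\alpha = T_k^{(i)}E_s$ and $\beta = E_r$ gives $g(T_k^{(i)}E_s, E_r) = \bdel_2\big(\overline{S(T_k^{(i)}E_s)}\,E_r\big)$; now using the anti-homomorphism property $S(T_k^{(i)}E_s) = S(E_s)\,S(T_k^{(i)}) = S(E_s)\,T_k^{(i)}$ together with $\overline{T_k^{(i)}} = T_k^{(i)}$ one gets $\overline{S(T_k^{(i)}E_s)} = \overline{S(E_s)}\,T_k^{(i)}$, so $g(T_k^{(i)}E_s, E_r) = \bdel_2\big((\overline{S(E_s)}\,T_k^{(i)})\,E_r\big)$. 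The two resulting arguments of $\bdel_2$ are the same element $\overline{S(E_s)}\,T_k^{(i)}\,E_r$ of $\mC(S_n)\otimes\mC(S_n)$ — only associativity of the product is needed to match the two bracketings — so \eqref{hermsym} holds. Sesquilinearity of $g$ then promotes the identity from the basis vectors $E_r,E_s$ to all of $\cK(n)$, establishing hermiticity of $T_k^{(i)}$.

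\textbf{Main obstacle and alternative.} There is no serious obstacle: the only substantive point is the self-conjugacy $S(T_k^{(i)}) = T_k^{(i)}$, i.e. that $\cC_k$ is closed under inversion, and the rest is formal manipulation with \eqref{gdelS} and the properties of $S$; one should just be careful about the interplay of $S$, complex conjugation, and the order reversal in $S(AB)=S(B)S(A)$. A more hands-on alternative, which I would mention but not adopt, is purely combinatorial: using \eqref{Ter}, \eqref{eqnMst} and the orthogonality \eqref{gErEs}, the identity \eqref{hermsym} is equivalent to the detailed-balance relation $|\Orb(r)|\,(\cM_k^{(i)})_r^{\,s} = |\Orb(s)|\,(\cM_k^{(i)})_s^{\,r}$, which can be verified by matching, to each permutation witnessing the left count of \eqref{eqnMst}, its inverse (again invoking that $T_k^{(i)}$ is built from the inversion-closed set $\cC_k$); the argument via $S$ is shorter and cleaner, so that is the one I would write out.
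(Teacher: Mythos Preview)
Your proof is correct and uses essentially the same approach as the paper: the key input is $S(T_k^{(i)})=T_k^{(i)}$ together with the relation \eqref{gdelS} and the anti-automorphism property of $S$. Your arrangement is in fact slightly more direct than the paper's, which first establishes $g(E_r,T_k^{(i)}E_s)=g(E_s,T_k^{(i)}E_r)$ via $\bdel_2(S(a))=\bdel_2(a)$ and then invokes reality of the coefficients to swap the arguments of $g$; you bypass that detour by expanding both sides of \eqref{hermsym} through \eqref{gdelS} and matching them by associativity.
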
  
\begin{proof}
Using \eqref{gErEs} and \eqref{Ter} we evaluate 
\bea\label{gesr}  
g ( E_s , T_k^{ (i)} E_r ) = ( \cM^{(i)}_{k} )_{ r}^s  { 1 \over | \Orb (s) | } 
\eea
By renaming $ r , s $, we have 
\bea 
g  ( E_r  , T_k^{ (i)} E_s ) = ( \cM^{(i)}_{k} )_{ s}^r   { 1 \over | \Orb (r) | } 
   \label{geres}
\eea
Using definition of $g$ in \eqref{def:innerprod} and of  $\bdel_2$ in \eqref{bdelta}, 
the following is true
\bea 
g(\alpha, \beta) = \bdel_2( \overline{\alpha},\beta) \,. 
\eea
Using this relation between the inner product and the delta function, 
we have 
\bea\label{startSeq} 
 g ( E_r  , T_k^{ (i)} E_s )    =  \bdel_2  ( \overline{E_r}, ( T_k^{(i)} E_s) ) 
\eea
From the definition of $S$ \eqref{inverse}, note that $\bdel_2(S(a),id)= \bdel_2(a,id)$, 
and $$
\bdel_2(a,b)= \bdel_2(aS(b),id) = \bdel_2(S(b)a,id).
$$
Moreover, $S(T_k)= T_k$ and $ S ( AB ) =  S ( B ) S ( A )$  imply 
\begin{equation} 
\bdel_2  ( \overline{E_r}, T_k^{(i)} E_s ) = \bdel_2  ( S(T_k^{(i)} E_s) \overline{E_r},id  )  
= \bdel_2(  S(E_s) T_k^{(i)} \overline{E_r}, id)  =\bdel_2(  T_k^{(i)} \overline{E_r},  E_s)  
  = g( \overline{T_k^{(i)} \overline{E_r}},  E_s) 
\end{equation} 
 Next observe that $\alpha = T_k^{(1)}E_r$ and $\beta = E_s$
have all real coefficients  in the geometric ribbon graph basis, 
\bea 
 g( \overline{T_k^{(i)} \overline{E_r}},  E_s) = g(T_k^{(i)}E_r, E_s)
\eea  
This sequence of steps  starting from \eqref{startSeq} shows that 
\bea 
 g ( E_r  , T_k^{ (i)} E_s )  = g(T_k^{(i)}E_r, E_s)
\eea
This proves the proposition. 
\end{proof} 

\noindent
{\bf Remark} \\
The matrix elements of $T_k^{(i)}$ in the non-orthogonal basis 
$E_r$ are not symmetric. Indeed from \eqref{gesr}, \eqref{geres}, and \eqref{hermsym} we have 
\bea \label{morb}
 ( \cM_{ k}^{ (i)} )_r^s  = | \Orb (s) |   ( \cM_k^{(i)} )_s^r  { 1 \over | \Orb (r) | } 
\eea
If we consider instead the matrix elements of $ T_k^{(i)} $ on the orthonormal basis vectors 
 $e_r = \sqrt{(| \Orb(r) | }E_r$, 
\bea
T_k^{(i)} e_r =   \sqrt{ | \Orb(r) | } ( \cM_{ k}^{(i)} )_r^s  { 1 \over \sqrt { | \Orb(s) | } }  e_s 
\eea
These matrix elements are symmetric under exchange of $r$ and $s$.

\noindent 
{\bf Remark} \\
The operators $T_k^{(i)} $, as $i$ ranges over $\{ 1,2, 3 \}$ and $k$ ranges over some subset of $\{ 2,3, \cdots , n  \}$ form a set of commuting Hermitian operators on $\cK(n)$. The commutativity follows from the fact that they are central elements of $ \cK(n)$, the hermiticity from Proposition \ref{propTkHerm}.   Considering such sets of operators as Hamiltonians defining a time evolution of states in $\cK ( n ) $ we have time-dependent ribbon graph states of 
the form 
\bea 
E_r (t) = e^{ - i t T_{ k}^{(i)} } E_r 
\eea
In section \ref{primes} we will construct Hamiltonians which are  particular linear combinations of these operators, and  have the property that their eigenvalue degeneracies are Kronecker coefficients. In order to build up to this, we will now consider the action of the $T_k^{(i)} $ operators on the Fourier basis of $ \cK(n)$.

\begin{comment} 
We will be considering more general Hamiltonian operators of the form $ \cH = \sum_{ k , i } a_{k,i} T_k^{(i)} $
for appropriate integers $ a_{k,i}$. Anticipating a bit on the following
sections, $\cH$ possesses the property that its eigenvectors are precisely 
defined by the Fourier basis $Q$ of $ \cK ( n )$ (see \eqref{taQI} in the next section).  
This will make the representation basis $Q$ (a definition will follow) crucial 
for our study. 
\end{comment} 

\section{ Integer matrices and Kronecker coefficients} 
\label{sec:IntegerMatricesKron}

In  this section we will consider the action of the reconnection operators $T_{k}^{(i)} $ introduced in section 
\ref{sec:CentreKnReconn} on the elements   $ Q^{R_1, R_2, R_3}_{ \tau_1 , \tau_2 } $ of the  Fourier basis set for $\cK(n)$ described in section \ref{Fourier}. The subspace  of $\cK(n)$ spanned by the Fourier basis 
 elements for a fixed ordered Young diagram triple $(R_1 , R_2  , R_3 )$ has dimension equal to $ C ( R_1 , R_2 , R_3 )^2$, the square of the Kronecker coefficient for the triple. We will refer to such a subspace as the Fourier subspace of $ \cK(n)$ associated with the triple $ ( R_1 ,  R_2 , R_3 )$.  We will show (section \ref{sec:reconn})   that the Fourier subspace for a triple form an  eigenspace  of the reconnection operators. The eigenvalues are normalized characters of the symmetric group, with rational values known from symmetric group representation theory. 
 
 We work with a set of reconnection operators chosen such that their eigenvalues uniquely specify the Young diagram triple.  This means that, although the Fourier basis was initially defined using matrix elements and Clebsch-Gordan coefficients for symmetric groups (equation \eqref{qbasis}) , we can use the reconnection operators and the eigenvalues as input, and compute  the Fourier subspace  for a fixed triple of Young diagrams directly as the null space of an integer matrix built from the reconnection operators and the eigenvalues specifying a Young diagram triple. This gives a  computational approach for the Fourier subspace associated with a Young diagram triple without using the detailed representation theory input of matrix elements and Clebsch-Gordan coefficients: we are only using the coarser input of character formulae (having known combinatorial algorithms for their computation) alongside the combinatorially defined reconnection operators. The first algorithm based on this approach  (section \ref{sec:reconn}) amounts to finding the null vectors of a rectangular  integer matrix. In section \ref{primes} we give the construction of quantum mechanical Hamiltonians which are integer  linear combinations of the reconnection operators  $T_k^{(i)}$ and have eigenvalues that  uniquely specify a Young diagram triple. This leads to an algorithm which obtains the Fourier subspace for a specified triple as the null space of a square integer matrix. 

\subsection{Fourier subspace of a Young diagram  triple as eigenspace of reconnection operators }
\label{sec:FSEig}

 \begin{proposition}
   \label{propTaQo}
 For all $k \in \{ 2, 3, \cdots  n \}$,  $\{ R_i \vdash n : i \in \{  1,2,3\}  \} $, $\tau_1, \tau_2  \in  [\![1, C(R_1,R_2,R_3) ]\!]$,  the Fourier basis elements 
  $ Q^{R_1, R_2, R_3}_{\tau_1 , \tau_2}$ are   eigenvectors of $T_k^{(i)} $: 
\bea
&&  T_k^{(1)} Q^{R_1, R_2, R_3}_{ \tau_1 , \tau_2 }  = (\sum_{\s \in \cC_k} \s \otimes 1 ) Q^{R_1, R_2, R_3}_{ \tau_1 , \tau_2 }   = { \chi_{R_1} ( T_k ) \over d(R_1) }   Q^{R_1, R_2, R_3}_{ \tau_1 , \tau_2 } \,,
\label{t1Qo} \\
&&  T_k^{(2)}   Q^{R_1, R_2, R_3}_{ \tau_1 , \tau_2 } 
=  (\sum_{\s \in \cC_k } 1 \otimes \s ) Q^{R_1, R_2, R_3}_{ \tau_1 , \tau_2 }   = 
 { \chi_{R_2} ( T_k ) \over d(R_2) } Q^{R_1, R_2, R_3}_{ \tau_1 , \tau_2 } \,,
\label{t2Qo}  \\
&&  T_k^{(3)} Q^{R_1, R_2, R_3}_{ \tau_1 , \tau_2 } =  (\sum_{\s \in \cC_k } \s \otimes \s ) Q^{R_1, R_2, R_3}_{ \tau_1 , \tau_2 } = { \chi_{R_3} ( T_k ) \over d(R_3) }    Q^{R_1, R_2, R_3}_{ \tau_1 , \tau_2 } \,. 
\label{t3Qo} 
\eea
 \end{proposition}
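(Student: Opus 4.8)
The plan is to substitute the explicit formula \eqref{qbasis} for $Q^{R_1,R_2,R_3}_{\tau_1,\tau_2}$ into the left-hand sides of \eqref{t1Qo}--\eqref{t3Qo}, apply $T_k^{(i)} = \sum_{\sigma\in\cC_k}(\cdots)$ term by term, and absorb the cycle-type permutation $\sigma\in\cC_k$ into the summation variables of \eqref{qbasis} by a change of variables. What is left behind is a sum $\sum_{\sigma\in\cC_k}D^{R}(\sigma^{-1})$ contracted into the rest of the expression, and this is evaluated using Schur's lemma for the relevant irrep (for $i=1,2$) or the intertwining property of the Clebsch--Gordan coefficients (for $i=3$).

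Concretely, for $T_k^{(1)}=\sum_{\sigma\in\cC_k}\sigma\otimes 1$ I would write out $T_k^{(1)}Q^{R_1,R_2,R_3}_{\tau_1,\tau_2}$, replace $\sigma_1\mapsto\sigma^{-1}\sigma_1$ in the sum over $\sigma_1\in S_n$, and use the homomorphism property $D^{R_1}_{i_1j_1}(\sigma^{-1}\sigma_1)=\sum_{l_1}D^{R_1}_{i_1l_1}(\sigma^{-1})D^{R_1}_{l_1j_1}(\sigma_1)$. The sum over $\sigma\in\cC_k$ then hits only $\sum_{\sigma\in\cC_k}D^{R_1}_{i_1l_1}(\sigma^{-1})$; since $\cC_k$ is closed under inversion this equals $D^{R_1}_{i_1l_1}(T_k)$, and since $T_k$ is central in $\mC(S_n)$ Schur's lemma gives $D^{R_1}(T_k)=\tfrac{\chi_{R_1}(T_k)}{d(R_1)}\,\id$. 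The surviving $\delta_{i_1l_1}$ merely relabels the Clebsch--Gordan index $i_1$ and reassembles $Q^{R_1,R_2,R_3}_{\tau_1,\tau_2}$, leaving the eigenvalue $\chi_{R_1}(T_k)/d(R_1)$. The case $i=2$ is identical with the two tensor factors exchanged.

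For $i=3$, $T_k^{(3)}=\sum_{\sigma\in\cC_k}\sigma\otimes\sigma$, I would make the simultaneous substitution $\sigma_1\mapsto\sigma^{-1}\sigma_1$, $\sigma_2\mapsto\sigma^{-1}\sigma_2$ and expand both matrix elements; the $\sigma$-sum now produces $\sum_{\sigma\in\cC_k}D^{R_1}_{i_1l_1}(\sigma^{-1})D^{R_2}_{i_2l_2}(\sigma^{-1})$ contracted against $C^{R_1,R_2;R_3,\tau_1}_{i_1,i_2;i_3}$. The key step is to invoke the Clebsch--Gordan intertwining identity (in the form established in the appendices of \cite{PCAKron}), $\sum_{i_1,i_2}D^{R_1}_{l_1i_1}(g)D^{R_2}_{l_2i_2}(g)\,C^{R_1,R_2;R_3,\tau}_{i_1,i_2;i_3}=\sum_{i_3'}C^{R_1,R_2;R_3,\tau}_{l_1,l_2;i_3'}D^{R_3}_{i_3'i_3}(g)$, after using reality/orthogonality of the $D^R$ to pass between $D^R_{ij}(\sigma)$ and $D^R_{ji}(\sigma^{-1})$; summing over $\sigma\in\cC_k$ converts $\sum_{\sigma}D^{R_3}_{i_3'i_3}(\sigma)$ into $\tfrac{\chi_{R_3}(T_k)}{d(R_3)}\delta_{i_3'i_3}$ by Schur's lemma applied to $R_3$, and the remaining Clebsch--Gordan coefficient together with $D^{R_1}_{l_1j_1}(\sigma_1)D^{R_2}_{l_2j_2}(\sigma_2)$ rebuilds $Q^{R_1,R_2,R_3}_{\tau_1,\tau_2}$. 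Equivalently, one can argue structurally: the $T_k^{(i)}$ are central in $\cK(n)$, hence act as scalars on each Wedderburn--Artin block $M_{C(R_1,R_2,R_3)}(\mC)$, and the scalars are read off inside $\mC(S_n)\otimes\mC(S_n)\cong\bigoplus_{R_1,R_2}M_{d(R_1)}\otimes M_{d(R_2)}$, where $T_k^{(1)}\mapsto\tfrac{\chi_{R_1}(T_k)}{d(R_1)}\id$, $T_k^{(2)}\mapsto\tfrac{\chi_{R_2}(T_k)}{d(R_2)}\id$, and $T_k^{(3)}=(D^{R_1}\otimes D^{R_2})(T_k)$ is diagonalized by the Clebsch--Gordan decomposition of $R_1\otimes R_2$ with value $\tfrac{\chi_{R_3}(T_k)}{d(R_3)}$ on the $R_3$-isotypic component.

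I expect the main obstacle to be purely bookkeeping: getting the index placement in the Clebsch--Gordan intertwining identity exactly right (which slot is contracted, and whether the transpose/conjugate version is needed), and invoking the reality and orthogonality properties of the $D^R$ consistently so that the $\sigma\mapsto\sigma^{-1}$ relabelling is justified. The representation-theoretic content is just Schur's lemma for the central elements $T_k$ in the appropriate irreps; the only point where anything beyond "$T_k$ is central in one $\mC(S_n)$ factor" enters is the $i=3$ case, where one needs that $T_k$ pushed through $R_1\otimes R_2$ is diagonal on the Clebsch--Gordan decomposition with the $R_3$-block eigenvalue equal to the normalized $R_3$-character.
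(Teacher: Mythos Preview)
Your proposal is correct and follows essentially the same route as the paper's proof (Appendix~\ref{app:geombasis}): Schur's lemma for the central element $T_k$ in the relevant factor handles $i=1,2$, and the Clebsch--Gordan intertwining identity \eqref{ddc=cd} together with Schur's lemma in $R_3$ handles $i=3$. The only cosmetic difference is that the paper first introduces the single-factor Fourier basis $Q^{R}_{ij}$ and the rule $\tau\,Q^{R}_{ij}=\sum_l D^{R}_{li}(\tau)\,Q^{R}_{lj}$ (Lemma~\ref{lemTQ}), whereas you perform the change of variables $\sigma_a\mapsto\sigma^{-1}\sigma_a$ directly in the sum defining $Q^{R_1,R_2,R_3}_{\tau_1,\tau_2}$; these are two packagings of the same computation.
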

The proof is given in appendix \ref{app:geombasis}.  
Note that the eigenvalues do not depend on  the multiplicity
indices $\tau_1$ and $ \tau_2$, but only on the Young diagram labels $ ( R_1 , R_2 , R_3 )$. 
The proof of Proposition \ref{propTaQo} relies on representation theoretic arguments.

\begin{proposition} 
\label{PropTkFS} 
For any $\widetilde k_*  \in \{ k_*(n) , k_*(n) +1 , \cdots , n \} $ the list of eigenvalues of the 
reconnection operators 
$ \{  T^{(1)}_{2} , T^{(1)}_{ 3} , \cdots , T^{(1)}_{ \widetilde k_* } ;  T^{(2)}_{2} , T^{(2)}_{ 3} , \cdots , T^{(2)}_{ \widetilde k_* }  ; T^{(3)}_{2} , T^{(3)}_{ 3} , \cdots , T^{(3)}_{ \widetilde k_*}   \}$  
uniquely  determines the Young diagram triples $(R_1 , R_2 , R_3 )$.
\end{proposition}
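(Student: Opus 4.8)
The plan is to reduce the statement to the defining property of $k_*(n)$ recalled in section~\ref{sec:CentreKnReconn}: the ordered tuple of normalized characters $(\widehat{\chi}_R(T_2), \widehat{\chi}_R(T_3), \ldots, \widehat{\chi}_R(T_{k_*(n)}))$ uniquely identifies the Young diagram $R \vdash n$. First I would fix a triple $(R_1,R_2,R_3)$ with $C(R_1,R_2,R_3)\neq 0$ and use proposition~\ref{propTaQo} to record that every vector in the Fourier subspace of this triple is a simultaneous eigenvector of all the operators in the chosen family, the eigenvalue of $T_k^{(i)}$ being $\widehat{\chi}_{R_i}(T_k)$ --- a number depending only on $R_i$, and not on $\tau_1,\tau_2$ nor on the other two Young diagrams. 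Hence to the triple the family associates the eigenvalue data
\[
\Big( (\widehat{\chi}_{R_1}(T_k))_{k=2}^{\widetilde k_*},\ (\widehat{\chi}_{R_2}(T_k))_{k=2}^{\widetilde k_*},\ (\widehat{\chi}_{R_3}(T_k))_{k=2}^{\widetilde k_*} \Big),
\]
and the claim of the proposition is exactly that this assignment is injective.

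Next I would prove injectivity directly. Suppose two triples $(R_1,R_2,R_3)$ and $(R_1',R_2',R_3')$ produce the same eigenvalue data. Then $\widehat{\chi}_{R_i}(T_k) = \widehat{\chi}_{R_i'}(T_k)$ for every $i\in\{1,2,3\}$ and every $k\in\{2,\ldots,\widetilde k_*\}$. Since $\widetilde k_* \ge k_*(n)$, this equality in particular holds for all $k \le k_*(n)$, so the tuples $(\widehat{\chi}_{R_i}(T_2),\ldots,\widehat{\chi}_{R_i}(T_{k_*(n)}))$ and $(\widehat{\chi}_{R_i'}(T_2),\ldots,\widehat{\chi}_{R_i'}(T_{k_*(n)}))$ coincide; by the defining property of $k_*(n)$ from \cite{KR1911} this forces $R_i = R_i'$ for each $i$, hence $(R_1,R_2,R_3)=(R_1',R_2',R_3')$. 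This shows the list of eigenvalues determines the triple, and moreover each of the three blocks of the list separately recovers one of the $R_i$.

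I do not expect a serious obstacle: the content of the proposition is carried entirely by proposition~\ref{propTaQo} (proved in the appendix) together with the characterization of $k_*(n)$. The only points requiring a word of care are logical rather than computational: one should state precisely that "the list of eigenvalues determines the triple" means injectivity of the displayed assignment, and one should note explicitly that truncating the index range from $\{2,\ldots,n\}$ down to $\{2,\ldots,\widetilde k_*\}$ loses no information precisely because $\widetilde k_* \ge k_*(n)$ --- which is why the hypothesis on $\widetilde k_*$ is exactly the one needed. If one wants the sharpest formulation, one can also remark that the eigenvalues are in fact rational numbers computable by the Murnaghan--Nakayama rule, but this is not needed for the uniqueness statement itself.
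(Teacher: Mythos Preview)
Your proposal is correct and follows essentially the same route as the paper: both invoke proposition~\ref{propTaQo} to identify the eigenvalue list with the three tuples of normalized characters $(\widehat{\chi}_{R_i}(T_k))_{k=2}^{\widetilde k_*}$, and then appeal to the defining property of $k_*(n)$ from \cite{KR1911} (together with $\widetilde k_* \ge k_*(n)$) to conclude that each tuple determines its Young diagram, hence the triple. Your write-up is in fact slightly more explicit about the injectivity step than the paper's.
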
 

\begin{proof}
It was shown in \cite{KR1911} that the normalized characters $\{ {\chi_R (T_2) \over d(R)} , { \chi_{R} (T_3 ) \over d(R) } , \cdots , { \chi_R ( T_n ) \over d(R) }   \} $ form  ordered lists of numbers which distinguish Young diagrams $R$ with $n$ boxes.  For all $ n >2$ it was shown that the shorter list  $\{ {\chi_R (T_2) \over d(R)} , { \chi_{R} (T_3 ) \over d(R) } , \cdots , { \chi_R ( T_{n-1} ) \over d(R) }   \}$ distinguishes Young diagrams. This result follows from the fact that the central elements $ \{ T_2 , \cdots , T_{n-1} \} $ generate the centre of $\mC ( S_n)$ for $n >2$. It was found that there generically exist $k_*(n) <n-1$ such that the shorter lists 
 $\{ {\chi_R (T_2) \over d(R)} , { \chi_{R} (T_3 ) \over d(R) } , \cdots , { \chi_R ( T_{ k_*(n)} ) \over d(R) }   \}$ distinguish Young diagrams. The values of $k_*(n)$ computed for all $n$ up to $79$ are given in \eqref{kstar}.
 While the general $k_*(n)$ are not currently known, any $  n \ge \widetilde k_* \ge k_*(n) $ gives a longer list which distinguishes Young diagrams.  A triple of lists of length $\widetilde k_* $ distinguishes a triple of Young diagrams. 
 Using Proposition \ref{propTaQo}, these are the sets of eigenvalues of the operators $ \{  T^{(1)}_{2} , T^{(1)}_{ 3} , \cdots , T^{(1)}_{ \widetilde k_* } ;  T^{(2)}_{2} , T^{(2)}_{ 3} , \cdots , T^{(2)}_{ \widetilde k_* }  ; T^{(3)}_{2} , T^{(3)}_{ 3} ,$ $  \cdots ,  T^{(3)}_{ \widetilde k_*}   \}$.  
\end{proof} 

\begin{lemma}
\label{IntegralityOfNormChar}
The  sum of all permutations $ \sigma $ in the conjugacy  class $ C_{ p } $ in $ S_n$ for partition 
$p $ are central elements in $ \cZ ( \mC ( S_n)) $. The irreducible normalized 
 characters of these central elements are integers : 
\bea 
{ \chi^R ( T_{ p } ) \over d ( R )  }  \in \mZ 
\eea
\end{lemma}

The proof combines a known number theoretic fact about the normalized characters of  a finite group 
 being algebraic integers \cite{Simon}, along with the rationality of characters of irreducible representations of $ S_n$
 which follows from the Murnaghan-Nakayama Lemma. 
\begin{proof} 
The elements $T_p$ as $p$ runs over the classes form a basis for $ \cZ ( \mC ( S_n) ) $.  The structure constants 
of the multiplication are defined by 
\bea
T_p T_q = \sum_{ r } C_{ pq}^r T_r 
\eea
These structure constants $C_{ pq}^r $ are integers. The normalized characters ${ \chi^R ( T_p ) \over d( R )    }$    are eigenvalues of the matrix defined by $C_{pq}^r $ for fixed $p$. The eigenvalues of an integer matrix are algebraic integers (see e.g. Proposition III.4.3  \cite{Simon}). In the case of symmetric groups, we know that $ \chi^R ( \sigma )$ for 
$ \sigma \in C_{ p  } $ is an integer by using the Murnaghan-Nakayama Lemma. It follows that the normalized characters $ { \chi^R ( T_p) \over d( R ) }  $ are rational. A rational number which is also an algebraic integer is necessarily an integer. This means that the normalized characters are integers, for any conjugacy class $p$. 
\end{proof}

In particular for the partitions of the form $ [ k , 1^{ n-k} ] $ the normalized characters 
\bea 
{ \chi^R ( T_{ k  } ) \over d ( R ) }  \in \mZ 
\eea 
Using the formulae for the normalized characters in \cite{Lasalle,CGS}
 the normalized characters for $T_2$ (as well as $T_3,T_4,T_5,T_6$) are evidently integers 
for any Young diagram.  For higher higher $T_k$, the integrality is not evident from the formulae, but hold from the above argument.

\subsection{Fourier subspace of triple  as null-space of rectangular  integer matrices}
\label{sec:reconn}

It is useful to recall from equation \eqref{eqnMst} and Proposition \ref{PropIntTk}
reproduced here for convenience (with a slight change in index labels ) : 
\bea 
T_k^{(i)} E_r = \sum_{ s } (\cM^{ (i)}_k )_r^s  E_s
\eea 
with 
\bea 
&& (\cM^{ (i)}_k )_r^s = \hbox{ Number of times the multiplication of elements in the sum $T_{k}^{(i)} $ }\cr 
&&  \hbox{ with a fixed element in orbit $r$ to the right  produces an element in orbit $s$.} \cr 
&& 
\eea 
Using the definition of $T_k^{(i)}$, this means that 
\bea 
&& (\cM^{(1)}_k )^{ s }_{ r }   = \sum_{ \gamma \in \cC_k } 
 \delta ( \Orb ( s )  , \Orb ( \gamma \tau_1^{(r)}  \otimes \tau_2^{(r)}  )  ) \cr 
&& 
 (\cM^{(2)}_k )^{ s }_{ r }   = \sum_{ \gamma \in \cC_k } 
 \delta ( \Orb ( s )  , \Orb (  \tau_1^{(r)}  \otimes \gamma \tau_2^{(r)}  )  ) \cr
 &&
  (\cM^{(3)}_k )^{ s }_{ r }   = \sum_{ \gamma \in \cC_k } 
 \delta ( \Orb ( s )  , \Orb ( \gamma \tau_1^{(r)}  \otimes \gamma \tau_2^{(r)}  )  ) 
\eea

The integer matrices $\cM^{(i)}_k$ of $ T_k^{(i)} $ are  constructed  in Code2 of the 
 appendix \ref{app:gap}, see the function \verb|ArrayTi(n, kmax)|.   Note also that we have the following relations
\bea
\sum_{s}  (\cM^{(i)}_k )^{ s }_{ r }=  |T_k|  = { n! \over k (n-k)! }   \; , 
\eea
where $|T_k|$ is the number of terms in  the sum $T_k$, see the  discussion 
after \eqref{tk}. Thus each column of $ (\cM^{(i)}_k )^{ s }_{ r }$ is a list of non-negative integers adding up  to 
$|T_k|$. 

\subsubsection{Stacking $T_k^{(i)}$ matrices and common eigenspace}

Using Proposition \ref{PropTkFS}, the Fourier subspace for a given triple $ ( R_1 , R_2 , R_3 )$ 
is uniquely specified as  common eigenspace of the operators $T^{(i)}_k$, for 
$k \in \{ 2, \dots, \tilde k_*(n) \} $ and $i\in \{  1,2,3 \} $; with $ k_*(n)  \le \widetilde k_* \le n  $, 
with specified eigenvalues for these reconnection operators, which are known from symmetric group representation theory. These eigenvalues are normalized characters which can be combinatorially computed in at least two known ways. 
The numerator $ \chi_R   ( T_k)$ is given by $\chi_R   ( T_k) = |T_k| \chi_R ( \sigma ) $ for $ \sigma \in \cC_k$. The character $ \chi_R ( \sigma )$ can be computed with the combinatorial Murnaghan-Nakayama rule  \cite{MurnaghanOnReps} \cite{Nak}. The dimension $d(R)$ is obtained from the hook formula for dimensions. Another combinatorial formula gives $ { k \chi_R ( T_k ) \over d (R) } $ \cite{StanleyConjecture}\cite{FerayProof}.

The vectors in the  Fourier subspace for a triple $(R_1, R_2 , R_3)$ solve the following matrix equation  
\bea\label{stack}
 \left[\begin{array}{c}\
\cM^{(1)}_2  - { \chi_{R_1} ( T_2 ) \over d({R_1}) }  \\
\vdots \\
\cM^{(1)}_{\widetilde k_*} -  { \chi_{R_1} ( T_{\widetilde k} ) \over d({R_1}) } \\
\cM^{(2)}_2  -  { \chi_{R_2} ( T_2 ) \over d({R_2}) } \\
\vdots \\
\cM^{(2)}_{\widetilde k_* } - { \chi_{R_2} ( T_{\widetilde k_*} ) \over d({R_2}) }  \\
\cM^{(3)}_2   -  { \chi_{R_3} ( T_2 ) \over d({R_3}) }  \\
\vdots \\
\cM^{(3)}_{\widetilde k_* } - { \chi_{R_3} ( T_{\widetilde k_* } ) \over d({R_3}) }  
\end{array}\right] 
 \cdot v 
  = {\bold 0 } 
\eea
This rectangular array gives the matrix elements of a linear operator mapping  $ \cK(n)$ 
to $3 (\widetilde k_* -1 ) $ copies of  $ \cK(n) $,  using the geometric basis of ribbon graph vectors for $ \cK(n)$. From  Lemma \ref{IntegralityOfNormChar}, the normalized characters are integers.  Renaming as 
$\cL_{ R_1 , R_2 , R_3 } $ the integer matrix in  \eqref{stack} we have 
\bea\label{RecInt} 
\cL_{ R_1 , R_2 , R_3 } \cdot v = 0 
\eea

 We then  have, for each triple of Young diagrams, the problem of finding the null space of an integer matrix. Null spaces of integer matrices have integer null vector bases. These  can be interpreted in terms of lattices and can be constructed using integral algorithms. We will discuss the integrality properties of the null vectors and the associated interpretation in terms of lattices 
  further in section \ref{sec:YDHNF}.

\subsubsection{Computational implementation and examples}

The construction of this rectangular matrix using reconnections  on 
ribbon graph equivalence classes  along with normalized symmetric group characters is 
implemented using the software \cite{GAP}. This is described in appendix \ref{app:gap}. 
For the case $n=3$, $ k_*(n) = 2$ and we choose $ \widetilde k_* = k_* (3)  = 2$.  The three reconnection operators $  T_2^{(1)} , T_2^{(2)} , T_2^{(3)} $ suffice to distinguish the Young diagram triples. The matrices for each  operator are given in appendix \ref{appsub:nullspn3}. In this case the Kronecker coefficients $ C ( R_1 , R_2 , R_3 )$ are either $0$ or $1$. An  integer null basis vector for the Fourier subspace associated with each triple having non-vanishing Kronecker coefficient is given in appendix \ref{appsub:nullspn3}.  The vectors shown give the coefficients of the vectors $E_r $ for the index  $ r \in \{ 1, 2, \cdots , 11 \}$ and the graph associated with each index is 
shown in Figure \ref{fig:ribb}.

\subsection{Fourier subspace of triple as null space of square integer matrices} 
\label{sec:SquareIntMat}

We now show that Fourier subspace for a triple of Young diagrams can be obtained as the null space of a square matrix.  Rather than stacking the matrices for reconnection operators $T_k^{(i)}$ in a rectangular matrix, we will take linear combinations of these linear operators with integer coefficients. These linear combinations define Hamiltonians in the quantum mechanics of ribbon graphs. The coefficients are chosen with some care, using a procedure we explain,  to ensure 
that the eigenvalues of the Hamiltonian for distinct Young diagram triples are distinct.

\paragraph{Distinguishing Young diagram triples with quantum mechanical Hamiltonians}
\label{primes}

We consider Hermitian  Hamiltonians of the form  
\be \label{hamiltonianQ}
\cH = \sum_{ i =1}^{ 3 } \sum_{ k =2 }^{ k_* (n) } a_{  i , k  } T_k^{ (i)} 
\ee
with coefficients $ a_{  i , k  } $ which we will discuss shortly.  For simplicity we have taken $ \widetilde k_* = k_* (n)$ which is the minimum needed for the  list of normalized characters to distinguish Young diagrams with $n$ boxes. Using Proposition \ref{propTaQo} we have  
\be
\cH Q^{ R_1 , R_2 , R_3 }_{ \tau_1 , \tau_2 } = \left ( \sum_{  i . k  } a_{  i , k }  { \chi_{ R_i } ( T_k ) \over d(R_i ) }  \right )Q^{ R_1 , R_2 , R_3 }_{ \tau_1 , \tau_2 }
\ee
The Fourier subspace of $ \cK(n)$ for the ordered triple $ (R_1 , R_2 , R_3)$, which has dimension $ C ( R_1 , R_2 , R_3 )^2 $, is an eigenspace of these Hamiltonians.  
We will show that the coefficients $a_{ i,k}$ can be chosen as integers which ensure that the eigenvalues, which we denote as  $ \omega_{ R_1 , R_2 , R_3 } $ 
\be
 \omega_{ R_1 , R_2 , R_3 } =  \sum_{ i, k  } a_{  i , k  }  { \chi_{ R_i } ( T_k ) \over d(R_i )}  
\ee
distinguish the ordered  triples $ ( R_1 , R_2 , R_3  ) $.  In other words, we can choose integers $ a_{ i,k}$ to have 
\bea\label{tdH}  
&& \hbox{ \bf triple-distinguishing Hamiltonians with the property :   } \cr 
&& \hbox{ If }   ( R_1 , R_2 , R_3 ) \ne ( R_1' , R_2', R_3') \hbox{ then }   \omega_{ R_1 , R_2 , R_3 } \ne \omega_{ R_1' , R_2' , R_3'}
  \eea 
Characterising the general choice of integers $ a_{ i , k } $ which defines a Hamiltonian \eqref{hamiltonianQ} with the property \eqref{tdH} is an interesting problem: here we will  only show that these Hamiltonians exist, using a particular construction. 
As an operator on $ \cK(n)$, using the geometric ribbon graph basis,  the matrix elements of $\cH $ are integers 
\bea
\hbox { Coeff } ( E_s , \cH E_r ) = \sum_{ i =1}^{ 3 } \sum_{ k =2 }^{ k_* (n) } a_{  i , k  } ( \cM_k^{ (i)} )_r^s 
\eea
using Proposition \ref{PropIntTk} and \eqref{eqnMst}. 
For convenience, we will use the notation 
$\wchi_R ( T_k )$ for the normalized characters ${ \chi_R ( T_k) \over d(R) } $
which are integers by Lemma \ref{IntegralityOfNormChar}. 
The eigenvalues of the Hamiltonians are 
\bea 
 \omega_{ R_1 , R_2 , R_3 } =  \sum_{ i=1}^3 \sum_{ k=2  }^{ k_* }  a_{  i , k  }  \wchi_{ R_i } ( T_k ) 
   =   \sum_{ k  =2  }^{ k_* (n) }   a_{ 1, k }   \wc_{ R_1} ( T_k  )   + a_{ 2,k} 
   \wc_{ R_2} ( T_k  )    + a_{ 3,k}    \wc_{ R_3} ( T_k  ) 
\eea
For each triple of Young diagrams $ ( R_1 , R_2  , R_3 )$ the operator 
\bea 
(  \cH - \omega_{ R_1 , R_2 , R_3 } ) \equiv \cH_{ R_1 , R_2 , R_3 } 
\eea
is, in the geometric ribbon graph basis, a square integer matrix. 
 The Fourier subspace of $ \cK( n )$ associated with the triple $ ( R_1 , R_2 , R_3 )$ is the space spanned by the null vectors $v$ of this operator 
\bea\label{HsubRInt}  
\cH_{ R_1 , R_2 , R_3 } \cdot v = 0 
\eea
We now turn to the demonstration that such triple-distinguishing Hamiltonians constructed from integer $a_{ i , k } $ indeed exist in general.

\vskip.2cm

\noindent
{\bf Triple distinguishing property for low $n$} \\
Consider the problem of establishing the property in \eqref{tdH} for the cases $ n = 2,3,4,5,7$ where $ k_*(n) = 2 $. 
In this case, simplify the notation 
\bea 
a_{ 1, 2} = a_1 ~;~ a_{ 2,2 } = a_2  ~;~  a_{ 3,2} = a_3 
\eea
Introduce a label $q$ which indexes partitions of $n$, so $R_q$ is a Young diagram with $n$ boxes. 
With the simplified notation 
\bea 
 X_q = \wc_{ R_q } ( T_2 ) 
\eea
the eigenvalues are  
\bea\label{evalslown}  
\omega_{ R_{ q_1} , R_{ q_2} , R_{ q_3} }  =
(  a_1  X_{ q_1}  + a_2 X_{ q_2}  + a_3 X_{ q_3 }  ) 
\eea
These $X_q$ are integers using Lemma \ref{IntegralityOfNormChar}.

\ 

\noindent{\bf Conditions on $a_i $} \\ 
The problem of finding $a_i$ to have Hamiltonians with the triple-distinguishing property is now the problem of 
finding $ a_1 . a_2 , a_3 $ such that for any distinct triples $ ( R_{ q_1} , R_{ q_2} , R_{ q_3} ) \ne  ( R_{ q_1'}  , R_{ q_2'} , R_{ q_3'} ) $ the eigenvalues in \eqref{evalslown}  $ \omega_{ R_{ q_1} , R_{ q_2} , R_{ q_3} } \ne \omega_{ R_{ q_1'} , R_{ q_2'} , R_{ q_3'} }$. 
In other words,  the problem is to 
\bea\label{Lindep} 
&& \hbox{\bf Find integers $a_1 , a_2 ,  a_3 $ such that  } \cr 
&& a_{ 1} ( X_{ q_1} - X_{ q_1'} ) + a_2 ( X_{ q_2} - X_{ q_2'} ) + a_3 ( X_{ q_3} - X_{ q_3'} ) = 0 \cr 
&& \hbox{is only satisfied when $ X_{ q_1} = X_{ q_1'} , X_{ q_2} = X_{ q_2'} , X_{ q_3} = X_{ q_3'} $} 
\eea
Equivalently, the problem is to find integers $(a_1 , a_2 , a_3 )$ such that the sum 
\bea 
a_{ 1} ( X_{ q_1} - X_{ q_1'} ) + a_2 ( X_{ q_2} - X_{ q_2'} ) + a_3 ( X_{ q_3} - X_{ q_3'} )
\eea
is never zero for any choice of distinct triples 
$ ( q_1 , q_2 , q_3 )$ and  $(q_1' , q_2' , q_3') $. Note that two triples are considered distinct if they differ in any of the 3 slots. E.g. if $ q_1 \ne q_1'$ then $ ( q_1 , q_2 , q_3 ) $ and $ ( q_1' , q_2 , q_3 )$ are distinct triples.

Given the condition on the $a_i$, none of the $a_i$ can be zero. Suppose the contrary, e.g.  $a_1 = 0$. Then we can take $ X_{ q_1} \ne X_{ q_1'} $ but 
$ X_{ q_2} = X_{ q_2'} , X_{ q_3} = X_{ q_3'}$, and get a solution to (\ref{Lindep}). 
Let us look for a solution where $ a_1 =1$. 

The possible differences $ X_{ q } - X_{ q'} $ at fixed $n$ form a finite set of values. 
For example at $ n=3$, they can be $ ( 0 , 3 , -3 , 6 , -6 )$. The list of non-zero $ X_{ q } - X_{ q'} $ has prime factors $ 2,3$. Take a prime $p_1$ which  is not one of these prime factors. E.g. in this $ n=3$ case, take $ p_1 = 5$. Let $a_2 = p_1$.  This ensures that, when $ X_{ q_3} = X_{ q_3'}$, we cannot solve (\ref{Lindep}) : the second term has a prime factor $p_1$ while the first does not, so they cannot add up to zero.  
 For $a_3$ we pick another prime $p_2 $ which is not $ p_1$ and does not appear among the prime factors of $ X_q - X_{q'}$ for any $ q , q'$.  This ensures that the condition \eqref{Lindep} on the $a_i$ is satisfied for all triples where $ X_{q_2} = X_{ q_2'}$. 

Now we consider the generic case where
\bea\label{generic}  
X_{ q_1 } \ne X_{ q_1'} ,   X_{ q_2 } \ne X_{ q_2'} ,  X_{ q_3 } \ne X_{ q_3'}  \, . 
\eea 
To continue satisfying \eqref{Lindep} we can choose  $a_3$ large enough that the last term cannot be cancelled by the sum of first two terms. So pick  $p_2$ such that
\bea 
p_2 \Min_{ q, q'} | X_q - X_{ q'}  |  >  \Max (   |   p_1( X_{q_2} - X_{ q_2'} )   + 
 (  X_{q_1} - X_{ q_1'}  ) | )  
\eea 
Using the inequality, 
\bea
&&  p_1 \Max_{ q_2 , q_2'}   |  X_{q_2} - X_{ q_2'}  | +  
  \Max_{ q_1 , q_1'}   | X_{q_1} - X_{ q_1'}  | \ge\Max (    |   p_1 ( X_{q_2} - X_{ q_2'} )   + 
 (  X_{q_1} - X_{ q_1'}  ) | )  \cr 
 && 
\eea
we can write a computationally simpler condition 
\bea\label{csd}  
p_2 \Min_{ q, q'} | X_q - X_{ q'}  | > ( p_1 +1) \Max_{q , q'}  | X_{q} - X_{ q'} |
\eea
Since we are in the case \eqref{generic},  $\Min_{ q, q'} | X_q - X_{ q'}  | > 0$. 

In the example of $n=3$, choosing  $ p_1 = 5  $  as explained above, pick $p_2 = 13 $, which satisfies 
\eqref{csd}  because $ 13 * 3 > 5 * 6 +  6 $. 
We conclude that the  choice $ ( a_1 , a_2 , a_3 ) = ( 1, 5 , 13)$  at $ n=3$, satisfies the condition \eqref{Lindep}   and  the eigenvalues of $ \cH = T_2^{(1)} + 5 T_2^{(2)} + 13 T_2^{(3)} $ distinguish the triples $ ( R_1  , R_2 , R_3 )$ which label the $Q$-basis, and the degeneracies of the eigenspaces are precisely the squares of Kronecker coefficients. 

\vskip.2cm 

\noindent 
{\bf Triple distinguishing property for general $n$ } \\ 
We now have eigenvalues of $ \cH$ equal to 
\bea\label{genCOND}  
\omega_{ R_{ q_1} , R_{ q_2} , R_{ q_3} } = 
\sum_{ k =2 }^{ k_* }  (  a_{ 1 , k }  X_{ q_1 ,  k  } + a_{ 2 , k } X_{ q_2  , k } + a_{ 3 , k } X_{ q_3  , k }   )  
\eea
with  
\bea 
 X_{ q_i  , k } = \wc_{ R_{q_i } } ( T_k )  
\eea
We choose  $a_{i,k} $ to have Hamiltonians with the triple-distinguishing property, i.e. 
  $ \omega_{ R_{ q_1} , R_{ q_2} , R_{ q_3} }$ $  \ne \omega_{ R_{ q_1'} , R_{ q_2'} , R_{ q_3'} }$ 
for $ ( R_{q_1} ,R_{ q_2} , R_{ q_3} ) \ne ( R_{ q_1'} , R_{ q_2'} , R_{ q_3'} ) $. This the problem  
\bea\label{NCS} 
&& \hbox{\bf Find integers $a_{ 1 , k } , a_{ 2,k} , a_{ 3 , k }   $ such that  } \cr 
&& \sum_{ k=2 }^{k_*}  a_{ 1, k } ( X_{ q_1 , k } - X_{ q_1' , k } ) + 
 a_{ 2 , k } ( X_{ q_2  , k } - X_{ q_2' , k } ) + a_{ 3 ,k } ( X_{ q_3  , k } - X_{ q_3' , k }    )  = 0 
\cr 
&& \hbox{is only satisfied when $ X_{ q_1} = X_{ q_1'} , X_{ q_2} = X_{ q_2'} , X_{ q_3} = X_{ q_3'} $} 
\eea
The previous strategy for low $n$ extends here. Suppose $ q_1 \ne q_1'$ but $q_2 = q_2' , q_3 = q_3'$. 
In this case, we need to make sure 
that the $a_{ 1,k}$ are chosen such that for any pair $ q_1 , q_1'$ 
\bea\label{NCS1}  
 \sum_{ k=2 }^{k_*}  a_{ 1, k } ( X_{ q_1 , k } - X_{ q_1' , k } )  \ne 0 
\eea
One scheme for producing such a collection of $ a_{ 1, k } $ is to use prime decompositions again. Consider the differences $ X_{ q_1 , k  } - X_{ q_1' , k } $
as $ q_1, q_1'$ range over distinct pairs.  Consider the set of prime factors, denoted $ \texttt{PrimesDiffs} ( n  , k ) $  appearing in the integer differences  $ X_{ q_1 , k  } - X_{ q_1' , k } $ as $ q_1, q_1'$ range over distinct pairs.  Choose $ a_{ 1, 2} =1$ and $ a_{ 1,3} = p_1$, with $ p_1 \notin \PRIMESDIFFS (n , 2 )$. 
Then $ a_{ 1,4} = p_2 $ is a bigger prime chosen such that $ p_2 \notin \{ p_1 \} \cup \PRIMESDIFFS ( n , 2 ) \cup \PRIMESDIFFS ( n , 3 ) $ and 
\bea 
\Max_{ q, q'}   | ( X_{ q, 2 } - X_{ q' , 2 } ) |  + 
p_1  \Max_{ q, q'}  |  ( X_{ q, 3 } - X_{ q' , 3 } ) |  < p_2 \Min_{ q, q'}  |  ( X_{ q, 4 } - X_{ q' , 4 } ) | 
\eea
By iterating this procedure, we select  increasing primes $ p_1 , p_2 , \cdots , p_{ k_* -2} $ to ensure (\ref{NCS1}). 

Back to considering \eqref{NCS}: the case $ q_1 = q_1', q_2 \ne q_2' , q_3 = q_3'$ requires 
\bea\label{NCS2}   
 \sum_{ k=2 }^{k_*}   a_{ 2, k } ( X_{ q_2 , k } - X_{ q_2' , k } )  \ne 0
\eea
The case $ q_1 = q_1' , q_2 =q_2', q_3 \ne q_3'$ requires
\bea\label{NCS2}  
 \sum_{ k=2 }^{k_*} a_{ 3, k } ( X_{ q_3 , k } - X_{ q_3' , k } )  \ne 0
\eea
We also need to ensure that the condition  \eqref{NCS} holds when two of the $q$'s are distinct and when all three are distinct.  We can pick 
\bea \label{aikgen}
&&
( a_{ 1 , 2}  , a_{ 1, 3 } ,  \cdots , a_{ 1, k_* } ) = ( 1 , p_1 , p_2 , \cdots ,  p_{ k_* - 2} ) \cr  
&&
( a_{ 2,1} , a_{ 2,2 } \cdots , a_{ 2 , k_* -2} ) =  p_{ k_* -1} ( 1 , p_1 , \cdots , p_{ k_* -2} )  \cr 
&&
( a_{ 3,1} , a_{ 3,2} , \cdots , a_{ 3 , k_* -2}  ) = p_{ k_* } ( 1 , p_1 , \cdots , p_{ k_* -2} )  
\eea
The primes are chosen such that $p_{ 1 } < p_2 < \cdots <  p_{ k_* -2} < p_{ k_* -1} <p_{ k_* } $, 
with  $ p_{ k_* -1} $ such that 
\bea 
&&
 \Max_{ q, q'} \Big(   \Big| \sum_{ k=2 }^{k_*}  p_{  k  - 2 }  ( X_{  q , k } - X_{ q' , k } )  \Big| \Big)  
 \le  \crcr
  &&
  \Max_{ q, q'} \Big(   \sum_{ k=2 }^{k_*}  p_{  k  - 2 } |  X_{  q , k } - X_{ q' , k } | \Big)  <   
 p_{ k_* - 1 }\,   \Min_{ q, q'} \Big(  \Big|\sum_{ k=2 }^{k_*}   p_{ k - 2  }   (  X_{  q , k } - X_{ q' , k } )   \Big| \Big)
\eea
where we extend the sequence $p_{l}$ to $p_0 =1$, and also such that $ p_{ k_* } $ obeys 
\bea 
 (1+   p_{ k_* -1} )  \Max_{ q, q'} \Big( \sum_{ k=2 }^{k_*}   p_{ k - 2  }  | X_{  q , k } - X_{ q' , k } | \Big)   < p_{ k_* }   \Min_{ q, q'} \Big( \Big| \sum_{ k=2 }^{k_*}    p_{ k - 2  }  (  X_{  q , k } - X_{ q' , k } )  \Big|  \Big)  
\eea
The $\Min$ on the RHS is non-zero since this condition is coming from the vase $ q_1 \ne q_1' , q_2 \ne q_2' , q_3 \ne q_3'$. 
With these integer choices of $a_{ 1, k } , a_{ 2, k } , a_{ 3 , k } $ we can ensure 
that $ \cH $ has eigenvalues which distinguish the  triples $ ( R_1 , R_2 , R_3 ) $ in the Fourier 
basis elements $ Q^{ R_1 , R_2 , R_3 }_{\tau_1 , \tau_2 }  $. The dimensions of the distinct eigenspaces are 
$ C ( R_1 , R_2 , R_3 )^2$.

\paragraph{Examples}
 
In fact, for $k=2$, $ \Max_{ q , q'} | X_{q,2} - X_{q',2}  | $ is known  and equals $ 2\cdot | \wc_{ [1^n] } ( T_2 ) |= n(n-1)$ and  $\Min_{ q, q'} | X_{q,2 }- X_{ q',2}  |$ 
cannot be lower than $1$ since we know that $ X_{ q,2} , X_{ q',2}$ are integers. Using this lower bound 
\bea
p_2  \Min_{ q , q'} | X_{q,2} - X_{q',2}  |  > p_2 > (p_1 +1) n(n-1)
\eea
Thus picking the minimal prime $p_2$ larger than $(p_1 +1) n(n-1)$
would solve the inequality in \eqref{csd}. When $\Min_{ q, q'} | X_{q,2} - X_{q',2}  | >1$
then the above is a still sufficient condition but does 
not lead to the smallest  $p_2$. After some illustrations, 
we will discuss  sufficient conditions that leads to 
other solutions of the problem. 

\ 

\noindent{\bf Case $n=5$.}
Here $k_* =2$ and we have 
\be
 X_{  q , 2 } - X_{ q' , 2 } 
 \in \{  -20, -15, -12, -10, -8, -7, -5, -4, -3, -2, 0, 2,3,4,5,7,8, 10,12,15, 20 \}
\ee
with $ \wc_{ [1^n] } ( T_2 )  = -10$
so that  $  \Max_{ q, q'} | X_{  q , 2 } - X_{ q' , 2 } | = 20$, 
and  $  \Min_{ q, q'} | X_{  q , 2 } - X_{ q' , 2 } | = 2>0$. 
The set of prime divisors of the above set
is $\{2,3,5,7\}$. We choose $p_1= 11$ and therefore the inequality \eqref{csd} 
becomes 
\bea
2 p_2  > 12 \times 20 = 240 
\eea
Hence we choose $p_2 = 127$, and the triple $(p_0, p_1, p_2) = (1,11,127)$  solves \eqref{Lindep}. 
The Hamiltomnian 
 $ \cH = T_2^{(1)} + 11 T_2^{(2)} + 127 T_2^{(3)} $
is  an integer matrix in the geometric ribbon graph basis, with the property that distinct Young diagram triples are associated with distinct eigenvalues, and the eigenvalue degeneracies are given by $ C ( R_1 , R_2 , R_3)^2$.

\

\noindent{\bf Case $n=7$.} Again $k_* = 2$ should be the max of $k$. 
We have 
\bea
&&
 X_{  q , 2 } - X_{ q' , 2 } 
 \in \{  
-42, -35, -30, -28, -27, -24, -23, -22, -21, -20, -18, -17, -16, 
\crcr
&&
-15, -14, 
  -13, -12, -11, -10, -9, -8, -7, -6, -5, -4, -3, -2, -1, 0, 1, 2, 3, 
   \crcr
  && 4, 5, 
  6, 7, 8, 9, 10, 11, 12, 13, 14, 15, 16, 17, 18, 20, 21, 22, 23, 24, 27, 28, 
  30, 35, 42 \} \,, 
  \eea
with $ \wc_{ [1^n] } ( T_2 )  = -21$,
 $  \Max_{ q, q'} | X_{  q , 2 } - X_{ q' , 2 } | = 42$, 
and  $  \Min_{ q, q'} | X_{  q , 2 } - X_{ q' , 2 } | = 1$. 
The set of prime divisors is $\{2,3,5,7,23,17,13,11\}$. 
Choose $p_1 = 19$, and then we seek 
\bea
p_2 > 20 * 42 = 840 \,. 
\eea
We fix $p_2 =853$ and $(a_1 , a_2 , a_3 ) = ( p_0, p_1, p_2) = (1,19,853)$  is one correct triple 
 solves \eqref{Lindep}. Thus $ \cH = T_2^{(1)} + 19 T_2^{(3)} + 853 T_2^{(3)} $ is  an integer matrix in the geometric ribbon graph basis, with the property that distinct Young diagram triples are associated with distinct eigenvalues, and the eigenvalue degeneracies are given by $ C ( R_1 , R_2 , R_3)^2$.

\

\noindent{\bf Case $n=6$.}
In this case  $k_* = 3$ and 
\bea
&&
 X_{  q , 2 } - X_{ q' , 2 } 
 \in \{  
 -30, -24, -20, -18, -15, -14, -12, -10, -9, -8, -6, -5, -4, -3, \crcr
 && 
 -2, 0, 2, 
      3, 4, 5, 6, 8, 9, 10, 12, 14, 15, 18, 20, 24, 30 \} \cr\cr
 && 
 \PRIMESDIFFS( 6  , 2  ) = \{2, 3,5, 7\} \,, \quad  \Min_{q,q'} | X_{  q , 2 } - X_{ q' , 2 }|  = 2 \;; 
    \cr\cr 
&&
 X_{  q , 3 } - X_{ q' , 3 } 
 \in \{   -48, -45, -40, -36, -24, -21, -16, -12, -9, -8, -5, 
 -4, -3, 0,  3, \crcr
&& 4, 5, 
      8, 9, 12, 16, 21, 24, 36, 40, 45, 48 \} 
      \cr\cr
 && 
 \PRIMESDIFFS( 6  , 3  ) = \{ 2,3,5,7 \} 
  \,, \qquad  \Min_{q,q'} | X_{  q , 3 } - X_{ q' , 3 }|  =  3 \, . 
 \eea 
 We follow the procedure and require $p_1 \notin \PRIMESDIFFS( 6  , 2  ) $,
hence, for instance $p_1 = 11$. Then we seek $p_2$ that obeys  
 \bea
 &&
 \Max_{q,q'} \Big(  | X_{  q , 2 } - X_{ q' , 2 }  | + p_1 |X_{  q , 3} - X_{ q' , 3}| \Big)  
 < p_2 \Min_{q,q'}  \Big|  ( X_{  q , 2 } - X_{ q' , 2 } ) + p_1 (X_{  q , 3} - X_{ q' , 3}) \Big| 
 \crcr
 &&
 \Max_{q,q'} \Big(  | X_{  q , 2 } - X_{ q' , 2 }  | + p_1 |X_{  q , 3} - X_{ q' , 3}| \Big)   
  =30 + p_1*48 = 30 + 11*48 =  558 \,, \crcr
&& 
  \Min_{q,q'}  \Big|  ( X_{  q , 2 } - X_{ q' , 2 } ) + p_1 (X_{  q , 3} - X_{ q' , 3})  \Big| 
   = 2 \,. 
\eea
Thus we seek  a prime $p_2$ such that
\bea
558 < 2 p_2 \,,  \qquad \quad  p_2 > 279\,. 
\eea
We then use $p_2 = 289$. It remains to determine $p_3$ satisfying 
\bea
&&
(1+p_2) \times 558 < 2 p_3 \crcr
&&
(1 + 289)\times 558  =161820   < 2p_3 \,,  \qquad \quad  p_3  > 80910. 
\label{p3}
\eea
that gives $p_3 =80917$. Thus the quadruple
is $ ( 1 , p_1 , p_2 , p_3) = (1, 11, 289, 80917)$ with $ a_{1,2} = 1 , a_{ 1,3}= 11 , a_{2,2} = 289  , a_{ 2,3} = 289 \times 11 = 3179 , 
a_{ 3,  2 } = 80917 , a_{ 3,3} = 80917 \times 11 = 890087 $   solves the condition \eqref{NCS}. As a result 
the Hamiltonian $ \cH = \sum_{ i ,k } a_{i,k} T_k^{(i)} $ with these coefficients has integer matrix elements in the ribbon graph basis, has distinct eigenvalues for distinct triples of Young diagrams, and eigenvalue degeneracies given by $ C ( R_1 , R_2 , R_3)^2 $.  

 \ 
 
\noindent{\bf Sufficient conditions.}
At smallest order of $k_*=2,3$,  there are quick sufficient conditions that solve the  problem, for all $n=2,3,4,\dots, 14$. We are confident that similar identities 
holds for higher order in $k_*$.   Note that the solutions $p_k$'s provided below need not be the smallest possible but we arrive at easy programming equalities. 

Consider first $k_* = 2$, pick $p_1 $ as the first prime number 
above $n(n-1)$ (as $\Max_{ q , q'} | X_{q,2} - X_{q',2}  | =   n(n-1)$). This already guarantees that it does not belong to the set of prime
divisors of the set $\{ X_q - X_{ q'}\} $.  Then choose the prime $p_2 > (p_1+1)*n(n-1)$ then $( a_1 , a_2 , a_3 ) = (1,p_1,p_2)$ solves the condition \eqref{Lindep}.

Addressing $k_* =3$, we can replace  $\Min_{q,q'}| \cdot |  $ by the lower bound $1$ and the $\Max_{q,q'}| \cdot | =2 
\hat\chi_{[1^n]}(T_k)$.
Let us illustrate this idea at $n=6$. Already, $p_1$ has been fixed to be the smallest prime $p_1 > n(n-1)$. 
We consider   
$ \Max_{ q } X_{q,3} =    \wc_{ [1^n] } ( T_3 ) = \frac{n(n-1)(n-2)}{3}$, 
and thus  
 $ \Max_{ q , q'} | X_{q,3} - X_{q',3}  | \le  2 \frac{n(n-1)(n-2)}{3}$. 
We choose 
\bea
 p_2 &>& \Max_{q,q'}( | X_{q,2} - X_{q',2}  |  ) + p_1 \Max (| X_{q,3} - X_{q',3}  |)\crcr
  &>&  n(n-1) + p_1  \frac23 n(n-1) (n-2)   \,. 
\eea
Thus we choose $p_2$ to be the next prime after $\frac{n}{3}(n-1)(3+2p_1(n-2))$. 
Last $p_3$ should obey the bound 
\bea
p_3 & >&  
(1+p_2)\Max_{q,q'}( | X_{q,2} - X_{q',2}  | + p_1 | X_{q,3} - X_{q',3}  |) 
 \cr\cr
&>&
(1+p_2)( n(n-1) + p_1  \frac23 n(n-1) (n-2) ) 
\eea
Therefore picking $p_3$ as the next prime larger than $
 \frac{n}{3}(n-1)  (1+p_2)(3 + 2p_1(n-2) )$ will solve the issue. 
 
At $n=6$,  $  \Max_{ q, q'} | X_{  q , 2 } - X_{ q' , 2 } | = 30 = 2  \wc_{ [1^n] } ( T_2 ) $, We can pick  $p_1 =31 $. Then $  \Max_{ q, q'} | X_{  q , 3 } - X_{ q' , 3 } | = 48<80 =  2\wc_{ [1^n] } ( T_3)$. 
This will fixe $p_2$ and $p_3$.  
Then an alternative quadruple that solves the problem is given 
by $(1,31,2521, 6330223)$ (to be compared with the previous
quadruple  $(1,11,289,80917)$  in equation \eqref{p3}).

\section{Kronecker coefficients and  ribbon graph sub-lattices  }
\label{sec:YDHNF}

In section \ref{sec:IntegerMatricesKron}, we constructions of an  integer matrix for each  ordered triple of Young diagrams 
$ (R_1 , R_2 , R_3 )$ with $n$ boxes, with the property that their null space gives a basis for the Fourier subspace of $ \cK ( n )$ associated with that triple. This subspace has 
dimension  equal to  the square of the Kronecker coefficient : $ C ( R_1 , R_2 , R_3 )^2 $. These matrices are constructed from central elements $ T_k^{(i)} $ 
(introduced in section \ref{sec:CentreKnReconn}) of the algebra $ \cK(n)$ of bipartite ribbon graphs with $n$ edges, where $ i \in \{ 1,2,3 \} $ and $ k \in \{ 2, 3 , \cdots , \widetilde k_* \}$. The parameter $ \widetilde k_* \in \{ k_*(n) , k_*(n)+1 ,  \cdots , n \}$, where $ k_* (n)$ is the minimal integer such that the  central elements $ \{ T_2 , \cdots  , T_{k_*(n)} \}$ generate the centre of $ \mC (S_n)$ and it has been computed for $n$ up to $79$ \cite{KR1911}. We have two constructions for each Young diagram triple, one producing a  rectangular matrix $  \cL_{ R_1 , R_2 , R_3 } $  \eqref{RecInt} and another producing a  square matrix $  \cH_{ R_1 , R_2 , R_3 } $ \eqref{HsubRInt}.  In each case, we are solving the linear equation 
\bea\label{Xv}  
X \cdot v = 0 
\eea
where $ X = \cL_{ R_1, R_2 , R_3 } $ or $ X = \cH_{ R_1 , R_2 , R_3 }$.

The null spaces  of integer matrices have bases given as integer vectors. This follows from the theory of Hermite normal forms and has an interpretation in terms of sub-lattices. In  the present application we have a lattice 
\bea 
\mZ^{ |\Rib(n) | } \subset \mR^{ |\Rib(n) | } 
\eea
which is interpreted as the space of integer linear combinations of the geometric ribbon graph basis vectors $E_r$ of the ribbon graph algebra $ \cK(n)$. We will refer to $\mZ^{ |\Rib(n) | } $ as the lattice of ribbon graphs. 
 In this section, we will explain the key facts about integer matrices and sub-lattices that we will need and state the first  main result of this paper, Theorem \ref{theo:C2}.  This  is the construction of $ C(R_1,R_2,R_3)^2 $ as the dimension of a sub-lattices of  the lattice of ribbon graphs.

A classic problem asks for a combinatorial construction of the Kronecker coefficient 
associated with every triple of Young diagrams \cite{Murnaghan,StanleyKron}. Recent progress on this problem from a number of directions  and its connections to computational complexity is summarised in \cite{PPV}. Theorem \ref{theo:C2} gives a { \it combinatorial interpretation}  of the square of the  Kronecker coefficients. 
The theory of Hermite normal forms for integer matrices also offers {\it combinatorial algorithms}  for finding 
the null spaces (Corollary \ref{CorrConst}).   It is  also interesting to ask if there is a  purely combinatorial proof  - without using representation theory - of the formula for Kronecker coefficients in terms of characters, which can be viewed as 
combinatorial objects, for example, by the Murnaghan-Nakayama algorithm. Our proof of Theorem \ref{theo:C2} relies in an important way on representation theory, e.g. in the derivation of Proposition \ref{propTaQo} which enters the proof, and is therefore not purely combinatorial. 
In section \ref{sec:CombConsDisc} we 
 discuss how the question of a  purely combinatorial proof of the Theorem \ref{theo:C2}
  raises interesting questions on integer matrices. 

In section \ref{app:conjugate}, we consider the $ S=\pm 1 $ eigenspaces of the conjugation operator defined in 
\ref{sec:Involution}. This leads to the definition of sub-lattices  of the lattice of ribbon  graphs with dimensions
$ C (R_1 , R_2 , R_3 ) ( C ( R_1 , R_2 , R_3 ) +1 ) /2 $ and $C (R_1 , R_2 , R_3 ) ( C ( R_1 , R_2 , R_3 ) -1 ) /2 $, constructed as null spaces of integer matrices.  
The difference of these dimensions is $ C ( R_1 , R_2 , R_3 )$ which can therefore 
 be constructed by choosing a map from a  basis set for the null vectors, determined  for example 
 by a Hermite normal form algorithm, for the smaller sub-lattice to a basis set for the larger sub-lattice (Theorem \ref{addLatt}).

\subsection{Null-vectors of integer  matrices and lattices} 
\label{nulllattices}

The null space of the integer matrix   $X$ defined by \eqref{Xv}   is the span of a set of null vectors which can be chosen to be integer vectors, i.e. integral linear combinations of the $E_r$. A key result from the theory of integer matrices and lattices is that any integer matrix $A$ (square or rectangular) has a unique Hermite normal form (HNF) (as explained in textbooks such as \cite{Cohen}\cite{Schrijver} or online notes such as \cite{MicciancioBA}).  These can be computed using mathematical software such as GAP, SAGE or Mathematica. Thus $ A $ has a  decomposition $A= U h$: $U $ is a unimodular matrix, i.e. an integer  matrix of determinant $ \pm 1$. In the following discussion we will use $ A = X^T$.  $h$ is an integer matrix with  the following properties :

\begin{itemize} 

\item $  h $ is upper triangular (that is, $h_{ij} = 0 ~\hbox{for}~  i > j$), and any rows of zeros are located below any other row.

\item  The leading coefficient (the first non-zero entry from the left, also called the pivot) of a non-zero row is always strictly to the right of the leading coefficient of the row above it; moreover, it is positive.

\item 
The elements below pivots are zero and elements above pivots are non-negative and strictly smaller than the pivot.

\end{itemize} 

The construction of $h$ proceeds by applying a sequence of operations involving one of the following 
in each step. 

\begin{itemize} 

\item Swop two rows. 

\item Multiply a row by $-1$. 

\item Add an integer multiple of a row to another row of $A$.

\end{itemize} 

Each of these operations corresponds to left multiplication of $A$ by a unimodular matrix $U$: $ A \rightarrow U A $. Every integer   matrix $A$ can be brought into HNF by a sequence of these elementary integer row operations. 
Suppose we want to find the vectors $v$ which obey $ X v =0$. Equivalently $v^T X^T =0$. 
We apply  the elementary integer row operations to bring $ X^T $ to HNF. 
This means $ X^T = U h $. The number $N$  of lower rows of zeroes in $h$ is the dimension of the null space of $X$. The integer null vectors can be read off from the lower  $N$ rows of $U$. The non-zero rows of $ h$ 
give an integer basis for the image of $ X$. 

As a simple example to illustrate these properties, take 
\bea 
X = \begin{pmatrix} 1 & 1 \cr 2 & 2 \end{pmatrix}  \,, 
\qquad \quad  
X^T = \begin{pmatrix} 1 & 2 \cr 1 & 2 \end{pmatrix}  \, , 
\eea
with $X^T$ being the transpose of $X$. 
By applying the row operation of replacing the second row $R_2$ by $R_2 - R_1$ we get the HNF 
\bea 
X^T \rightarrow h=\begin{pmatrix} 1 & 2 \cr 0 & 0 \end{pmatrix} 
\eea
The unimodular matrix which implements this row operation is 
\bea 
U = \begin{pmatrix} 1 & 0 \cr 1 & -1 \end{pmatrix} 
\eea
i.e 
\bea 
U X^T = \begin{pmatrix} 1 & 2 \cr 0 & 0 \end{pmatrix}  = h 
\eea
The lower row of $ U$, when transposed, gives the null vector for the action of $X : v \rightarrow Xv $
\bea 
X \begin{pmatrix} 1 \cr -1  \end{pmatrix}  = \begin{pmatrix} 0 \cr 0   \end{pmatrix}  
\eea
The non-vanishing row of $ h $, transposes to the column vector which  gives the image of $X$ since 
\bea 
X \begin{pmatrix} x_1  \cr x_2  \end{pmatrix}  = ( x_1 + x_2)\begin{pmatrix} 1 \cr 2 \end{pmatrix} 
\eea

To see that the connection between the lower rows  of the unimodular matrix $U $ in the decomposition 
$ U X^T =  h $ corresponds to null vectors, observe that 
\bea 
U  X^T  =  h 
\eea
can be written as 
\bea 
\sum_{ k } U_{ ik}  X^T_{ kj } = h_{ i j } 
\eea
The vanishing rows of $h$ correspond to values of $i$ such that $h_{ ij} =0 $ for all $j$. 
Fixing one of these $i$ we have vectors $ U_{ ik} $ as $k$ varies, with the property : 
\bea 
\sum_{ k }  X_{ jk} U_{ ik} = 0 
\eea
Note that we could have equivalently   worked with elementary column operations on $ X$ rather than elementary row operations on $X^T$. 

By definition the  matrix $U$ has integer entries, so this construction gives 
an integer basis for the null space of $X^T$. 
The null vectors of $X$, found as integer linear combinations of $E_r$, define a sub-lattice of the lattice  $\mZ^{|\Rib(n)|}$. The dimension of the sub-lattice is 
the square $ ( C ( R_1 , R_2 , R_3 ))^2 $ of the Kronecker coefficient. The square of the Kronecker coefficient thus has the combinatorial interpretation as the dimension  of a sub-lattice of the lattice of ribbon graphs. We have thus arrived at the first main theorem of this paper.

\begin{theorem}\label{theo:C2}
For every triple of Young diagrams $(R_1 , R_2 , R_3 ) $ with $n$ boxes, the lattice 
\bea 
\mZ^{ | \Rib(n) | }
\eea
of integer linear combinations of the geometric basis vectors $E_r$ of $ \cK ( n ) $ contains a sub-lattice 
of dimension $ ( C ( R_1 , R_2 , R_3 ))^2 $ spanned by a basis of  integer null vectors of the operator 
$ X$, which is $ \cL_{ R_1, R_2 , R_3 } $ from \eqref{RecInt}  in the rectangular matrix  construction 
or $ \cH_{ R_1 , R_2 , R_3 } $  from \eqref{HsubRInt} in the square matrix construction. 
\end{theorem}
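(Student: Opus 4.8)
The plan is to identify the kernel of $X$ over $\mathbb{Q}$ with the Fourier subspace of the given triple, and then feed that kernel to the Hermite normal form machinery recalled in section \ref{nulllattices} to produce an explicit integer basis. First I would pin down $\ker X$. In the rectangular case $X = \cL_{R_1,R_2,R_3}$, equation \eqref{stack} says that $X\cdot v = 0$ exactly when $v$ is a simultaneous eigenvector of every $T_k^{(i)}$, $i\in\{1,2,3\}$, $k\in\{2,\dots,\widetilde k_*\}$, with eigenvalue $\wchi_{R_i}(T_k)$ in slot $(i,k)$. By Proposition \ref{propTaQo} the Fourier basis vectors $Q^{R_1,R_2,R_3}_{\tau_1,\tau_2}$ are precisely such eigenvectors, so the Fourier subspace of $(R_1,R_2,R_3)$ sits inside $\ker X$; conversely, since the full collection $\{Q^{R_1',R_2',R_3'}_{\tau_1,\tau_2}\}$ is a basis of $\cK(n)$ that simultaneously diagonalises all the $T_k^{(i)}$, and since by Proposition \ref{PropTkFS} the tuple of eigenvalues separates Young diagram triples, any vector annihilated by $X$ can only have nonzero components along Fourier vectors carrying the triple $(R_1,R_2,R_3)$. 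Hence $\ker X$ equals the Fourier subspace, whose dimension is the number of pairs $(\tau_1,\tau_2)$, namely $C(R_1,R_2,R_3)^2$. In the square case $X = \cH_{R_1,R_2,R_3}$, which is an integer multiple of $\cH - \omega_{R_1,R_2,R_3}$, the same conclusion follows from the triple-distinguishing property \eqref{tdH}: $\ker X$ is the $\omega_{R_1,R_2,R_3}$-eigenspace of $\cH$, which is again the Fourier subspace of $(R_1,R_2,R_3)$.

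Second I would bring in integrality. By Proposition \ref{PropIntTk} each matrix $\cM^{(i)}_k$ has non-negative integer entries, and the normalized characters were cleared of denominators in passing from \eqref{stack} to \eqref{RecInt} (respectively in the passage to \eqref{HsubRInt}), so $X$ is a genuine integer matrix. Applying the HNF construction of section \ref{nulllattices} to $X^T$, there is a unimodular integer matrix $U$ with $U X^T = h$ in Hermite normal form. The number of zero rows of $h$ equals $\dim_{\mathbb{Q}}\ker X$; since for a matrix with rational entries the rank is unchanged under extension of scalars to $\mathbb{C}$, this number is $C(R_1,R_2,R_3)^2$ by the first step. The corresponding rows of $U$ are integer vectors, they lie in $\ker X$ by the identity $\sum_k X_{jk}U_{ik}=0$ established in section \ref{nulllattices}, and they are linearly independent because $U$ is invertible. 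Read as integer linear combinations of the geometric basis vectors $E_r$, these $C(R_1,R_2,R_3)^2$ vectors generate a sub-lattice of $\mZ^{|\Rib(n)|}$ of exactly that rank, which is the assertion of the theorem; the argument applies verbatim with $X = \cL_{R_1,R_2,R_3}$ or $X = \cH_{R_1,R_2,R_3}$.

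The part requiring care — though not a deep obstacle once Propositions \ref{propTaQo} and \ref{PropTkFS} (and, in the square case, property \eqref{tdH}) are in hand — is the first step: showing $\ker X$ is \emph{exactly} the Fourier subspace rather than merely containing it, which is where the spectral-separation statements are essential; no new estimates are needed beyond assembling those results. A minor point worth a remark is that the rows of $U$ span a sub-lattice of the stated rank but need not span the saturated lattice $\ker X \cap \mZ^{|\Rib(n)|}$; the theorem asks only for some sub-lattice of rank $C(R_1,R_2,R_3)^2$ equipped with an explicit integer basis, and the rows of $U$ supply exactly that. Corollary \ref{CorrConst} is then immediate, since standard HNF algorithms (implemented in GAP, SAGE, and similar software) compute the decomposition $U X^T = h$ effectively, hence produce the basis of null vectors algorithmically.
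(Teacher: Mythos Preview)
Your proof is correct and follows essentially the same route as the paper: both arguments identify the null space of $X$ with the Fourier subspace of dimension $C(R_1,R_2,R_3)^2$ using Propositions \ref{propTaQo} and \ref{PropTkFS} (or the triple-distinguishing property \eqref{tdH}), and then invoke the Hermite normal form decomposition $UX^T = h$ to extract an integer basis of null vectors from the appropriate rows of the unimodular matrix $U$. Your write-up is in fact slightly more careful than the paper's, making explicit the spectral-separation step that pins down $\ker X$ exactly, and adding the useful caveat that the resulting sub-lattice need not be saturated.
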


Solving for the null vectors of $X$ using the HNF shows  that there is sub-lattice 
in the lattice of ribbon graphs whose dimension is $ (C ( R_1 , R_2 , R_3 ))^2 $. 
This gives a combinatorial interpretation for  the square of the Kronecker coefficient. 

But the theory of lattices is even more powerful. The columns of $X$ (equivalently the rows of $X^T$)  are a set of vectors in the lattice of ribbon graphs. The space spanned by the integer linear combinations of 
these vectors  is a  sub-lattice of dimension $|\Rib(n)|  - C ( R_1 , R_2 , R_3 )^2 $
  (that is the column rank of $X$ or the row rank $X^T$). 
The process of arriving at the HNF through row operations on $X^T$ amounts to simplifying the description of this sub-lattice until it is given as the integer linear combinations of a linearly independent set of integer vectors which sit in the rows of $h$. This process also defines a unimodular matrix which encodes the integer null vectors of $X$. Each step in  the process of row operations acts on the set of lattice vectors in $X$, and can thus be viewed as constructive  combinatorial steps. 

\vskip.2cm 

\begin{corollary}\label{CorrConst} 
There is a constructive procedure for the sub-lattice in Theorem \ref{theo:C2}  consisting of integer row operations on the list of integer rows of $X^T$, which produce the HNF of $X^T$. 
\end{corollary}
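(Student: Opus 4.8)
The plan is to invoke the classical existence-and-algorithm theorem for Hermite normal forms and then simply trace through the bookkeeping already set up in section \ref{nulllattices}. First I would take the integer matrix $X$ — which is $\cL_{R_1,R_2,R_3}$ from \eqref{RecInt} in the rectangular construction or $\cH_{R_1,R_2,R_3}$ from \eqref{HsubRInt} in the square construction, both having integer entries by proposition \ref{PropIntTk} together with the denominator-clearing step — and form its transpose $X^T$. The standard HNF algorithm (see \cite{Cohen,Schrijver,MicciancioBA}) proceeds column by column, using only the three elementary integer row operations listed above (swap two rows, negate a row, add an integer multiple of one row to another) to create each pivot and clear the entries below and above it, and it terminates after finitely many steps with a matrix $h$ in Hermite normal form. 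Each such operation is by construction an integer row operation and equals left multiplication by an elementary unimodular matrix; composing them yields a single unimodular matrix $U$ with $U X^T = h$. This is precisely ``a constructive procedure consisting of integer row operations on the list of integer rows of $X^T$ which produces the HNF of $X^T$,'' which is the assertion of the corollary.

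Next I would connect this to the sub-lattice of theorem \ref{theo:C2}. Because $U$ is unimodular it is invertible over $\mZ$, so the number $N$ of zero rows of $h$ equals the corank of $X$, and — via the identity $\sum_{k} X_{jk} U_{ik} = 0$ displayed in section \ref{nulllattices} — the rows $U_{ik}$ of $U$ indexed by those zero rows of $h$ are integer null vectors of $X$. They are linearly independent since $U$ is invertible, and there are exactly $N = C(R_1,R_2,R_3)^2$ of them by theorem \ref{theo:C2}. Read as integer linear combinations of the geometric basis vectors $E_r$, these $N$ rows of $U$ therefore form an explicit integer basis for the sub-lattice of $\mZ^{|\Rib(n)|}$ in theorem \ref{theo:C2}. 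The whole pipeline — assembling $X$ from the reconnection matrices $\cM_k^{(i)}$ and the Murnaghan--Nakayama normalised characters, transposing, running the HNF row reduction, and recording the accumulated $U$ — is a finite sequence of integer-arithmetic steps, each of which acts combinatorially on the list of lattice vectors sitting in the rows of $X^T$.

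The only point requiring a genuine argument rather than bookkeeping is the termination and correctness of the HNF reduction itself, i.e.\ that the greedy loop of pivot creation and clearing halts and outputs a matrix meeting the three defining HNF conditions. I would import this as the classical theorem, remarking only that the key input is that each column-reduction step strictly decreases a well-ordered complexity measure — for instance, repeated Euclidean-style subtraction of rows terminates because the absolute values of the candidate pivot entries form a strictly decreasing sequence of non-negative integers. Everything else is routine linear algebra over $\mZ$: that elementary integer row operations are unimodular, that a product of unimodular matrices is unimodular, and that the count of zero rows in an HNF detects the corank. So the ``main obstacle'' is really just the methodological choice of citing the HNF algorithm as a black box versus reproving its termination from scratch; the former suffices here, and the corollary then follows immediately from theorem \ref{theo:C2} and the discussion preceding it.
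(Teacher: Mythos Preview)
Your proposal is correct and follows essentially the same approach as the paper: both invoke the HNF algorithm as a black-box constructive procedure of elementary integer row operations on $X^T$, record the accumulated unimodular matrix $U$, and read off the integer null vectors of $X$ from the rows of $U$ corresponding to the zero rows of $h$, which then span the sub-lattice of dimension $C(R_1,R_2,R_3)^2$. Your write-up is somewhat more explicit about termination and the role of unimodularity in guaranteeing linear independence of the extracted null vectors, but the substance is identical.
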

\begin{proof}
The  treatment of rows of $X^T$ to put it in a HNF form $X^T = U h$ is a combinatorial construction consisting of a discrete sequence of  integer row operations (swop, multiplying by $-1$ and integer linear combinations
of rows).  The outcome $h$ of the HNF construction gives a basis for the sub-lattice of dimension $ |\Rib (n) | - C ( R_1 , R_2 , R_3 )^2 $ spanned by integer linear combinations of the  rows of $X^T$. The outcome $U$ is built in successive steps by matrices implementing the integer elementary row operations on $X^T$. At the start of an algorithm for the HNF of  $X^T$, the rows give a generically over-complete basis for the lattice generated by these rows. $X^T$ is modified step by step until the last step produces $h$.  At each step the intermediate matrix has a list of lattice vectors. At the end of an algorithm for the HNF, there is a sequence of rows of zeros in $h$ and the corresponding rows of $U$ record the integer null vectors, the number of which is $C( R_1 , R_2 , R_3)^2$. 
The construction of   $U$ associated with a given $X$  is thus a sequence of combinatorial operations on lattice vectors in $ \mZ^{ |\Rib(n)|}   $. 

\end{proof}

The key fact from the theory of integer matrices and lattices we have used is the existence and uniqueness of the HNF.  In the above we have focused on the fact that integral algorithms exist which produce from $X$, the null vectors and  the HNF. We have not focused on the computational complexity of 
the problem.  We make some initial remarks in this direction. 
The LLL algorithm \cite{LLL}  is known to calculate HNF's in a time that is polynomial in the 
size of the matrix. Our matrices are very large - grow as the number of ribbon graphs. We know from \cite{Sanjo} for a partition $\sum_i  ip_i = n$
\bea
|\Rib(n)| = \sum_{p \vdash n}  \prod_{i=1}^n i^{p_i} (p_i!)
\eea
The asymptotics of this number is known. For instance \cite{OEIS}, 
$ |\Rib(n)|  \sim n! * (1 + 2/n^2 + 5/n^3 + 23/n^4 + 106/n^5 + 537/n^6 + 3143/n^7 + 20485/n^8 + 143747/n^9 + 1078660/n^{10})$, for the coefficients see 
A279819. Thus the data size of our problem 
already grows like $\cO(n!)$ (assuming that $k_*(n)\ll n$ as $n \to \infty$). It seems that combined with a problem of time complexity, our problem entails an super-exponential complexity in (memory) space. 
Hence, although the time complexity of HNF could be polynomial 
in the data size, it would remain $\cO(n!)$. A more thorough discussion of the complexity of the algorithm for construction the sub-lattice in Theorem \ref{theo:C2} taking into account the group theoretic characteristics of the integer matrix $X$  is left for the future.

\subsection{Combinatorial interpretations, algorithms and proofs}
\label{sec:CombConsDisc}

An interesting  question for a combinatorial construction of Kronecker coefficients posed in \cite{StanleyKron} is whether it gives a new proof of the fact that  these coefficients are non-negative integers. It is of course obvious from representation theory that $ C ( R_1 , R_2 , R_3 )$ is non-negative-integer - it is the number of times $R_3$ appears in the tensor product decomposition $ R_1 \otimes R_2$ when viewed as a representation using  the diagonal action of permutations. The character formula 
\bea\label{CMN}  
C ( R_1  , R_2 , R_3 ) = { 1 \over n! } \sum_{ \sigma \in S_n } \chi_{ R_1 } ( \sigma ) \chi_{ R_2 } ( \sigma ) \chi_{ R_3 } ( \sigma ) 
\eea 
- although it can be derived from representation theory - can also be viewed as a purely combinatorial formula, where the characters are given for example by the Murnaghan-Nakayama combinatorial rule. 
From the purely combinatorial  point of view, the non-negative integer property is not manifest.

The sub-lattice interpretation of Kronecker coefficients  (Theorem \ref{theo:C2}) makes it manifest that 
they are non-negative integers. Algorithms for computing the HNFs are combinatorial operations on lists of lattice vectors. Our proof of the interpretation and of the validity of the algorithms has relied on an important input from 
representation theory (Proposition \ref{propTaQo}). An interesting question is whether lattices of ribbon graphs offer an avenue to provide a purely combinatorial understanding, with no representation theory input, for the non-negativity of $ C ( R_1 , R_2 , R_3 )$, defined by the formula \eqref{CMN} in terms of characters computable by the combinatorial Murnaghan-Nakayama rule. To give some context to this question, consider the equality 
of $n!$ with the sum of  squares of the dimensions of irreducible representations of $S_n$, which can be derived 
using representation theory. This can also be derived purely combinatorially using the Robinson-Schensted correspondence which gives a bijection between permutations  in $S_n$  and pairs of standard Young tableaux having the same shape and $n$ boxes (see for example a textbook reference \cite{FultonYoung}).

This raises some questions on the non-negative integer matrices $T_k^{(i)} $, the rectangular integer matrices in section \ref{sec:reconn} and the square matrices of Hamiltonian matrix elements in section \ref{sec:SquareIntMat}. 
The first step would be to derive formulae for the eigenvalues $T_k^{(i)} $
recovering the Murnaghan-Nakayama combinatorics of 
 these eigenvalues directly from these matrices built using the reconnection matrices $ T_k^{(i)}$. 
 The second step would be to show that the eigenvalue degeneracies are given by \eqref{CMN}, viewed as an  expression for the degeneracies in terms of the eigenvalues. Any integer matrix is known to have a Smith normal form (SNF)  which can be calculated by algorithms generalizing to those for HNFs \cite{Cohen}. In the SNF for $X$, we have $ X = U D V $, where the matrix $D$ is a diagonal matrix of singular values. The relation between these singular values and the eigenvalues of an integer matrix $X$ has been studied in \cite{Lorenzini}. Singular values in the SNF are accessible to integer-matrix algorithms while  eigenvalues enter the link between the integer matrices at hand and the Kronecker coefficients. Better understanding this link could potentially help towards a purely combinatorial proof of the interpretation and algorithms for Kronecker coefficients based on Theorem \ref{theo:C2}.

\subsection{Conjugation action and additional sub-lattices }
\label{app:conjugate}

In  section \eqref{nulllattices}, the HNF of integer matrices
 to a $C(R_1 , R_2 , R_3)^2$ dimensional sublattice of ribbon graphs
and has provided a refinement of the counting of all ribbon graphs. 
We now describe integer matrices which will lead us to a  sublattice interpretation of $C(R_1 , R_2 , R_3)$. 

In section \ref{sec:Involution} we defined a conjugation operator $S$ \eqref{inverse} which 
satisfies $S^2 = \id$.   The conjugation either maps a ribbon graph to itself $S(E_r)= E_r$, or distinct pairs $ E_s \ne E_t $ are related by $ S ( E_s ) = E_t ; S  ( E_t ) = E_s $.
We refer to the former as self-conjugate ribbon graphs and the the latter as conjugate pairs.

In order to illustrate the action of $S$ consider $n=3$.  Inversion of the permutation pairs 
representing a ribbon graph leaves the pair unchanged unless one of the permutations has a cycle of length $3$. For this $n=3$ case, all ribbon graph vectors $E_r$  are self-conjugate : inversion maps any representative pair or permutations to another pair within the same orbit. We list the orbits at $ n=3$ which involve a cycle of length $3$, to illustrate this property
\bea\label{Sconj3}
3 : &&  \quad \; \; [ [(),(1,2,3)]  \;,\;  [(),(1,3,2)] ] \crcr
7 :  && \quad \; \; [ [ (2,3), (1,2,3) ], [ (2,3), (1,3,2) ], [ (1,2), (1,2,3) ]  \crcr
&& \quad  \quad\quad [ (1,2), (1,3,2) ], [ (1,3), (1,2,3) ], [ (1,3), (1,3,2) ] ], \crcr
8 : && \quad \; \;  [ [ (1,2,3), () ], [ (1,3,2), () ] ], \crcr
9 : &&\quad \; \;      [ [ (1,2,3), (2,3) ], [ (1,2,3), (1,2) ], [ (1,2,3), (1,3) ], \crcr
&&  \quad  \quad\quad 
          [ (1,3,2), (2,3) ], [ (1,3,2), (1,2) ], [ (1,3,2), (1,3) ] ], \crcr
 10 : && \quad \; \;        [ [ (1,2,3), (1,2,3) ], [ (1,3,2), (1,3,2) ] ], \crcr
 11 : &&  \quad \; \;           [ [ (1,2,3), (1,3,2) ], [ (1,3,2), (1,2,3) ] ] ,
\eea
where the first column contains the labels (i.e 3,7,8, etc.)  of ribbon graphs of Figure \ref{fig:ribb}. As we will see shortly, this self-conjugation property can be understood using the action of $S$ on the Fourier basis of $ \cK ( n )$.

The following statement holds: 
\begin{proposition}
\label{propSQ}  
 Under the conjugation action, we have 
\bea
S ( Q^{ R_1 , R_2 , R_3 }_{\tau_1,\tau_2 } ) = Q^{ R_1 , R_2 , R_3 }_{ \tau_2,\tau_1 }
\eea
\end{proposition}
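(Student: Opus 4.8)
The plan is to compute $S(Q^{R_1,R_2,R_3}_{\tau_1,\tau_2})$ directly from the defining formula \eqref{qbasis} by applying $S$ term by term, using the explicit transformation properties of matrix elements $D^R_{ij}(\sigma)$ and Clebsch--Gordan coefficients $C^{R_1,R_2;R_3,\tau}_{i_1,i_2;i_3}$ under permutation inversion. Recall that $S(\sigma_1\otimes\sigma_2)=\sigma_1^{-1}\otimes\sigma_2^{-1}$. Since we are working in an orthonormal basis for each irrep, the representing matrices are orthogonal (real), so $D^R_{ij}(\sigma^{-1}) = D^R_{ji}(\sigma)$. Applying $S$ to \eqref{qbasis} and then relabelling the dummy summation variables $\sigma_i \to \sigma_i^{-1}$, I would rewrite
\bea
S(Q^{R_1,R_2,R_3}_{\tau_1,\tau_2}) = \kappa_{R_1,R_2}\sum_{\sigma_1,\sigma_2}\sum_{i_1,i_2,i_3,j_1,j_2} C^{R_1,R_2;R_3,\tau_1}_{i_1,i_2;i_3}C^{R_1,R_2;R_3,\tau_2}_{j_1,j_2;i_3}\, D^{R_1}_{j_1 i_1}(\sigma_1) D^{R_2}_{j_2 i_2}(\sigma_2)\,\sigma_1\otimes\sigma_2\,,
\eea
where the $D$-matrix indices have been swapped by the inversion and the change of summation variable. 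Comparing with \eqref{qbasis} (with the roles of the $i$- and $j$-index families interchanged), the right-hand side is exactly $Q^{R_1,R_2,R_3}_{\tau_2,\tau_1}$, because the pattern of Clebsch--Gordan contractions there pairs the first matrix-index set of each $D$ with the $\tau$-label and the second set is summed against the other $\tau$-label. This is the heart of the argument.

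The main obstacle is bookkeeping with the Clebsch--Gordan coefficients: one has to be careful about which index of $D^{R_a}$ is contracted with which slot of $C^{R_1,R_2;R_3,\tau_1}$ versus $C^{R_1,R_2;R_3,\tau_2}$, and to verify that after the simultaneous relabelling $\sigma_i\to\sigma_i^{-1}$ the resulting contraction genuinely reproduces the defining expression with $\tau_1\leftrightarrow\tau_2$ rather than, say, with some additional transpose or complex conjugation on the CG coefficients. Here I would invoke that we may take the Clebsch--Gordan coefficients real (as is done in \cite{PCAKron}), so no conjugation appears; the only effect of $S$ is the index transposition on the $D$-matrices, which is precisely the operation that interchanges the two multiplicity labels.

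An alternative, cleaner route is to avoid the explicit formula and instead use the matrix-unit property \eqref{qqmatrix0} together with the identity $S(AB)=S(B)S(A)$. The elements $\{Q^{R_1,R_2,R_3}_{\tau_1,\tau_2}\}$ form matrix units, so $S$, being an algebra anti-automorphism preserving each Wedderburn block (it preserves the bilinear form $\bdel_2$ and fixes the $T_k^{(i)}$, hence preserves the eigenvalue data labelling the blocks), must act within each block $(R_1,R_2,R_3)$ as an anti-automorphism of a matrix algebra. Every anti-automorphism of a full matrix algebra over $\mC$ that fixes the identity and is an involution is conjugation by transpose composed with an inner automorphism; a short computation using the pairing \eqref{orhoqq0}, which is symmetric under $(\tau_1,\tau_1')\leftrightarrow(\tau_2,\tau_2')$, pins down the inner part to be trivial, giving $S(Q^{R_1,R_2,R_3}_{\tau_1,\tau_2})=Q^{R_1,R_2,R_3}_{\tau_2,\tau_1}$. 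I would present the direct computation as the main proof and mention this structural argument as a cross-check.
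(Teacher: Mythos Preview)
Your main proof is correct and follows essentially the same route as the paper: apply $S$ to the defining sum \eqref{qbasis}, relabel $\sigma_i\to\sigma_i^{-1}$, use $D^R_{ij}(\sigma^{-1})=D^R_{ji}(\sigma)$, and then swap the dummy index families $i_1,i_2\leftrightarrow j_1,j_2$ to recognise $Q^{R_1,R_2,R_3}_{\tau_2,\tau_1}$. Your alternative structural argument via anti-automorphisms of matrix blocks is a nice cross-check not present in the paper, though as you note it requires extra work to pin down the inner part.
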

\begin{proof} Consider $Q^{ R_1 , R_2 , R_3 }_{\tau_1,\tau_2 }$
given by \eqref{qbasis}, then 
\bea 
&&
S ( Q^{R_1 , R_2 , R_3 }_{\tau_1,\tau_2} ) 
 = 
\kappa_{R_1 , R_2 }
\sum_{\s_1, \s_2 \in S_n}
\sum_{i_l, j_l}
C^{R_1 , R_2 ;  R_3 , \tau_1  }_{ i_1 , i_2 ; i_3 } C^{R_1 , R_2 ; R_3 , \tau_2  }_{ j_1 , j_2 ; i_3 } 
 D^{ R_1 }_{ i_1 j_1} ( \sigma_1  ) D^{R_2 }_{ i_2 j_2 } ( \sigma_2 )  \,  \sigma_1^{-1} \otimes \sigma_2 ^{-1} \crcr
 &&  =  \kappa_{R_1 , R_2 }
\sum_{\s_1, \s_2 \in S_n}
\sum_{i_l, j_l}
C^{R_1 , R_2 ;  R_3 , \tau_1  }_{ i_1 , i_2 ; i_3 } C^{R_1 , R_2 ; R_3 , \tau_2  }_{ j_1 , j_2 ; i_3 } 
 D^{ R_1  }_{ i_1 j_1} ( \sigma_1 ^{-1} ) D^{R_2}_{ i_2 j_2 } ( \sigma_2 ^{-1} )  \,  \sigma_1 \otimes \sigma_2 \crcr
 &&  = \kappa_{R_1 , R_2 }
\sum_{\s_1, \s_2 \in S_n}
\sum_{i_l, j_l}
C^{R_1 , R_2 ;  R_3 , \tau_1  }_{ i_1 , i_2 ; i_3 } C^{R_1 , R_2 ; R_3 , \tau_2  }_{ j_1 , j_2 ; i_3 } 
 D^{ R_1  }_{  j_1 i_1 } ( \sigma_1  ) D^{R_2}_{ j_2 i_2  } ( \sigma_2  )  \,  \sigma_1 \otimes \sigma_2 \crcr  
 && = \kappa_{R_1 , R_2 }
\sum_{\s_1, \s_2 \in S_n}
\sum_{i_l, j_l}
C^{R_1 , R_2 ;  R_3 , \tau_1  }_{ j_1 , _j2 ; i_3 } C^{R_1 , R_2 ; R_3 , \tau_2  }_{ i_1 , i_2 ; i_3 } 
 D^{ R_1  }_{  i_1 j_1  } ( \sigma_1  ) D^{R_2}_{  i_2 j_2   } ( \sigma_2  )  \,  \sigma_1 \otimes \sigma_2 \crcr  
 && = Q^{R_1 , R_2 , R_3 }_{\tau_2,\tau_1} 
\eea
We used the  fact that 
$D^R_{ ij} ( \sigma^{-1} )  = D^R_{ ji } ( \sigma ) $  and a relabelling of indices $ i_1, i_2 \leftrightarrow j_1 , j_2 $. 

\end{proof}

\noindent 
{\bf Remark} Proposition \ref{propSQ} implies that at $ n=3$, where $ C ( R_1 , R_2 , R_3 )  $ is ether $1$ or $0$, the only possible eigenvalue of $ S $ is $1$.  Considering then the action of $S$  on the geometric basis, we deduce that all the ribbon graphs must be self-conjugate. This is indeed confirmed by \eqref{Sconj3}.

 On the geometrical ribbon graph basis for $ \cK ( n )$, the action of $S$ can leave a ribbon basis element $E_r$ invariant, or it can pair the ribbon with another ribbon. Let us denote by $E_r^{(s)} $ the self-conjugate ribbons,  which stay invariant under conjugation. 
The  non-self conjugate pairs are $ ( E_r^{ (n)} , E_r^{ (\bar n) } )$. The $+1$ eigenspace of $S$ in $ \cK ( n )$ is spanned by $\{ E_r^{ (s)} , ( E_r^{(n)}  + E_r^{ (\bar n )}  )\}$. 
The $-1$ eigenspace of $S$ is spanned by $\{(E_r^{(n)}  -  E_r^{ (\bar n )}  )\}$. 
Let us denote the  vector space of ribbon graphs, which is the underlying vector space of the algebra $ \cK ( n ) $ 
by $ V^{ \Rib(n) } $. $ V^{ \Rib(n) } $ has a decomposition according to the eigenvalues of $S$ 
\bea 
V^{ \Rib(n) } = V^{ \Rib(n) }_{ S=1} \oplus V^{ \Rib(n) }_{ S=-1} 
\eea 
The $S=1$-eigenspace is the direct sum 
\bea 
V^{ \Rib(n) }_{ S=1} = V^{ \Rib ( n ) }_{ \pairs^+ }  \oplus V_{ \singlets }  
\eea
where $ V^{ \Rib ( n ) }_{ \pairs^+ }  $ is spanned by $\{ ( E_r^{(n)}  + E_r^{ (\bar n )}  ) \}$ and
$ V_{ \singlets }  $ by $\{E_r^{ (s)} \}$, 
whereas the $ S =(-1) $-eigenspace is 
\bea 
V^{ \Rib(n) }_{ S=-1} =  V^{ \Rib ( n ) }_{ \pairs^- } 
\eea
Using the Wedderburn-Artin decomposition of $ \cK ( n ) $ we also have 
\bea\label{VR1R2R3}  
V^{ \Rib(n) } = \bigoplus_{R_1 , R_2 , R_3} V^{ \Rib (n) :\; R_1 , R_2 , R_3 }  
\eea
where $ V^{ \Rib (n):\; R_1 , R_2 , R_3}   $ has dimension $ C ( R_1 , R_2 , R_3  )^2  $ and is spanned by 
the $Q^{ R_1 , R_2 , R_3  }_{ \tau_1 , \tau_2 }$ for all $\tau_1$ and $\tau_2$. 
The projection of $V^{ \Rib (n ) } $ to a fixed $ R_1 , R_2 , R_3 $ commutes with the operator $S$. 
This is evident from  Proposition \ref{propSQ}. Using this proposition, it is also obvious that the $ S=1$ subspace of $ V^{ \Rib (n) :\; R_1 , R_2 , R_3 }  $ is given by 
\bea\label{Splus1}  
 V^{ \Rib (n) :\;  R_1 , R_2 , R_3  }_{ S =1 } &=& {\rm Span } \{ Q^{ R_1 , R_2 , R_3 }_ { \tau , \tau } : 1 \le \tau \le C ( R_1 , R_2 , R_3 )   \}  \\
& \oplus&
 {\rm Span } \{ Q^{ R_1 , R_2 , R_3 }_ { \tau_1 , \tau_2  } + Q^{ R_1 , R_2 , R_3 }_ { \tau_2 , \tau_1  }  : 1 \le \tau_1 < \tau_2 \le C ( R_1 , R_2 , R_3 ) \}  
 \nonumber
\eea
and its $ S =-1$ subspace is 
\be\label{Smin1} 
V^{ \Rib (n) :\;R_1 , R_2 , R_3 } _{ S =-1 }=  {\rm Span } \{ Q^{ R_1 , R_2 , R_3 }_ { \tau_1 , \tau_2  } -  Q^{ R_1 , R_2 , R_3}_ { \tau_2 , \tau_1  }  : 1 \le \tau_1 < \tau_2 \le C ( R_1 , R_2 , R_3 ) \}
\ee
Then $ V^{ \Rib (n) :\; R_1 , R_2 , R_3 }=  V^{ \Rib (n) :\; R_1 , R_2 , R_3 } _{ S =1 }\oplus  V^{ \Rib (n) :\;R_1 , R_2 , R_3  } _{ S =-1 }$. 
Combining this with \eqref{VR1R2R3} we then have 
\be
V^{ \Rib  ( n ) } = \bigoplus_{ R_1 , R_2 , R_3 }  \left (  V^{ \Rib ( n ) : R_1 , R_2 , R_3 }_{ S =1 } \oplus V^{ \Rib ( n ) : R_1 , R_2 , R_3 }_{ S =-1 }  \right ) 
\ee
From \eqref{Smin1} we  deduce  that 
\bea 
\Dim  \left ( V^{ \Rib ( n ) : R_1 , R_2 , R_3 }_{ S =-1 }  \right )  
&=&  { C ( R_1 , R_2 , R_3) ( C ( R_1 , R_2 , R_3 ) -1 ) \over 2 }  \crcr
& = & \Dim  \left ( P^{ R_1 , R_2 , R_3 } V^{ \Rib (n ) }_{ \pairs^-} \right ) 
\eea
  with $P^{ R_1 , R_2 , R_3 }$ the projector onto $V^{ \Rib ( n ) : R_1 , R_2 , R_3 }$. Similarly from \eqref{Splus1}  we have  
\bea 
 \Dim \left ( V^{ \Rib ( n ) : R_1 , R_2 , R_3 }_{ S =+1 } \right )  & = & { C ( R_1 , R_2 , R_3 ) ( C ( R_1 , R_2 , R_3 )  + 1 ) \over 2 } \cr 
&  = &   \Dim  \left ( P^{ R_1 , R_2 , R_3 } V^{ \Rib (n ) }_{ \pairs^+}  \right )  + \Dim 
\left (  P^{ R_1 , R_2 , R_3 } V^{ \Rib (n ) }_{ \singlets} \right )   \cr 
&& 
\eea
Note that we do not have separate expressions for the two terms in the sum above 
in terms of Kronecker coefficients, since we do not expect the $P^{ R_1 , R_2 , R_3 }$ to  commute with the projection of $V^{ \Rib(n) }_{ S=1}$ into  the separate summands 
 $V^{ \Rib(n) }_{ \singlets }$ and $V^{ \Rib(n) }_{ \pairs^+ }$.

If we do the sum over $ R_1 , R_2 , R_3 $, we have 
\bea
 \Dim \left ( V^{ \Rib ( n ) }_{ S =+1 } \right ) 
 &=& 
\sum_{ R_1 , R_2 , R_3  } { C (R_1 , R_2 , R_3 ) ( C ( R_1 , R_2 , R_3 )  + 1 ) \over 2 } \crcr
&=& \Dim  \left ( V^{ \Rib ( n ) }_{ \pairs^+ }  \right )  + \Dim \left ( V_{ \singlets } \right )
\eea
and 
\bea 
\Dim \left ( V^{ \Rib ( n ) }_{ S = -1  } \right )
= \sum_{ R_1 , R_2 , R_3  } { C ( R_1 , R_2 , R_3 ) ( C (R_1 , R_2 , R_3 )  - 1 ) \over 2 } 
= \Dim  \left ( V^{ \Rib ( n ) }_{ \pairs^- }  \right )
\eea
Since 
\bea 
\Dim  \left ( V^{ \Rib ( n ) }_{ \pairs^+ }  \right ) = \Dim  \left ( V^{ \Rib ( n ) }_{ \pairs^- }  \right )
\eea
we have 
\bea\label{sumpairs+} 
\Dim  \left ( V^{ \Rib ( n ) }_{ \pairs^+ }  \right )
& =& \sum_{ R_1 , R_2 , R_3 }  { C ( R_1 , R_2 , R_3 ) ( C ( R_1 , R_2 , R_3 ) -1 ) \over 2 }  
\eea
 \bea\label{sumsinglets}  
\Dim  \left ( V^{ \Rib ( n ) }_{ \singlets }  \right ) 
&=& \sum_{ R_1 , R_2 , R_3 }   C ( R_1 , R_2 , R_3 )  
\eea 
While the sum over triples of Young diagrams with $n$ boxes of the square of Kronecker coefficients gives 
the number of ribbon graphs with $n$ edges, the sum of the Kronecker coefficients gives the number of singlet ribbon graphs. 

The sequence of sums of the Kroneckers for $ n= 1, \cdots, 10$ is 
\bea
1, 4, 11, 43, 149, 621, 2507, 11174, 49972, 237630
\eea
that coincide with the number of self-conjugate ribbon graphs for $ n =1, \cdots, 6$
\bea
1, 4, 11, 43, 149, 621. 
\eea
For $n \ge 7$ our current GAP  program for enumerating the self-conjugate ribbons is no longer very  efficient, but by the derivation we have given of \eqref{sumsinglets} these two sequences will agree.

The  projection from $ \cK (n)$  to $  V^{ \Rib (n ):R_1 , R_2 , R_3 }_{ S = 1  } $
can be done by using 
the $T_{ k}^{ (i)} $ for $ i \in \{ 1, 2, 3 \} ; k \in \{ 2 , 3 , \cdots , \widetilde k_* \}$ 
to build a rectangular matrix as in \eqref{RecInt} which projects to $ R_1  , R_2 , R_3 $ and further stacking  the matrix $ S - 1 $. This gives an integer  matrix of size $ ( 3 ( \widetilde k_*  - 1) + 1 ) |\Rib (n) | \times | \Rib (n) |   $ with null space spanning $  V^{ \Rib (n ):R_1 , R_2 , R_3 }_{ S = 1  } $.  We can also use the Hamiltonian square matrix construction \eqref{HsubRInt} along with the $ S-1$ matrix to build 
an integer  matrix of size $ 2 | \Rib (n) | \times | \Rib (n) | $ which projects to  
$  V^{ \Rib (n ):R_1 , R_2 , R_3 }_{ S = 1  } $. By replacing $ ( S - 1)  $ with $ ( S+1) $ in these constructions we can obtain the subspace 
$  V^{ \Rib (n ): R_1 , R_2 , R_3 }_{ S = -1  } $  of $ \cK(n)$ as null spaces of integer rectangular or square matrices.

 As in section \ref{nulllattices}  the HNF construction of $  V^{ \Rib (n ):R_1 , R_2 , R_3 }_{ S = \pm 1  } $ determines sub-lattices of $ \mZ^{ | \Rib (n) |  } $. 
  Thus, on one hand, we have lattice constructions for 
 \bea 
{ C ( R_1 , R_2 , R_3 ) ( C ( R_1 , R_2 , R_3 )  -1) \over 2 } 
 \eea
as the dimension of  $ V^{ \Rib (n ): R_1 , R_2 , R_3 }_{ \pairs^- }  $ and, on the other, we also have a construction of 
\bea
  { C( R_1 , R_2 , R_3 )  ( C ( R_1 , R_2 , R_3 )  +1 ) \over 2 } 
 \eea 
 as the dimension of $ V^{ \Rib (n ): R_1 , R_2 , R_3 }_{ S =1  }   $. The difference of these is the number $ C (R_1 , R_2 , R_3 )$. By choosing an injection between the smaller sub-lattice and the larger sub-lattice, we can get a constructive interpretation of $ C ( R_1 , R_2 , R_3 )$. It will be interesting to investigate if there is a canonical choice of such an injection. 
 
We summarise the outcome of the above discussion as a theorem
 \begin{theorem}
 \label{addLatt}
For every triple of Young diagrams $(R_1 , R_2 , R_3 ) $ with $n$ boxes, there are three
constructible sub-lattices of $\mZ^{|\Rib(n)|}$  of respective dimensions 
${ C ( R_1 , R_2 , R_3 ) ( C ( R_1 , R_2 , R_3 )  +1) /2 } $, 
${ C ( R_1 , R_2 , R_3 ) ( C ( R_1 , R_2 , R_3 )  -1) / 2 } $, 
and $C ( R_1 , R_2 , R_3 )$. 
 \end{theorem}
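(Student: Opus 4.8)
The plan is to build all three sub-lattices as null spaces of explicit integer matrices obtained by combining the matrices of section \ref{sec:IntegerMatricesKron} with the conjugation operator $S$ of section \ref{sec:Involution}, and then to apply the Hermite normal form machinery of section \ref{nulllattices} exactly as in the proof of Theorem \ref{theo:C2}.

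First I would record that $S$ is an integer matrix in the geometric ribbon graph basis: by the analysis in section \ref{sec:Involution} it either fixes a basis vector $E_r$ or exchanges a conjugate pair $(E_r,E_s)$, so it is a permutation matrix, and since $S^2=\id$ its eigenvalues are $\pm1$ with $\ker(S\mp1)=V^{\Rib(n)}_{S=\pm1}$; in particular $S-1$ and $S+1$ have integer entries. Fix a triple $(R_1,R_2,R_3)$ and let $X$ denote either of the integer matrices $\cL_{R_1,R_2,R_3}$ from \eqref{RecInt} or $\cH_{R_1,R_2,R_3}$ from \eqref{HsubRInt}, whose null space is the Fourier subspace $V^{\Rib(n):R_1,R_2,R_3}$ of dimension $C(R_1,R_2,R_3)^2$ (Theorem \ref{theo:C2}). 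I would then form the stacked integer matrix $X_+$ by appending the rows of $S-1$ below those of $X$, and similarly $X_-$ by appending the rows of $S+1$. Because the block projector $P^{R_1,R_2,R_3}$ onto $V^{\Rib(n):R_1,R_2,R_3}$ commutes with $S$ — this is the content of Proposition \ref{propSQ}, unpacked in \eqref{Splus1} and \eqref{Smin1} — the null space of $X_+$ equals $V^{\Rib(n):R_1,R_2,R_3}_{S=1}$, of dimension $C(R_1,R_2,R_3)(C(R_1,R_2,R_3)+1)/2$, and the null space of $X_-$ equals $V^{\Rib(n):R_1,R_2,R_3}_{S=-1}$, of dimension $C(R_1,R_2,R_3)(C(R_1,R_2,R_3)-1)/2$. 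Running an HNF algorithm on $X_+^T$ and on $X_-^T$ then produces in each case a unimodular matrix whose lower rows are an explicit integer basis of the null space; their integer spans are the first two claimed sub-lattices of $\mZ^{|\Rib(n)|}$.

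For the third sub-lattice I would compare the HNF integer bases of the two sub-lattices just constructed: write them $\{b^-_1,\dots,b^-_m\}$ with $m=C(R_1,R_2,R_3)(C(R_1,R_2,R_3)-1)/2$ and $\{b^+_1,\dots,b^+_M\}$ with $M=C(R_1,R_2,R_3)(C(R_1,R_2,R_3)+1)/2$. Since $M-m=C(R_1,R_2,R_3)$, any injection of the index set $\{1,\dots,m\}$ into $\{1,\dots,M\}$ singles out the complementary subset of $\{b^+_1,\dots,b^+_M\}$, which consists of exactly $C(R_1,R_2,R_3)$ vectors; these are linearly independent, being part of a lattice basis, so their integer span is a sub-lattice of $\mZ^{|\Rib(n)|}$ of dimension $C(R_1,R_2,R_3)$. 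This yields the last of the three sub-lattices.

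I expect the substantive point — as opposed to the routine HNF and null-space bookkeeping — to be the verification that the null space of the stacked matrix $X_\pm$ is precisely the $S=\pm1$ part of the single Wedderburn--Artin block labelled by $(R_1,R_2,R_3)$, rather than a mixture of blocks; this is exactly where Proposition \ref{propSQ}, i.e. the commutation of $P^{R_1,R_2,R_3}$ with $S$, enters, together with the explicit dimension counts \eqref{Splus1} and \eqref{Smin1}. The only genuine freedom in the argument is the choice of injection in the last step, so the dimension-$C(R_1,R_2,R_3)$ sub-lattice is canonical only up to that choice.
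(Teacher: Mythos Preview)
Your proposal is correct and follows essentially the same route as the paper: stack the integer matrix $X$ (either $\cL_{R_1,R_2,R_3}$ or $\cH_{R_1,R_2,R_3}$) with the integer matrix $S\mp 1$, use Proposition \ref{propSQ} and the dimension counts \eqref{Splus1}, \eqref{Smin1} to identify the null spaces with $V^{\Rib(n):R_1,R_2,R_3}_{S=\pm1}$, extract integer bases via HNF, and then obtain the dimension-$C(R_1,R_2,R_3)$ sub-lattice by choosing an injection from the basis of the smaller into that of the larger and taking the complement. You also correctly flag that this last sub-lattice depends on the choice of injection, which the paper notes as well.
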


As an illustration, there are two interesting cases at $ n=5$ with $C(R_1 , R_2 , R_3) =  2$, 
$(R_1 , R_2 , R_3) = ([3, 2], [3, 1, 1], $ $[3, 1, 1])$ and $
(R_1 , R_2 , R_3) = ( [3, 1, 1], [3, 1, 1],  [2, 2, 1])$,
their permutations. 
We have $( [3, 2], $ $  [3, 1, 1],    [3, 1, 1])$: 
\bea
&& 
\Dim  \left ( V^{ \Rib ( n ) : R_1 , R_2 , R_3 }  \right ) = 4 \cr 
&& 
\Dim  \left ( V^{\Rib( n ) : R_1 , R_2 , R_3 }_{ S =+1 }  \right ) = 3\cr 
&&
\Dim  \left ( V^{ \Rib ( n ) : R_1 , R_2 , R_3 }_{ S =-1 }  \right ) = 1
\eea
The same equations  hold for $([3, 1, 1], [3, 1, 1], [2, 2, 1])$.

\section{Conclusions} 
\label{ccl}

We give a summary of our main results and outline important  directions 
for future research. Section \ref{qmem} uses the link between bi-partite ribbon graphs and Belyi maps. Section \ref{Qalg} outlines quantum algorithms motivated by links between the algebra $\cK (n)$ and Kronecker coefficients. Section \ref{gens} 
describes physically motivated  generalizations of the present work based on algebras related to $\cK(n)$ which also have interesting geometric  interpretations.

\subsection{Summary}
In this paper we have developed quantum mechanics on a class of state spaces which are also algebras
(in the present case the algebras $ \cK (n)$), and have a distinguished geometrical/combinatorial  basis associated with combinatorial objects (in the case at hand bipartite ribbon graphs). 
The combinatorial objects have a description in terms of equivalence classes defined using permutations and the algebra can be realised as a subspace of a tensor product of group algebras (in this case $ \mC ( S_n) \otimes \mC ( S_n)$ - and there is a gauge equivalent formulation in terms of $ \mC(S_n)^{ \otimes 3}  $ as explained in \cite{PCAKron}.  By exploiting the algebra structure, we are able to relate the eigenvalues of Hermitian Hamiltonians on these state spaces to characters of symmetric groups, and the multiplicities to group theoretic multiplicities (in this case Kronecker coefficients). The integrality structure of the algebra when expressed in terms of the geometrical basis means that the solving the Hamiltonians  is a problem that can draw on techniques from the mathematics of integer matrices and lattices.  It follows that the square of Kronecker coefficients ($ C^2$) can be realised as the dimension of a sub-lattice in the lattice generated by ribbon graphs.  The algebra has an involution which is inherited from the inversion of permutations and commutes with the Hamiltonians considered here.  The involution is used to define sub-lattices of dimensions $ C ( C +1) /2 $ and   $ C ( C -1)/2$. Choosing an injection of the  set of   sub-lattice  basis vectors of the smaller sub-lattice, selected by the HNF construction of integer matrices we  built, into the set of sub-lattice basis vectors of the larger sub-lattice also fixed by the HNF construction, leads to a sub-lattice of dimension $C$.

\subsection{Belyi maps and quantum membrane interpretation of quantum mechanics on $ \cK ( n ) $} 
\label{qmem}

 Bipartite ribbon graphs have a rich geometrical structure related to Belyi maps and number theory. In this section, we use this connection to Belyi maps  to give an interpretation of quantum mechanical evolution in the ribbon graph quantum mechanics  in terms of membranes : the covering surfaces arising in Belyi maps appear at fixed time and can be interpreted as string worldsheets in topological strings - the quantum mechanical time is   an additional coordinate which can be viewed as part of a membrane worldvolume.

 Bipartite ribbon graphs  with $n$ edges are in 1-1 correspondence with holomorphic maps  $f$ (branched covers)  from a Riemann surface $ \Sigma_g $ to two-dimensional  Riemann sphere  with degree $n$  and  3 branch points : 
 \bea 
 f : \Sigma_g \rightarrow \mC \mP^1 
 \eea
 These branch points  can be taken to be $\{ 0 , 1, \infty \}$ (see Chapter 2 of  \cite{LandoZvonkin}). If we label the inverse images of  a generic point on the sphere as $ \{ 1, 2, \cdots , n \} $, then the branching at the three branch points are described by the three permutations $ \s_1 , \s_2 , \s_3 = ( \s_1 \s_2 )^{ -1} $. The genus $g$  of the covering surface is given by the Riemann-Hurwitz formula 
 \bea 
 ( 2g -2 ) = n - C_{ \s_1 } - C_{ \s_2 } - C_{ \s_3 } 
 \eea 
 where $ C_{ \sigma }$ is  the number of cycles of the permutation $ \sigma $.
 Branched covers  with exactly three branch points, also called Belyi maps,  have the property  that the covering 
surface $ \Sigma_g  $ as well as the covering map can be defined in terms of equations with coefficients which are algebraic numbers, complex numbers which are solutions of polynomials with integer coefficients \cite{Belyi}. Conversely any  such algebraic surface can be realised as a branched cover of  the sphere with 3 branch points.  The inverse image of the interval $ [ 0 , 1 ] $ on the Riemann sphere defines a graph embedded in the surface $ \Sigma_g$, also called a map. These maps were called Dessins d'Enfants by Grothendieck who proposed their combinatorial study as a tool to understand representations of the absolute Galois Group, an object of fundamental importance in number theory \cite{GrothendieckDessins}.  A survey of mathematical work in this area is in \cite{schneps}.  It is interesting that the conjugation operation $S$ which has played an important role in this paper (section \ref{app:conjugate}) has previously appeared in the study of ``operations on maps''. The self-conjugate graphs correspond to reflexible Belyi maps in the terminology of \cite{Jones}. The number of distinct  terms in $E_r $, when expanded as a sum in $ \mC ( S_n ) \otimes \mC ( S_n )$  is $n!$ divided by the order of the  automorphism group of the Belyi map, i.e. the number of holomorphic invertible maps $ \phi : \Sigma_g \rightarrow \Sigma_g $  obeying $ f \circ \phi = f$.

Each ribbon graph defines an element $E_r  \in \cK (n ) = \mC ( S_n ) \otimes \mC ( S_n ) $. The quantum mechanical evolution of such a state produces 
\bea 
E_r ( t ) = e^{ - i \cH t } E_r 
\eea
At generic $t$,  this is  a superposition of different basis elements $E_s \in \cK (n ) $. 
Such a superposition determines a linear combination of Belyi curves and Belyi maps. The evolution over all $t \ge 0$ determines a quantum membrane worldvolume mapping to $ S^2 \times \mR^+$ which restricts to a single Belyi map at $ t=0$ and subsequent periodic intervals, but is generically a superposition of covering surfaces mapping to $ S^2$. Belyi maps have been linked to matrix models and topological strings \cite{dMKS,Gopakumar}. The discrete spin states of a particle such as an electron (two-dimensional state space) which are the subject of the quantum mechanics of spin are being generalized in the quantum mechanics of ribbon graphs to discrete states of a two-dimensional surface. These discrete states have the rich structure of an algebra, at each $n$ the algebra $ \cK ( n )$, and they have the rich geometrical structure in terms of an  algebraic number realization as Belyi curves. The additional time direction of the quantum mechanics forms the $ 2+1$ dimensional worldvolume of a membrane generalizing the $0+1$ dimensional worldline of a particle.  It would be interesting to develop  the realisations of such quantum mechanical evolutions using worldvolume membrane actions (see for example \cite{Duff88,DHN,GKT,BTZ,Horava2008}) in a topological and non-relativistic limit.

The link to tensor models also 
leads to other quantum mechanical  systems which can be viewed as generalizations of the  systems 
discussed here. Our quantum systems have been discussed in terms of state spaces $ \cK ( n )$ for a fixed $n$. We can generalize, somewhat trivially, to the infinite dimensional state space 
\bea\label{cKinfinity}  
\cK ( \infty ) = \bigoplus_{ n = 0 } \cK ( n ) 
\eea
where $ \cK ( 0 ) $ is defined to be the one-dimensional vector space $ \mC $. We can  
use a  Hamiltonian of the form 
 $ \cH = \sum_{ n =0}^{ \infty } \cH^{ (n) } $, were $ \cH^{ (n)} $ is a Hamiltonian of the form we discussed at fixed $n$. This Hamiltonians generates time evolutions which mix ribbon graphs with a given number of edges, or Belyi maps with a given degree. Generic Hamiltonians for tensor models, when expressed in  terms of $ \cK ( \infty )$ would be expected to generate interactions which mix different values of $n$. Some recent literature solving quantum mechanical Hamiltonians for tensor models is in  \cite{KK1802,KMPT1802}.

\subsection{Quantum computing and Kronecker  coefficients} 
\label{Qalg}
In this section we describe how   quantum mechanical systems 
on ribbon graphs, hypothetically engineered  in the laboratory, can be used to 
detect the non-vanishing of Kronecker coefficients.
 It has been shown  \cite{IkMuWa-VanKron} 
that the question of deciding the positivity of  the Kronecker coefficient for a triple of Young diagrams is NP-hard. 
The question of whether a quantum computer can outperform classical computers on some chosen task of interest is the problem of quantum supremacy. For recent progress, on specific tasks of generating random number sequences, see \cite{Qsup}. A quantum mechanical system of ribbon graphs can conceivably be engineered directly  by identifying physical objects with the properties of ribbon  graphs, or perhaps more realistically for  the near future, we may consider  the idea of quantum simulation where an  experimentally  controllable quantum system, such as superconducting circuits, is used to simulate another quantum system of interest. A recent review on quantum simulators is \cite{Paraoanu}.  A physical or simulated  quantum mechanical system of ribbon graphs would allow, using the connections we have developed here between ribbon graphs and Kronecker coefficients for any specified triple of Young diagrams, the possibility of detecting non-vanishing Kronecker coefficients by observing the time evolution of ribbon graphs. 
We refer to \cite{BenGeloun:2020yau} for  further details and  scenarios
inspecting this questions.

An important role is played in our construction of integer matrices with  integer null spaces of 
ribbon graph vectors by the number $k_* (n)$ defined in section \ref{sec:CentreKnReconn}. To get precise estimates of the computational complexity of our algorithms viewed as a way to calculate Kronecker coefficients at large $n$, it is desirable to find estimates of the growth of this number with $n$ as $n $ tends to infinity. As discussed in 
\cite{KR1911,IILoss} this asymptotic behaviour is also relevant to understanding information loss in toy models of black holes arising in AdS/CFT.

\subsection{ Generalizations }\label{gens} 

Permutation equivalence classes provide a general approach to  the counting of a variety of combinatorial objects of interest in theoretical physics, e.g. Feynman diagrams \cite{FeynString,DoubCos,GLS1709,CR1804}, gauge invariants in matrix and tensor models (see a review in \cite{RamgoComb}), and frequently  these equivalence classes have an algebra  structure. The development of quantum mechanical systems where these combinatorial objects become quantum states, their associated permutation algebras are used to express quantum mechanical problems in terms of representation theoretic objects is a promising avenue for further fruitful investigations.  We expect the integrality structure of the algebras, when expressed in terms of the geometric basis, to be generic. This will allow a realisation of the representation theoretic quantities in terms of sub-lattices of the lattice generated by the combinatorial objects. An example of such combinatorial algebra studied in detail in \cite{PCAMultMat} is associated with colored necklaces having $m$ beads of one colour and $n$ beads of another color.

Another interesting direction for research  is the generalization of the
present study to real tensor invariants in particular the 
$O(N)^3$ invariants \cite{Carrozza:2015adg,Carrozza:2018ewt}. 
In this case, the ribbon graphs are not bipartite and
their counting gives the sum of Kronecker coefficients with
Young diagram restricted to even partitions \cite{Avohou:2019qrl, BenGeloun:2020lfe}.

Although our investigations of $ \cK ( n )$ were motivated by the study of correlators of tensor models in \cite{Sanjo,PCAKron} the 3-index tensor variables have not played a direct role in this paper. The space of all gauge invariants constructed from complex tensors $ \Phi_{ ijk} , \bar \Phi^{ ijk} $ is isomorphic as a vector space to $ \cK ( \infty ) $ defined in \ref{cKinfinity}. On this space the operators $T_{k}^{(i)} $ we used in this paper should be expressible in terms of differential operators. The map between permutation algebra elements analogous to $ T_k^{ (i)} $ and differential operators was given in the context of multi-matrix invariants in \cite{EHS}. The operators $T_k^{(i)} $ are also related to the cut-and-join operators considered in tensor model context in \cite{Diaz2009,Itoyama:2017wjb}.  An interesting problem is to use the connection between differential operators and the reconnection operators $ T_k^{(i)} $ to develop Hamiltonian and Lagrangian formulations of the quantum mechanical systems discussed here. The link between such Lagrangian formulations in terms of tensor variables and  possible  membrane world-volume Lagrangians connecting with the membrane interpretation based on Belyi maps (discussed in section \ref{qmem}) would be illuminating.

\begin{center} 
{\bf Acknowledgements}
\end{center} 
SR is supported by the STFC consolidated grant ST/P000754/1 `` String Theory, Gauge Theory \& Duality” and  a Visiting Professorship at the University of the Witwatersrand, funded by a Simons Foundation grant (509116)  awarded to the Mandelstam Institute for Theoretical Physics. We thank  Robert de Mello Koch, Igor Frenkel, Amihay Hanany for useful discussions  on the subject of this paper.

\section*{ Appendix}

\appendix

\renewcommand{\theequation}{\Alph{section}.\arabic{equation}}
\setcounter{equation}{0}

\section{Reconnection operators $T^{(i)}_2$ as matrices at $ n=3 $ } 
\label{app:examn3}

In this appendix, we work at $ n=3$ and give the construction of  the matrices for reconnection operators $T_2^{(i)} $, $ i = 1,2,3$. In this case there are $11$ bipartite 
ribbon graphs. The  matrix elements $ (\cM_2^{(i)})_r^s  $ of the  reconnection operators in the geometric ribbon graph basis  are $ 11 \times 11$ integer  matrices and can be used to determine the Young diagram triples with non-vanishing Kronecker coefficient. Each of these non-vanishing Kronecker coefficients is $1$ and we construct the corresponding integer vector  in the ribbon graph lattice which generates a one-dimensional sub-lattice.

\subsection{$T_2^{(i)}$ matrices}
\label{appsub:reconn}  

Note that we have provided a code to produce all entries $ (\cM^{(i)}_k )^{ s }_{ r }$, 
see Code2 in appendix \ref{app:gap}. 
For $ T_2^{(1)} , T_2^{(2) }$ and $T_2^{(3)} $,  we have the following non-negative  integer  matrices,
respectively, 
\bea 
\cM^{(1)}_2  = \left (
  \begin {array} {ccccccccccc} 0 & 0 & 0 & {\cred  1} & 0 & 0 & 0 & 0 & 0 & 0 \
& 0 \\ 0 & 0 & 0 & 0 &{\cred  1 }&{\cred  1} & 0 & 0 & 0 & 0 & 0 \\ 0 & 0 & 0 & 0 & 0 \
& 0 & {\cred  1} & 0 & 0 & 0 & 0 \\ {\cred 3 }& 0 & 0 & 0 & 0 & 0 & 0 & {\cred 3 }& 0 & 0 & 0 \
\\ 0 & {\cred 1 }& 0 & 0 & 0 & 0 & 0 & 0 & {\cred 1} & 0 & 0 \\ 0 & {\cred 2} & 0 & 0 & 0 & 0 \
& 0 & 0 &{\cred  2} & 0 & 0 \\ 0 & 0 & {\cred 3} & 0 & 0 & 0 & 0 & 0 & 0 & {\cred 3} & {\cred 3} \\ 0 \
& 0 & 0 & {\cred 2} & 0 & 0 & 0 & 0 & 0 & 0 & 0 \\ 0 & 0 & 0 & 0 & {\cred  2} & {\cred 2} & 0 \
& 0 & 0 & 0 & 0 \\ 0 & 0 & 0 & 0 & 0 & 0 & {\cred 1} & 0 & 0 & 0 & 0 \\ 0 & 0 \
& 0 & 0 & 0 & 0 & {\cred  1} & 0 & 0 & 0 & 0 \\\end {array} \right)
\eea 
\bea 
\cM^{(2)}_2= 
\left (
  \begin {array} {ccccccccccc} 0 &{\cred   1} & 0 & 0 & 0 & 0 & 0 & 0 & 0 & 0 \
& 0 \\ {\cred  3} & 0 & {\cred 3} & 0 & 0 & 0 & 0 & 0 & 0 & 0 & 0 \\ 0 & {\cred  2} & 0 & 0 & 0 \
& 0 & 0 & 0 & 0 & 0 & 0 \\ 0 & 0 & 0 & 0 & {\cred  1} & {\cred 1} & 0 & 0 & 0 & 0 & 0 \
\\ 0 & 0 & 0 & {\cred 1} & 0 & 0 &{\cred  1} & 0 & 0 & 0 & 0 \\ 0 & 0 & 0 & {\cred 2} & 0 & 0 \
&{\cred  2} & 0 & 0 & 0 & 0 \\ 0 & 0 & 0 & 0 & {\cred 2} & {\cred 2} & 0 & 0 & 0 & 0 & 0 \\ 0 \
& 0 & 0 & 0 & 0 & 0 & 0 & 0 & {\cred 1} & 0 & 0 \\ 0 & 0 & 0 & 0 & 0 & 0 & 0 \
&{\cred  3} & 0 & {\cred 3} &{\cred  3} \\ 0 & 0 & 0 & 0 & 0 & 0 & 0 & 0 & {\cred 1} & 0 & 0 \\ 0 & 0 \
& 0 & 0 & 0 & 0 & 0 & 0 &{\cred  1}  & 0 & 0 \\\end {array} \right)
\eea
\bea 
\cM^{(3)}_2= 
\left (
  \begin {array} {ccccccccccc} 0 & 0 & 0 & 0 &  {\cred 1} & 0 & 0 & 0 & 0 & 0 \
& 0 \\ 0 & 0 & 0 &  {\cred 1} & 0 & 0 &  {\cred 1}  & 0 & 0 & 0 & 0 \\ 0 & 0 & 0 & 0 & 0 \
&   {\cred 1} & 0 & 0 & 0 & 0 & 0 \\ 0 &   {\cred 1} & 0 & 0 & 0 & 0 & 0 & 0 &  {\cred 1} & 0 & 0 \
\\ {\cred  3}  & 0 & 0 & 0 & 0 & 0 & 0 & 0 & 0 & {\cred  3}   & 0 \\ 0 & 0 & {\cred  3}  & 0 & 0 & 0 \
& 0 & {\cred  3}  & 0 & 0 & {\cred  3}   \\ 0 &  {\cred 2} & 0 & 0 & 0 & 0 & 0 & 0 &  {\cred 2} & 0 & 0 \\ 0 \
& 0 & 0 & 0 & 0 &  {\cred 1} & 0 & 0 & 0 & 0 & 0 \\ 0 & 0 & 0 &  {\cred 2} & 0 & 0 &  {\cred 2} \
& 0 & 0 & 0 & 0 \\ 0 & 0 & 0 & 0 &  {\cred 2} & 0 & 0 & 0 & 0 & 0 & 0 \\ 0 & 0 \
& 0 & 0 & 0 &  {\cred 1}  & 0 & 0 & 0 & 0 & 0 \\\end {array} \right)
\eea

\

\subsection{Nullspace at $n=3$}
\label{appsub:nullspn3} 

In this section we give,  for $n=3$, the common nullspace of  all operators $(T^{(i)}_2-\chi_{R_i}(T_2)/d(R_i))$, for all $i=1,2,3$ at fixed  $R_i\vdash n$. This is the rectangular construction of section \ref{sec:reconn}.

The vector $v$ generating the null space for each 
 triple of Young diagram $(R_1,R_2,R_3)$ with non-vanishing Kronecker coefficient is : 
\begin{center}
\begin{tabular}{ c|c|c|} 
& $(R_1,R_2,R_3)$ & $v$ \\ 
\hline 
1 & 
( [1,1,1],[1,1,1], [3] ) & ( 1, -3, 2, -3, 3, 6, -6, 2, -6, 2, 2 )  \\
2 &
 ( [1,1,1],[2,1],[2,1] )   &    (-2, 0, 2, 6, 0, 0, -6, -4, 0, 2, 2 )  \\
3 &
( [1,1,1],[3],[1,1,1] )  &   ( 1, 3, 2, -3, -3, -6, -6, 2, 6, 2, 2 )  \\
4&
 ( [2,1],[1,1,1],[2,1] )   &   ( -2, 6, -4, 0, 0, 0, 0, 2, -6, 2, 2 ) \\
5 & 
 ( [2,1],[2,1],[1,1,1] )  & ( -2, 0, 2, 0, 6, -6, 0, 2, 0, -4, 2 )   \\
6 & 
( [2,1],[2,1 ],[2,1] )   & ( 1, 0, -1, 0, 0, 0, 0, -1, 0, -1, 2 ) \\
7 &
 ( [2,1],[2,1 ],[3] )  &   ( -2, 0, 2, 0, -6, 6, 0, 2, 0, -4, 2 ) \\
8 &
( [2,1],[3],[2,1] ) &  ( -2, -6, -4, 0, 0, 0, 0, 2, 6, 2, 2 )  \\
9 &
 ( [3],[1,1,1 ],[1,1,1] ) &  ( 1, -3, 2, 3, -3, -6, 6, 2, -6, 2, 2 )  \\
10 & 
( [3],[2,1],[2,1] )   &  ( -2, 0, 2, -6, 0, 0, 6, -4, 0, 2, 2 )  \\
11 & 
 ( [3],[3],[3] )  &   ( 1, 3, 2, 3, 3, 6, 6, 2, 6, 2, 2 ) \\ 
\hline
\end{tabular}
\end{center}
One recognizes that  the last vector $ (  1, 3, 2, 3, 3, 6, 6, 2, 6, 2, 2)  $ 
is the vector $\sum_{ r }  |\Orb( r) | E_r $, where $ |\Orb(r) |$
is  the orbit size of the $r$'th  ribbon graph equivalence class.   
All $C(R_1,R_2,R_3)=1$ for the triples $(R_1,R_2,R_3)$ given above.  For a each 
triple, we actually see that the null space is one dimensional  at 
$n=3$. For $n >  4$, there are $C(R_1,R_2,R_3)>1$ and therefore the nullspace
become of dimension higher than 1 as expected.

\

\section{Geometric and Fourier basis} 
\label{app:geombasis}

This appendix elaborates  on the transformation between the Fourier basis and the geometric basis of $\cK(n)$. 
We check  that the Fourier base $\{Q^{ R_1 , R_2 , R_3 }_{\tau,\tau'}\}$ expands in terms of ribbon graph base $\{E_r\}$ and vice-versa. 
We also give a proof of Proposition \ref{propTaQo}. 

\subsection{Change of basis}

\noindent{\bf Ribbon graph expansion of $Q^{ R_1 , R_2 , R_3 }_{\tau,\tau'}$.} 
Start with the base $Q^{ R_1 , R_2 , R_3 }_{\tau,\tau'}$ \eqref{qbasis}
that we re-expand using the orbit decomposition in the same way as
in \eqref{Err} as: 
 \bea
 &&
  Q^{R_1,R_2,R_3}_{\tau_1,\tau_2}  = \frac{1}{n!}
\kappa_{R,S} \crcr
&&  \times 
\sum_{r}
\sum_{a \in \Orb(r)}
\sum_{i_1,i_2,i_3, j_1,j_2}
C^{R_1,R_2 ; R_3 , \tau_1  }_{ i_1 , i_2 ; i_3 } C^{R_1,R_2 ; R_3 , \tau_2  }_{ j_1 , j_2 ; i_3 } 
 D^{ R_1 }_{ i_1 j_1} ( \sigma^{(r)}_1(a)  ) D^{R_2}_{ i_2 j_2 } ( \sigma^{(r)}_2(a) ) \, 
  \sigma^{(r)}_1 (a)\otimes  \sigma^{(r)}_2(a)
 \crcr
 && 
\eea
where $(  \sigma^{(r)}_1 (a),  \sigma^{(r)}_2(a))$ is a representative
pair in the orbit $\Orb(r)$ that defines the ribbon graph $r$. 

Therefore, we can reorganize the sum and collect for each ribbon graph base element, its
coefficient in the above expansion
\bea
Q^{R_1,R_2,R_3}_{\tau_1,\tau_2} = \kappa_{R,S} 
   \sum_{ r} 
      \Big[
\sum_{i_1,i_2,i_3, j_1,j_2}
C^{R_1,R_2 ; R_3 , \tau_1  }_{ i_1 , i_2 ; i_3 } C^{R_1,R_2 ; R_3 , \tau_2  }_{ j_1 , j_2 ; i_3 }
 D^{R_1}_{ i_1j_1} (\s^{(r)}_1)  D^{R_2}_{ i_2j_2} (\s^{(r)}_2)  \Big]  |\Orb(r)| E_r 
 \crcr
 && 
\eea
where  $D^{ R_1 }_{ i_1 j_1} ( \sigma^{(r)}_1(a)  ) D^{R_2}_{ i_2 j_2 } ( \sigma^{(r)}_2(a) )$ has been replaced with  $ D^{ R_1 }_{ i_1 j_1} ( \sigma^{(r)}_1  ) D^{R_2}_{ i_2 j_2 } ( \sigma^{(r)}_2 )$ where $( \sigma^{(r)}_1 , \sigma^{(r)}_2 )$ is any 
representative pair in $\Orb(r)$. This can be done because the coefficient in square brackets is invariant under simultaneous conjugation of $ \sigma_1 , \sigma_2$ by a permutation  $ \gamma$. 

\

\noindent{\bf Fourier expansion of $E_r$.} 
Consider the following expansion of some $E_r$ \eqref{classr}
in terms of the basis $Q^{R_1, R_2, R_3}_{ \tau_1, \tau_2} $: 
\bea\label{ErtoQ}
 E_r= 
\sum_{R_i, \tau_i} {\bf C}_{R_1,R_2, R_3} (\s_1^{(r)} , \s_2^{(r)} ) \, 
Q^{R_1, R_2, R_3}_{ \tau_1, \tau_2}  \,, 
\eea
where $ = ( \s_1^{(r)} , \s_2^{(r)})  $ form a permutation pair in the orbit $r$, and  the coefficient ${\bf C}_{R_1,R_2, R_3}  (\s_1^{(r)} , \s_2^{(r)} )$
is to  be determined. 
Use the orthogonality relation \eqref{orhoqq0} and evaluate ${\bf C}_{R_1,R_2, R_3} (\s_1^{(r)} , \s_2^{(r)} ) $: 
\bea
\bdel_2(E_r  ,Q^{R_1, R_2, R_3}_{ \tau_1, \tau_2}  )
&=& \sum_{R'_i, \tau'_i} {\bf C}_{R'_1,R'_2, R'_3} (\s_1^{(r)} , \s_2^{(r)} ) \, 
\bdel_2( Q^{R'_1, R'_2, R'_3}_{ \tau'_1, \tau'_2} , Q^{R_1, R_2, R_3}_{ \tau_1, \tau_2} ) \crcr
& =&  {\bf C}_{R_1,R_2, R_3} (\s_1^{(r)} , \s_2^{(r)} ) \kappa_{R_1,R_2}d(R_3)
\eea
On the other hand, using \eqref{qbasis}, we also compute 
\bea\label{qbasis2}
&&
\bdel_2( E_r  , Q^{R_1, R_2, R_3}_{ \tau_1, \tau_2}) = \crcr
&&
\kappa_{R_1,R_2}
\sum_{\g_1, \g_2 \in S_n}
\sum_{i_1,i_2,i_3, j_1,j_2}
C^{R_1 , R_2 ; R_3 , \tau_1  }_{ i_1 , i_2 ; i_3 } C^{R_1 , R_2 ; R_3 , \tau_2  }_{ j_1 , j_2 ; i_3 } 
 D^{ R_1 }_{ i_1 j_1} ( \g_1  ) D^{R_2 }_{ i_2 j_2 } ( \g_2 )  \,  \bdel_2( \s_1^{(r)}   \otimes \s_2^{(r)}  , \g_1 \otimes \g_2 ) \crcr
 && = 
\kappa_{R_1,R_2}
\sum_{i_1,i_2,i_3, j_1,j_2}
C^{R_1 , R_2 ; R_3 , \tau_1  }_{ i_1 , i_2 ; i_3 } C^{R_1 , R_2 ; R_3 , \tau_2  }_{ j_1 , j_2 ; i_3 } 
 D^{ R_1 }_{ i_1 j_1} ( \s_1^{(r)}   ) D^{R_2 }_{ i_2 j_2 } ( \s_2^{(r)}  )  \,
\eea
from which we conclude 
\bea
   {\bf C}_{R_1,R_2, R_3} (\s_1^{(r)} , \s_2^{(r)} )
    = \frac{1}{d(R_3)}
 \sum_{i_l,j_l} D^{R_1}_{i_1j_1}(\s_1^{(r)} )D^{R_2}_{i_2j_2}(\s_2^{(r)} )
  C^{R_1,R_2 ; R_3 , \tau_1  }_{ i_1 , i_2 ; i_3 } 
 C^{R_1,R_2 ; R_3 , \tau_2  }_{ j_1 , j_2 ; i_3 }  \,. 
\eea
One checks that $   {\bf C}_{R_1,R_2, R_3} (\s_1, \s_2)$
is invariant under diagonal conjugation 
$$  {\bf C}_{R_1,R_2, R_3} (\g \s_1 \g^{-1}, \g \s_2 \g^{-1})
 =   {\bf C}_{R_1,R_2, R_3} (\s_1, \s_2)$$ (this can be shown using the so-called 
$DDC=CD$ relation and the orthogonality of representation matrices, 
see appendix A.1 and A.2 of \cite{PCAKron}). 

An immediate consequence of these formulae is that  
\bea
Q^{R_1,R_2,R_3}_{\tau_1,\tau_2} = \kappa_{R,S} d(R_3)
   \sum_{ r}  {\bf C}_{R_1,R_2, R_3} (\s_1^{(r)}, \s_2^{(r)}) 
      |\Orb(r)| E_r \; . 
\eea
Thus the orthogonal Fourier basis elements $Q^{R_1,R_2,R_3}_{\tau_1,\tau_2}$ 
and expressible  in terms of $E_r$ and vice-versa.

\subsection{   Fourier basis as eigenvectors of   reconnection operators $T_k^{(i)}$.} 
\label{sec:QasEigTki} 

 To prove Proposition \ref{propTaQo}, we start   with some preliminary  observations about the group algebra of the symmetric group.

Let $\mC(S_n)$ the group algebra of $S_n$. An inner product on the group 
algebra is defined by specifying on basis elements $ \sigma , \tau \in S_n$, $ \delta ( \sigma ; \tau ) = \delta ( \sigma \tau^{-1} ) $. Consider the Fourier basis set $ Q^R_{ ij}    \in \mC ( S_n)$ 
\be\label{qbbasis}
Q^R_{ ij}   =   { \kappa_R \over n! } \sum_{ \sigma \in S_n } D^R_{ ij} ( \sigma ) \sigma  \;, 
\qquad 
\kappa_R^2 = n! d(R)
\ee
where $D^R_{ij} ( \sigma )$ are matrix elements of $\sigma $ in an orthonormal basis for 
the irreducible representation $R$. $\kappa_R$ is fixed such that  $\delta(Q^{ R}_{ i j };Q^{ R'}_{ i' j' }) = \delta_{RR'}\delta_{ii'}\delta_{jj'}$ making $\{Q^R_{ ij}  \}$  an orthonormal basis of $ \mC ( S_n )$.  Furthermore, these elements also obey
\be \label{tauq}
\tau \,  Q^R_{ ij} = 
\sum_{l} D^R_{ li} ( \tau ) \,  Q^R_{ l j} \,, \qquad  
 Q^R_{ ij} \, \tau  =    \sum_{l} Q^R_{ i  l} ~ D^R_{ j l } ( \tau )  
\ee
The following statement holds: 
\begin{lemma}
\label{lemTQ}
\be
T_k  Q^R_{ ij} = { \chi_R(T_k) \over d(R)  }   \, Q^R_{ i j}
\ee
\end{lemma}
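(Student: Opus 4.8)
The plan is to prove Lemma \ref{lemTQ} by a direct computation using the two properties of the $Q^R_{ij}$ basis elements already recorded in \eqref{tauq}, together with the fact that $T_k$ is a central element of $\mC(S_n)$.

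\medskip

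\noindent\textbf{Step 1: reduce to a sum of left-multiplications.}
By definition \eqref{tk}, $T_k = \sum_{\sigma \in \cC_k} \sigma$, so
\bea
T_k \, Q^R_{ij} = \sum_{\sigma \in \cC_k} \sigma \, Q^R_{ij}
 = \sum_{\sigma \in \cC_k} \sum_{l} D^R_{li}(\sigma) \, Q^R_{lj}
 = \sum_{l} \Big( \sum_{\sigma \in \cC_k} D^R_{li}(\sigma) \Big) Q^R_{lj} \, ,
\eea
using the first relation in \eqref{tauq}. So everything comes down to evaluating the matrix $M_{li} := \sum_{\sigma \in \cC_k} D^R_{li}(\sigma)$.

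\medskip

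\noindent\textbf{Step 2: identify $M$ as a scalar via Schur's lemma.}
The key point is that $M = D^R(T_k)$ is the image of the \emph{central} element $T_k$ under the irreducible representation $R$. Since $T_k$ commutes with every group element, $D^R(\tau) M = M D^R(\tau)$ for all $\tau \in S_n$, and Schur's lemma forces $M = \lambda_R \, \mathbb{1}$ for some scalar $\lambda_R$. Taking the trace, $\lambda_R \, d(R) = \operatorname{Tr} D^R(T_k) = \chi_R(T_k)$, hence $\lambda_R = \chi_R(T_k)/d(R)$. Substituting $M_{li} = \lambda_R \delta_{li}$ back into Step 1 gives $T_k \, Q^R_{ij} = \lambda_R Q^R_{ij}$, which is the claim.

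\medskip

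\noindent\textbf{Alternative / consistency route.}
If one prefers to avoid invoking Schur's lemma abstractly, the same conclusion follows by noting that $T_k$ lies in the centre of $\mC(S_n)$, so it acts on the Wedderburn block of $R$ as a scalar; one can also check directly that $T_k Q^R_{ij}$ must again be a central-type combination using both relations in \eqref{tauq} (left and right multiplication by $\tau$ act compatibly), which pins the coefficient of $Q^R_{lj}$ to be $l$-independent after contraction. Either way the scalar is fixed to $\chi_R(T_k)/d(R)$ by taking a trace.

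\medskip

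I do not expect a genuine obstacle here: this lemma is essentially the statement that central elements act as normalized characters on matrix units, which is standard. The only mild care needed is the bookkeeping of which index ($i$ versus $j$) the left-multiplication relation \eqref{tauq} acts on, so that the contraction in Step 1 is set up correctly; once $M = D^R(T_k)$ is recognized as central, the rest is immediate. This lemma will then feed into the proof of proposition \ref{propTaQo} for the operators $T_k^{(i)}$ on $\cK(n)$ by applying it in each tensor factor.
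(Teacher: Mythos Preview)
Your proof is correct and follows essentially the same route as the paper: expand $T_k Q^R_{ij}$ via the left-multiplication relation \eqref{tauq}, recognize $\sum_{\sigma\in\cC_k} D^R_{li}(\sigma) = D^R_{li}(T_k)$ as the image of a central element, apply Schur's lemma to get $D^R(T_k)=\alpha\,\mathbb{1}$, and fix $\alpha=\chi_R(T_k)/d(R)$ by tracing. The paper's argument is line-for-line the same, so there is nothing to add.
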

\begin{proof}   We let act $T_k$ on $ Q^R_{ ij} $ 
and using \eqref{tauq} we write: 
\bea\label{TQ}
T_k  Q^R_{ ij} = \sum_{\s \in \cC_k} \s  Q^R_{ ij}   = 
\sum_{l} \big( \sum_{\s \in \cC_k}  D^R_{ li} ( \s ) \big) \,  Q^R_{ l j}
\eea
The sum $\sum_{\s \in \cC_k}  D^R_{ li} ( \s ) $
may be also written $ D^R_{ li} ( T_k)$. As $T_k$ is central and commute with 
all elements, $D^R_{ li} ( T_k) = \alpha \delta_{li}$ a constant diagonal matrix by Schur lemma. We also have 
$\sum_{i }D^R_{ ii} ( T_k) =  \chi_R(T_k) =  \alpha d(R)$, that yields
$\alpha =   \chi_R(T_k)/d(R)$.  Then back to our previous expression \eqref{TQ}
\bea
T_k  Q^R_{ ij} = 
\sum_{l} \big( { \chi_R(T_k) \over d(R)  } \delta_{ li} \big) \,  Q^R_{ l j}
 ={ \chi_R(T_k) \over d(R)  }   \, Q^R_{ i j}
\eea
which proves the lemma. 
 \end{proof}

Thus $ Q^R_{ ij} $ is an eigenvector of $T_k  $ with eigenvalue $ \chi_R(T_k)/d(R)$. 
 We get back to our main concern, namely the action of $T_k^{(i)}$ on 
 $Q^{R_1, R_2, R_3}_{\tau_1 , \tau_2}$. 
 
 \

\noindent{\bf Proof of Proposition \ref{propTaQo}.}
We want to prove that 
 $ Q^{R_1, R_2, R_3}_{\tau_1 , \tau_2}$ define  eigenvectors of $T_k^{(i)} $. 
 $Q^{R_1, R_2, R_3}_{ \tau_1 , \tau_2 }  $  \eqref{qbasis} can be written in terms of 
 the Fourier basis set for $ \mC ( S_n)$ as  
 \be
Q^{R_1, R_2, R_3}_{ \tau_1 , \tau_2 }    = 
\kappa'_{R_1,R_2}
\sum_{i_l, j_l}
C^{R_1, R_2; R_3 , \tau_1  }_{ i_1 , i_2 ; i_3 } C^{R_1, R_2; R_3 , \tau_2  }_{ j_1 , j_2 ; i_3 } 
 Q^{ R_1 }_{ i_1 j_1} \otimes  Q^{ R_2 }_{ i_2 j_2} \,, 
 \ee
 where $\kappa'_{R,S}$  is a normalization factor
 \bea 
 \kappa'_{ R , S } = \sqrt { d(R) d(S) \over n!  } \, . 
 \eea 
 Then, it becomes obvious that, by Lemma \ref{lemTQ},  \eqref{t1Qo}
 and  \eqref{t2Qo} hold as $T_k^{(1)} $ and $T_k^{(2)}$ are 
 defined by the actions on $T_k$ of the left or right factors of $ \mC ( S_n) \otimes \mC ( S_n)$.  The third relation requires a bit more work. 
 Use \eqref{tauq} to rewrite: 
 \bea
  && T_k^{(3)} Q^{R_1, R_2, R_3}_{ \tau_1 , \tau_2 }   =  
  \kappa'_{R_1, R_2}
     \sum_{\s \in \cC_k } 
\sum_{i_l, j_l}
C^{R_1, R_2;  R_3, \tau_1  }_{ i_1 , i_2 ; i_3 } C^{R_1, R_2;  R_3, \tau_2  }_{ j_1 , j_2 ; i_3 } 
 \s   Q^{ R_1 }_{ i_1 j_1} \otimes  \s  Q^{ R_2 }_{ i_2 j_2}  \crcr
 & & =  
  \kappa'_{R_1, R_2}
     \sum_{\s \in \cC_k } 
\sum_{\s_1, \s_2 \in S_n}
\sum_{i_l, j_l}
C^{R_1, R_2;  R_3, \tau_1  }_{ i_1 , i_2 ; i_3 } C^{R_1, R_2;  R_3,\tau_2  }_{ j_1 , j_2 ; i_3 } 
 \sum_{m_1,m_2 } D^{R_1}_{m_1i_1}(\s)   Q^{ R_1 }_{ m_1 j_1} \otimes D^{R_2}_{m_2i_2}(\s)   Q^{ R_2 }_{ m_2 j_2}  \, . 
 \crcr
 &&
   \eea
We use the identity  
\be
\sum_{j_1,j_2}
D^{R_1}_{i_1 j_1}(\g)D^{R_2}_{i_2 j_2}(\g)C^{R_1,R_2; \, R_3,\,\tau}_{\, j_1,j_2; \, j_3}
 =\sum_{i_3 }  C^{R_1,R_2;\, R_3,\,\tau}_{\, i_1,i_2; \, i_3} \, D^{R_3}_{i_3 j_3}(\g)  
\label{ddc=cd} 
\ee
which holds for any  $\g \in S_n$ and expresses the fact that the Clebsch-Gordan coefficients intertwine the action of $ \g $ in $ R_1 \otimes R_2$ with the action in $ R_3$. 
We re-express the above as 
 \bea
 T_k^{(3)}   Q^{R_1, R_2, R_3}_ {\tau_1, \tau_2} &=& 
  \kappa'_{R_1, R_2}
     \sum_{\s \in \cC_k } 
\sum_{\s_1, \s_2 \in S_n}
\sum_{m_l, i_3,  j_l}
( \sum_{i_1,i_2 } 
D^{R_1}_{m_1i_1}(\s)D^{R_2}_{m_2i_2}(\s) 
C^{R_1, R_2; R_3 ,\tau_1  }_{ i_1 , i_2 ; i_3 }  ) 
\crcr
& \times & 
C^{R_1, R_2; R_3 , \tau_2  }_{ j_1 , j_2 ; i_3 } 
   Q^{ R_1 }_{ m_1 j_1} \otimes   Q^{ R_2 }_{ m_2 j_2} 
   \crcr
 &=& 
  \kappa'_{R_1, R_2}
     \sum_{\s \in \cC_k } 
\sum_{\s_1, \s_2 \in S_n}
\sum_{m_l, i_3,  j_l}
( \sum_{m_3} 
C^{R_1, R_2; R_3 , \tau_1  }_{ m_1 , m_2 ; m_3 }  
D^{R_3}_{m_3 i_3}(\s)  ) 
\crcr
& \times & 
C^{R_1, R_2; R_3 ,\tau_2  }_{ j_1 , j_2 ; i_3 } 
   Q^{ R_1 }_{ m_1 j_1} \otimes   Q^{ R_2 }_{ m_2 j_2} \crcr
& = & 
  \kappa'_{R_1, R_2}
\sum_{m_l, i_3,  j_l}
( \sum_{m_3} 
C^{R_1, R_2; R_3 , \tau_1  }_{ m_1 , m_2 ; m_3 }  
D^{R_3}_{m_3 i_3}( T_k )  ) 
\crcr
& \times & 
C^{R_1, R_2; R_3 , \tau_2  }_{ j_1 , j_2 ; i_3 } 
   Q^{ R_1 }_{ m_1 j_1} \otimes   Q^{ R_2 }_{ m_2 j_2} \crcr
& = & 
  \kappa'_{R_1, R_2}
\sum_{\s_1, \s_2 \in S_n}
\sum_{m_l, i_3,  j_l}
( \sum_{m_3} 
C^{R_1, R_2; R_3 ,\tau_1  }_{ m_1 , m_2 ; m_3 }  
{\chi_{R_2}( T_k ) \over d(R_3) } \delta_{m_3 i_3}  ) 
\crcr
& \times & 
C^{R_1, R_2; R_3 ,\tau_2  }_{ j_1 , j_2 ; i_3 } 
   Q^{ R_1 }_{ m_1 j_1} \otimes   Q^{ R_2 }_{ m_2 j_2}  \crcr
   &=& 
   {\chi_{R_3}( T_k ) \over d(R_3) }  
   \Big[
  \kappa'_{R_1, R_2}
\sum_{m_l, i_3,  j_l}
C^{R_1, R_2; R_3 , \tau_1  }_{ m_1 , m_2 ; i_3 }  
C^{R_1, R_2; R_3 ,\tau_2  }_{ j_1 , j_2 ; i_3 } 
   Q^{ R_1 }_{ m_1 j_1} \otimes   Q^{ R_2 }_{ m_2 j_2}\Big] 
 \eea
 where, at an intermediate step, we use again $D^{R_3}_{m_3 i_3}( T_k )
 = \alpha \delta_ {m_3 i_3}$ (with  $\alpha$ worked
 out in Lemma \ref{TQ}). This ends the proof the proposition. 
 
\qed
 
\section{GAP codes}
\label{app:gap}

In this appendix, we give several GAP functions that lead to the calculation of common nullspace
of the operators $T_2$  and $T_3$ at  rank 3  and for arbitrary $n$.  This code is computed with Sage
calling the GAP package. Hence the \verb+%%gap+
appearing at the beginning of each function.  Such command could be replaced by a single \verb+%gap+ 
depending on the environment. 
The comments, or  lines starting by \verb+#+, inside the code are self-explanatory and help to understand of each part of the 
current function. 

For $n \le 14$, $kmax \le 3$, here is the overall strategy of the calculation: 
\begin{enumerate}
\item generate the set of ribbon graphs (denoted as $ \Rib(n)$ in the bulk of the paper)  as  list 
with \verb+RibbSetFunction(n)+;  their number is $\ell$ (this is $|\Rib(n)| $  in the bulk of paper); 

\item construct  $T_2$ and $T_3$, and their different action
$T_2^{(i)}$ and  $T_3^{(i)}$, $i=1,2,3$ acting on different slots 
of pairs of permutation representing ribbon graphs ; 

\item calculate the number of time that a ribbon graph $b$ appears in the expansion of $T_2^{(i)} a $ or  $T_3^{(i)}a $, for all ribbon graph $a$; 

\item generate the $\ell \times \ell $-matrix $(L^{(i)}_l)_{ab}$, $l=2,3$, $i=1,2,3$, of all $T_2^{(i)}$ and  $T_3^{(i)}$; 

\item introduce the list of normalized characters $\wchi_R(T_p) := \chi_R(T_p)/d(R)$ using the formulae from \cite{Lasalle}   for $p:=2,3$; 

\item then solve the nullspace of the transpose of the stack of matrices $M^{(i)}_\ell (R) = L^{(i)}_\ell - \wchi_R(T_p) * Id_\ell$, $Id_\ell$ being the  $\ell \times \ell $-identity matrix. 
The dimension of this space is $C(R,S,T)^2$.

\item Alternatively, we generate the sequence of prime numbers
$a_{i,k}$  \eqref{aikgen} used in the construction of the Hamiltonian 
$\cH := \sum_{k=2}^{kmax} \sum_{i=1}^3 a_{i,k}T_k^{(i)}$ as
a matrix sum; and solve the nullspace of the sum of matrices
\be  
M^{(1)}_2(R) + a_{1} M^{(1)}_3(R)
+ a_{2}(M^{(2)}_2(S) +a_{1} M^{(2)}_3(S)) 
+ a_{3}(M^{(3)}_2(T) +a_{1}  M^{(3)}_3 (T))\,. 
\ee
\end{enumerate}

We compute the stack of matrices  $M^{(i)}_\ell (R) $ in Gap and solve for its  null space. 
There is a corresponding equation for $\cH$ that we also put in comments.

\

\noindent{\bf Code1: Generating all ribbons with $n$-edges.} The function
\verb+RibbSetFunction(n)+  returns the list of
bipartite ribbon graphs made $n$ ribbon edges and at most $n$ black vertices, 
 and at most $n$ white edges. 
We use the PCA formulation where 
each ribbon graph is represented by its equivalent class,  namely an orbit under diagonal 
$S_n$ group action on a pair of $S_n$ permutations $(\s_1,\s_2)$. 

{\small
\begin{verbatim} 
 %%gap
#-------------------------------------------------------------------------
# Function returning an ordered list of ribbon graphs - each ribbon 
# graph represented as a set of permutations within
# an Sn orbit by diagonal conjugation of pairs of permutations

RibbSetFunction := function( n )
    local G, Pairs, Ribb, RibbSets, a, b;
    G := SymmetricGroup(n);
    Pairs :=[];

    for a in G do
        for b in G do
            Add (Pairs, [a, b]);
        od;
    od;
    
    ## In Gap, group G action on list and groups is always by conjugation 
    # Ribb list of ribbons as G orbits on pairs (tau_1, tau_2)
    # OnPairs option of function Orbit means G acts diagonally on pairs
    # (g tau_1 g^(-1), g tau_2 g^(-1))
    
    Ribb := Orbits (G, Pairs, OnPairs);

    # RibbSets is now list of sets of pairs within an orbit, the order within
    # the set  does not matter
    RibbSets := [] ;
    for a in [ 1 .. Length ( Ribb ) ] do
        Add ( RibbSets , Set ( Ribb[a] ) ) ;
    od ;
    return RibbSets;
end;
\end{verbatim}
} 

\noindent{\bf Code2: Constructing $T$-operators.}
We   contruct the operators $T^{(i)}_l$. 

{\small
\begin{verbatim} 
%%gap
#--------------------------------------------------------------------------------
# Given a number n, and kmax 
# returns the (kmax -1)x3 arrays T( i )_{ a, b } - of size |RibSetLoc| * |RibSetLoc|.  
# at fixed i and p fixed this is the matrix of the action of T^(p)_(i) ; 
#  |RibSetLoc | is the size of the set of ribbons generated by RibbSetFunction(n); 
# for any a,  a ribbon graph orbit RibbSetLoc[a] is represented by RibSetLoc[a][1]. 
# Depending on  i,  we operate on the first permutation, the second or bot
# in RibSetLoc[a][1] and return b, namely the position in RibSetLoc of the outcome; 
# RibbSetLoc[a][1] is the representative perm of the ribbon RibbSetLoc[a]; 
# RibbSetLoc[a][1][1] is first projection of the representative RibbSetLoc[a][1];
# RibbSetLoc[a][1][2] is second projection of the representative RibbSetLoc[a][1];
# Position ( list , obj ) returns the position of the first occurrence of obj in list.
# Positions ( list , obj ) returns the number of occurrences of obj in list; 
#---------------------------------------------------------------------------------------------------------------------------------

ArrayTi := function (n, kmax)
    
    local G, a, b, i, j, l, s, Cic, cc, RibSetLoc,  L, L1, L2, L3, pos1, pos2, pos3 ;
    
    G := SymmetricGroup(n); 
    L := []; 
    Cic :=[];    

    # Construction of ribbon graphs
    RibSetLoc := RibbSetFunction(n);
    l := Length ( RibSetLoc );

       
    for i in [ 2 .. kmax] do 
        # Construction of Ti 
        Add ( Cic, Orbit (G, CycleFromList([1 .. i]), OnPoints) );
        
        # Construction of the k-1 lists of listsLpt; 
        # Lpt[i-1] is a list of 3 matrices Lpt[i-1][p] initialized at 0; 
        # Each Lpt[i-1][p] defines the operator Ti^p with action of Ti in the slot p
        Add(L, []); 
        for j in [1 .. 3] do
            Add( L[i-1], NullMat ( l , l ) ) ;; 
        od; 
    od; 
    
    for a in [ 1 .. l ] do 
        for i in [2 .. kmax] do
            # empty the Li
            L1 := []; 
            L2 := []; 
            L3 := []; 
            cc := Cic[i-1];
            for s in cc do
                pos1 := Position ( RibSetLoc , Set ( Orbit ( G , [ s *RibSetLoc [a][1][1] , 
                					RibSetLoc[a ][1][2] ] , OnPairs ) ) ) ;
                pos2 := Position ( RibSetLoc , Set ( Orbit ( G , [ RibSetLoc [a][1][1] , 
                					s *RibSetLoc[a ][1][2] ] , OnPairs ) ) ) ; 
                pos3 := Position ( RibSetLoc , Set ( Orbit ( G , [ s *RibSetLoc [a][1][1] , 
                					s *RibSetLoc[a ][1][2] ] , OnPairs ) ) ) ; 
                Add( L1 , pos1);
                Add( L2 , pos2);
                Add( L3 , pos3);
            od;
            for b in [ 1 .. l ] do 
                L[i-1][1][ b, a ] := Length ( Positions ( L1 , b ) ) ; 
                L[i-1][2][ b, a ] := Length ( Positions ( L2 , b ) ) ; 
                L[i-1][3][ b, a ] := Length ( Positions ( L3 , b ) ) ; 
            od;
        od; 
    od;
    return L ;
end ;
\end{verbatim}
}

\noindent{\bf Code3: Characters.}
The following function returns the table list of characters according to Lassalle formulae
\cite{Lasalle}. 

{\small
\begin{verbatim} 
%%gap
# Returns the list of characters of T2 (on the left) and T3 (on the right) 
# via Lassalle formulae. These characters are the eigenvalues of T's. 

CharactersEigenvalues_of_Top := function( m )
    local i, j, k, p, L, L2, L3, sum2, sum3;
    p := Partitions( m ) ;
    sum2 := 0;
    sum3 := 0;
    L2 := [];
    L3 := [];
    L := [];
    for i in [ 1..Length(p) ] do
        sum2 := 0;
        sum3 := 0;
        for j in [1..Length(p[i])] do
            for k in [1..p[i][j]] do
                sum2 := sum2 - j +  k ;
                sum3 := sum3 +(k-j)*(k-j);
            od;
        od;
        sum3:= sum3 - Factorial( m )/( Factorial( m-2 )*2 ) ;
        Add( L2, sum2 );
        Add( L3, sum3 );
        Add( L, [ L2[i], L3[i] ] );
    od;
    return L;
end;
\end{verbatim}
}

\noindent{\bf Code4: Primes.}
The two functions returns either a couple $(a_1,a_2)$ 
or a triple $(a_1,a_2, a_3)$ of prime numbers  are used in the construction
of the total Hamiltonian and insure that it cannot vanishes
outside of the required values. This follows the sufficient conditions explained in 
section \ref{primes}.

{\small
\begin{verbatim} 
%%gap 
# The code produces a couple that makes the QM Hamiltonian not vanishing 
# unless the triples (R1,R2,R3) = (R1',R2',R3')

CouplePrime := function(m) 
    local p1, L; 
    p1 := NextPrimeInt( m*(m-1) );
    L := []; 
    Add(L, p1); 
    Add(L, NextPrimeInt( (p1+1)*m*(m-1))); 
    return  L ;
end; 

# The code produces a tripe that makes the QM Hamiltonian not vanishing 
# unless the triples (R1,R2,R3) = (R1',R2',R3')

CouplePrime2 := function(n) 
    local p1, p2, p3, L; 
    p1 := NextPrimeInt( n*(n-1) );
    p2 := NextPrimeInt( Int( (n/3)*(n-1)*(3 + 2*p1*(n-2))) )  ; 
    p3 := NextPrimeInt( Int( (n/3)*(n-1)*(1+p2)*( 3 + 2*p1*(n-2) )) ) ; 
    L := []; 
    Add(L, p1); 
    Add(L, p2 ); 
    Add(L, p3); 
    return  L ;
end; 

\end{verbatim}
}

\noindent{\bf Code5: Matrices and null spaces.}
We are now in position to address the nullspace  of the multiple actions of  $T_2$ and $T_3$. 
The following code constructs the stack of matrices (note that, in comments, we also
give instructions to construct the total Hamiltonian calling prime numbers)  made of 3 
(resp. 6) matrices determined by three Young Diagram $R,S,T$, given $kmax = 2$ (resp. $kmax = 3$). 
After the construct, it returns the null space of the resulting matrix. 

{\small
\begin{verbatim} 
%%gap
MatForNullVectors := function(m,  kmax, R, S,T)
    local a, l, chi , M1, M2, M3, M5, M4, M6, Arr, Id;
    
    # for the total Hamiltonian version uncomment the following
    #local a, l, chi , M1, M2, M3, M5, M4, M6, Arr, Id, c,  c2, Ham;
    
    l:=Length(RibbSetFunction(m)); # cardinality of the ribbon set
    chi := CharactersEigenvalues_of_Top(m);
    
    # for the Hamiltonian uncomment the following
    #c := CouplePrime(m);
    #c2 := CouplePrime2(m); 
    
    #Constructing the matrices
    Arr := ArrayTi(m , kmax  ); 
    
    if kmax > 3 then
        Print ("kmax >3"); 
        return 0; 
    fi; 
    
    if kmax = 2 then 
        M1 := Arr[1][1] -  chi[R][1]  * IdentityMat ( l );
        M3 := Arr[1][2] - chi[S][1]  * IdentityMat ( l );
        Append(M1, M3);
        
        # For the Hamiltonian uncomment the following 
        # Ham := M1 + c[1]*M3 ; 
        
        M5 :=  Arr[1][3] - chi[T][1]  * IdentityMat ( l ); 
        Append(M1, M5);
        
        # For the Hamiltonian uncomment the following 
        #Ham := Ham + c[2]*M5 ;
    fi; 
    
    
    if kmax = 3 then 
        M1 := Arr[1][1] -  chi[R][1]  * IdentityMat ( l );
        M2 := Arr[2][1]  - chi[R][2] * IdentityMat ( l );
        Append(M1, M2);
        
        #  For the Hamiltonian uncomment the following 
        # Ham := M1 + c2[1]*M2 ;
        
        M3 :=  Arr[1][2] - chi[S][1]  * IdentityMat ( l );
        Append(M1, M3);
        M4 :=  Arr[2][2] - chi[S][2]  * IdentityMat ( l );
        Append(M1, M4);
        
        #  For the  Hamiltonian uncomment the following 
        # Ham := Ham +  c2[2]*(M3 + c2[1]*M4) ;
         
        M5 :=  Arr[1][3] - chi[T][1]  * IdentityMat ( l );
        Append(M1, M5);
        M6 :=  Arr[2][3] - chi[T][2]  * IdentityMat ( l );
        Append(M1, M6);
        
        #  For the  Hamiltonian uncomment the following 
        # Ham := Ham + c2[3]*(M5 + c2[1]*M6) ; 
    fi; 
    
    
    return NullspaceIntMat(TransposedMat (M1) );
    #  For the total Hamiltonian uncomment the following 
    #return NullspaceMat(TransposedMat ( Ham) ) ;
end;
\end{verbatim}
}

\end{document}